\newcommand{\newref}[2][]{\hyperref[#2]{#1~\ref*{#2}}}
\renewcommand{\eqref}[1]{\hyperref[#1]{(\ref*{#1})}}
\numberwithin{equation}{section}
\newcommand{\sref}[1]{\newref[Section]{#1}}
\newcommand{\tref}[1]{\newref[Theorem]{#1}}
\newcommand{\lref}[1]{\newref[Lemma]{#1}}
\newcommand{\cref}[1]{\newref[Corollary]{#1}}
\newcommand{\eref}[1]{\newref[Equation]{#1}}
\newcommand{\clref}[1]{\newref[Claim]{#1}}
\newcommand{\fcref}[1]{\newref[Fact]{#1}}
\theoremstyle{plain}
\newtheorem{theorem}{Theorem}[section]
\newtheorem{Thm}{Theorem}[section]
\newtheorem{lemma}[theorem]{Lemma}
\newtheorem{Lem}[theorem]{Lemma}
\newtheorem{claim}[theorem]{Claim}
\newtheorem{corollary}[theorem]{Corollary}
\newtheorem{Conj}[theorem]{Conjecture}
\newtheorem{definition}[theorem]{Definition}
\newtheorem{Def}[theorem]{Definition}
\newtheorem{fact}[theorem]{Fact}
\theoremstyle{definition}
\DeclareMathOperator*{\pr}{\mathsf{Pr}}
\DeclareMathOperator*{\E}{\mathsf{Pr}}
\DeclareMathOperator*{\ex}{\mathsf{E}}
\newcommand{\rgta}{\rightarrow}
\newcommand{\lfta}{\leftarrow}
\newcommand{\bias}{\mathsf{Bias}}
\newcommand{\cnf}{\ensuremath{\mathsf{CNF}}}
\newcommand{\dnf}{\ensuremath{\mathsf{DNF}}}
\newcommand{\ccnf}{\#\mathsf{\cnf}}
\newcommand{\cdnf}{\#\mathsf{DNF}}
\newcommand{\NP}{\ensuremath{\mathsf{NP}}}
\newcommand{\Ptime}{\ensuremath{\mathsf{P}}}
\newcommand{\eat}[1]{}
\newcommand{\etal}{{\em et al.}}
\newcommand{\hh}{\mathcal{H}}
\newcommand{\eps}{\epsilon}
\newcommand{\note}[1]{\marginpar{\tiny *note in TeX*}}
\newcommand{\ignore}[1]{}
\newcommand{\mc}[1]{\mathcal{#1}}
\newcommand{\calD}{{\cal D}}
\renewcommand{\phi}{\varphi}
\renewcommand{\epsilon}{\varepsilon}
\newcommand{\R}{\mathbb{R}}
\newcommand{\poly}{\mathrm{poly}}
\newcommand{\zo}{\{0,1\}}
\newcommand{\prg}{\ensuremath{\mathsf{PRG}}}
\newcommand{\dt}[1]{\ensuremath{\mathsf{DT}(#1)}}
\newcommand{\algo}[2]{
\bigskip
\fbox{
\parbox{6in}{
{ #1}\\
{\tt #2}
\bigskip
}}
\bigskip
}
\title{DNF Sparsification and a Faster Deterministic Counting Algorithm}
\author{Parikshit Gopalan\\
Microsoft Research\\
Silicon Valley\\
{\tt parik@microsoft.com}\\
\and
Raghu Meka\thanks{Work done while an intern at Microsft Research,
  Silicon Valley.}\\
Institute for Advanced Study\\
Princeton\\
{\tt raghu@ias.edu}\\
\and
Omer Reingold\\
Microsoft Research\\
Silicon Valley\\
{\tt Omer.Reingold@microsoft.com}}
\date{}
\begin{document}
\maketitle

\begin{abstract}
Given a \dnf\ formula $f$ on $n$ variables, the two natural size measures
are the number of terms or size $s(f)$, and the maximum width of a term
$w(f)$. It is folklore that short \dnf \ formulas can be made
narrow. We prove a converse, showing that narrow formulas can be 
sparsified. More precisely, any width $w$
\dnf\ irrespective of its size can be $\epsilon$-approximated by a
width $w$ \dnf\ with at most $(w\log(1/\epsilon))^{O(w)}$ terms.

We combine our sparsification result with the work
of Luby and Velikovic \cite{LubyV91,LubyV96} to give a faster
deterministic algorithm for approximately counting the number of
satisfying solutions to a \dnf. Given a formula on $n$ variables with
$\poly(n)$ terms, we give a deterministic $n^{\tilde{O}(\log
  \log(n))}$ time algorithm that computes an additive $\epsilon$
approximation to the fraction of satisfying assignments of $f$ for
$\epsilon = 1/\poly(\log n)$. The previous best result due to Luby
and Velickovic from nearly two decades ago had a run-time of
$n^{\exp(O(\sqrt{\log \log n}))}$ \cite{LubyV91,LubyV96}.  
\end{abstract}
\newpage

\section{Introduction}

A natural way to represent a Boolean function $f: \zo^n \rightarrow \zo$
is to write it as a \cnf\ or \dnf\ formula. The class of functions that
admit compact representations of this form (aka polynomial size
\cnf\ and \dnf\ formulae) are central to Boolean function analysis,
computational complexity and machine learning.  

Given a \dnf\ formula $f$ on $n$ variables, the two natural size measures
are the number of terms or size $s(f)$, and the maximum width of a term
$w(f)$. The analogous measures for a \cnf, are the number of clauses and
clause width. It is folklore that every \dnf\ formula $f$ with $m$
terms can be $\eps$-approximated by another \dnf\ $g$ where $s(g) \leq
m$ and $w(g) \leq \log(m/\eps)$, regardless of $w(f)$. 
The formula $g$ is a sparsification of $f$ obtained by simply
discarding all terms of width larger than $\log(m/\eps)$. In other
words, short \dnf\ formulas can be made narrow. An
analogous statement can be derived for \cnf s.

In this work, we show the reverse connection: narrow formulae can be
made short. Indeed, we prove the existence of a strong form of
approximation known as sandwiching approximations which are
important in pseudorandomness. In this work we only consider
approximators which are also Boolean functions.

\begin{Def}
Let $f:\zo^n \rgta \zo$. We say that  functions $f_u,f_\ell:\zo^n \rgta \zo$ are $\eps$-sandwiching
approximators for $f$ if $f_\ell(x) \leq f(x) \leq f_u(x)$ for every
$x \in \zo^n$, and
\begin{align*}
\Pr_{x \in \zo^n}[f_\ell(x) \neq f(x)]  = \Pr_{x \in \zo^n}[(f_\ell(x)=
  0) \wedge (f(x)=1)] \leq \eps, \\
\Pr_{x \in \zo^n}[f_u(x) \neq f(x)]   = \Pr_{x \in \zo^n}[(f_u(x)=
  1) \wedge (f(x)=0)] \leq \eps.
\end{align*}
\end{Def}

Our main result is the existance of $\eps$-sandwiching approximators
for arbitrary width $w$ \dnf s using short width $w$ \dnf s
where the number of clauses depends only on $w$ and $\eps$.

\begin{Thm}
\label{th:mainstruct}
For every width-$w$ \dnf\ formula $f$ and every $\eps > 0$, there
exist \dnf\ formulae $f_\ell, f_u$ each of width $w$ and
size at most $(w\log(1/\epsilon))^{O(w)}$ which are $\eps$-sandwiching
approxmiators for $f$.
\end{Thm}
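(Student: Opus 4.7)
The plan is to iteratively reduce the number of terms of $f$ via the Erd\H{o}s--Rado sunflower lemma: repeatedly find a sunflower substructure among the terms and ``collapse'' it to a single shorter conjunction (for $f_u$) or drop all but a few of its terms (for $f_\ell$).

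For $f_u$, I would initialize $g := f$ and repeat the following as long as $g$ has more than $M := (w\log(1/\eps))^{O(w)}$ terms. By Erd\H{o}s--Rado applied to the variable sets of the terms of $g$, combined with a pigeonhole step over the $\le 2^w$ literal assignments to the core variables, one extracts $m = \Theta(2^w\log(1/\eps))$ terms $T_{i_1},\dots,T_{i_m}$ forming a \emph{coherent sunflower}: they share a common literal-conjunction $Y$ of width $\le w$ (the core), and their petals $P_j := T_{i_j}\setminus Y$ lie on pairwise disjoint variable sets. Replace these $m$ terms by the single term $Y$; since each $T_{i_j}$ implies $Y$, the new DNF is pointwise larger, so $g\ge f$ is preserved and the final $g$ is $f_u$. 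The accepting mass added by one such step is
\[
 \Pr[Y=1 \wedge P_1 = \cdots = P_m = 0] \;=\; 2^{-|Y|}\prod_{j=1}^m (1-2^{-|P_j|}) \;\le\; (1-2^{-w})^m,
\]
using the independence of the disjoint petals. The lower approximator $f_\ell$ is built by a symmetric step: at each sunflower, keep only the first $k = \Theta(2^w\log(1/\eps))$ of the $m$ terms and discard the rest, which is analogously safe because the ``lost'' accepting region has probability at most $(1-2^{-w})^k$.

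The main obstacle is controlling the total error accumulated across iterations, since each step removes only $\Theta(m)$ terms while the initial size $s(f)$ may be arbitrarily large; a naive bound of ``number of steps $\le s(f)$'' would force $m$ to depend on $s(f)$, destroying the $n$-independence of the size bound. I would attack this using a weighted potential argument: monotonicity of the $f_u$ procedure makes the new accepting regions added across distinct iterations pairwise disjoint, so their probabilities sum to at most $1$, and combined with the per-step upper bound $2^{-|Y_t|}(1-2^{-w})^m$ this constrains the sizes of the cores $|Y_t|$ that can appear over time. Turning this into a clean $\eps$-bound on the cumulative error, while simultaneously balancing $m$ against the Erd\H{o}s--Rado threshold $(m\cdot 2^w)^{O(w)}\, w!$ and the target size $M = (w\log(1/\eps))^{O(w)}$, is the most delicate point; I expect that a quasi-sunflower refinement of the raw Erd\H{o}s--Rado bound (in the spirit of Rossman) is ultimately needed to avoid a spurious $2^{O(w^2)}$ overhead in the exponent.
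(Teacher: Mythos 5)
Your high-level plan matches the paper's: collapse a (quasi-)sunflower to its core for the upper approximator, drop terms of the sunflower for the lower one, handle non-unateness by a $2^w$-factor loss (you use pigeonhole over core literal patterns, the paper extracts a random unate sub-formula via \lref{lem:monotone-2}; these are interchangeable), and eventually appeal to Rossman's quasi-sunflowers. You also correctly flag the real obstacle: with plain sunflowers the per-step error is a fixed constant $(1-2^{-w})^k$, so the accumulated error scales with the number of iterations, which can be as large as $s(f)/k$. Where the proposal goes wrong is the proposed fix.

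The ``weighted potential argument'' does not close the gap. The inequality $\sum_t \Pr[A_t]\le 1$ (disjointness of the newly-added accepting regions) is true but vacuous: you want the total error to be $\le\eps$, not $\le 1$, and combining a global bound of $1$ with a per-step \emph{upper} bound on $\Pr[A_t]$ gives you nothing — you would need a \emph{lower} bound on $\Pr[A_t]$ for $\sum_t\Pr[A_t]\le 1$ to limit the number of steps or the cores that can occur, and no such lower bound is available (a step can legitimately add zero mass). So this potential does not constrain the iteration count, and the error bound still depends on $m$.

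What actually kills the $m$-dependence is a different property of quasi-sunflowers than the one you emphasize. It is not just that they shave the $2^{O(w^2)}$ overhead (though they do); it is that Rossman's lemma guarantees, in a formula with $m$ terms, a quasi-sunflower whose petals fail with probability at most $e^{-\gamma(m)}$ where $\gamma(m)=\Theta((m/w!)^{1/w})$ \emph{grows with} $m$. So the per-step error is not a constant: it is tiny when the current formula is large. Reducing the size by one per step, the total error is the telescoping sum $\sum_{j=W+1}^{m}e^{-\gamma(j/2^w)}$, which is a convergent tail that can be made $\le\eps$ by choosing the stopping size $W=(w\log(1/\eps))^{O(w)}$ appropriately (\fcref{fct:tedious} in the paper). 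You should also note that, once the petals are no longer variable-disjoint, bounding the $f_\ell$ error by ``$\Pr[\text{dropped term fires but the others do not}]\le\Pr[\text{all petals fail}]$'' is no longer automatic; the paper proves this via Kleitman's correlation inequality (\lref{lem:monotone-1}), which your sketch would need as well.
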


Our result is proved by  a sparsification procedure for \dnf\ formulae
which uses the notion of quasi-sunflowers due to Rossman
\cite{Rossman10}. The best previously known result along these lines
was due to Trevisan \cite{Trevisan04}, who built on previous work by
Ajtai and Wigderson \cite{AjtaiW85}. Trevisan shows that every width
$w$ \dnf\ has $\eps$-sandwiching approximators that are decision trees
of depth $d = O(w2^w\log(1/\eps))$.

A $k$-junta is a function which depends only on $k$
variables. We say that $f, g: \zo^n \rgta \zo$, we say that $g$
$\eps$-approximates $f$ if 
$$\Pr_{x \in \zo^n}[f(x) \neq g(x)] \leq \eps.$$

A corollary of our result is the following {\em junta theorem} for
\dnf s. 

\begin{corollary}
\label{cor:junta}
Every width-$w$ \dnf\ formula is $\eps$-approximated by
a $(w\log(1/\epsilon))^{O(w)}$-junta.
\end{corollary}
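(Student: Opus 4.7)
The plan is to derive the corollary as a nearly immediate consequence of Theorem \ref{th:mainstruct}. First I would apply the theorem with parameters $w$ and $\eps$ to obtain a sandwiching approximator $f_\ell$ (either sandwicher works) that is itself a width-$w$ DNF with at most $s = (w\log(1/\eps))^{O(w)}$ terms. The sandwiching guarantee immediately yields
\[
\Pr_{x \in \zo^n}[f_\ell(x) \neq f(x)] \leq \eps,
\]
so $f_\ell$ is an $\eps$-approximator to $f$ in the sense of the definition preceding the corollary.

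It then remains to bound the number of variables that $f_\ell$ actually depends on. Since $f_\ell$ is a DNF of width $w$, each of its terms mentions at most $w$ variables, and hence the total number of variables appearing in $f_\ell$ is at most $w \cdot s \leq (w\log(1/\eps))^{O(w)}$ (absorbing the factor of $w$ into the exponent). Any function whose computation touches at most $k$ input coordinates is a $k$-junta, so $f_\ell$ is a $(w\log(1/\eps))^{O(w)}$-junta that $\eps$-approximates $f$. Because Theorem \ref{th:mainstruct} does all of the work, there is no real obstacle here; the only thing to verify is the bookkeeping that $w \cdot (w\log(1/\eps))^{O(w)}$ stays of the form $(w\log(1/\eps))^{O(w)}$, which is immediate.
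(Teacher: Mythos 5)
Your argument is correct and is exactly the intended derivation: the paper states Corollary~\ref{cor:junta} without proof as an immediate consequence of Theorem~\ref{th:mainstruct}, and the bookkeeping you do (at most $w$ variables per term times $(w\log(1/\eps))^{O(w)}$ terms, with the extra factor of $w$ absorbed into the exponent) is precisely the step one is meant to fill in. No gaps.
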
 

A similar but incomparable statement can be derived from Friedgut's junta theorem
\cite{Friedgut98}. It is easy to see that width $w$ \dnf s have average
sensitivity at most $2w$ \footnote{\cite{Amano} shows a sharp bound of
  $w$}, so by Friedgut's theorem any width $w$ \dnf\ is $\eps$-close to a
$2^{\tilde{O}(w/\epsilon)}$-junta. Friedgut's result gives better
dependence on $w$, whereas we achieve much better dependence on
$\eps$. Friedgut's approximator is not {\em a priori} a
small-width \dnf, and  one does not get sandwiching
approximations. Trevisan's result implies that any width $w$ \dnf \ is
$\eps$-approximated by a $k$-junta for $k = \exp(O(w2^w\log(1/\eps)))$
\cite{Trevisan04}.

\tref{th:mainstruct} has interesting consequences for other parameter
settings. One example is the following:

\begin{corollary}
\label{cor:logn}
Every width-$O(\log n)$ \dnf\ formula on $n$ variables is
$n^{-O(1)}$ close to a \dnf \ of width $O(\log n)$ and size $n^{O(\log\log(n))}$.
\end{corollary}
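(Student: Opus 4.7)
The plan is to obtain Corollary \ref{cor:logn} as a direct instantiation of Theorem \ref{th:mainstruct}; there is essentially no new ideas needed, only a parameter calculation. Given a width-$O(\log n)$ DNF $f$, I will apply Theorem \ref{th:mainstruct} with $w = O(\log n)$ and $\epsilon = n^{-c}$ for the appropriate absolute constant $c > 0$ hidden in the $n^{-O(1)}$ target accuracy, and then retain either sandwiching approximator $f_\ell$ or $f_u$, both of which are width-$w$ DNFs that $\epsilon$-approximate $f$ in the pointwise (and hence statistical) sense.

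The key step is to verify that the size bound $(w \log(1/\epsilon))^{O(w)}$ supplied by Theorem \ref{th:mainstruct} collapses to $n^{O(\log\log n)}$ for these parameters. Under the substitution $w = O(\log n)$ and $\log(1/\epsilon) = O(\log n)$, the base $w \log(1/\epsilon)$ becomes $O(\log^2 n)$, so
\[
(w \log(1/\epsilon))^{O(w)} \;=\; \bigl(O(\log^2 n)\bigr)^{O(\log n)} \;=\; 2^{O(\log n) \cdot \log(\log^2 n)} \;=\; 2^{O(\log n \cdot \log\log n)} \;=\; n^{O(\log\log n)},
\]
which matches the claimed bound. The width of the approximator remains $w = O(\log n)$ by the statement of Theorem \ref{th:mainstruct}.

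There is no serious obstacle here: the only thing to be careful about is that the constant hidden in the target $n^{-O(1)}$ accuracy is free to be chosen (the corollary asserts existence of some polynomial loss, not a specific one), so matching it against the constant in the exponent of the size bound is trivial. I would state the proof in two sentences: apply Theorem \ref{th:mainstruct} with the above parameters, and compute.
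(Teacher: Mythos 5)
Your proof is correct and matches the paper's (implicit) argument: Corollary~\ref{cor:logn} is presented there as a direct instantiation of Theorem~\ref{th:mainstruct} with $w=O(\log n)$ and $\epsilon = n^{-O(1)}$, and your arithmetic $(w\log(1/\epsilon))^{O(w)} = (O(\log^2 n))^{O(\log n)} = n^{O(\log\log n)}$ is exactly the required computation. The only thing the paper adds, as an aside, is that the weaker $m$-dependent bound of Theorem~\ref{th:main-simple} would also suffice here since a width-$O(\log n)$ DNF has at most $n^{O(\log n)}$ terms, making $\log(m/\epsilon) = O(\log^2 n)$ as well; but your route via the $m$-independent Theorem~\ref{th:mainstruct} is cleaner and fully correct.
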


In Section \ref{sec:open}, we conjecture that a better bound should be
possible in \tref{th:mainstruct}, which is singly exponential in
$w$. If true, this conjecture will give better bounds for both
Corollaries \ref{cor:junta} and \ref{cor:logn}.

\eat{
For $\epsilon > 0$, any width\footnote{Width is defined as the maximum
  number of literals in any term.} at most $w$ DNF $f:\zo^n \rgta \zo$
is $\epsilon$-strongly approximable by the class of width at most $w$
DNFs with at most $W = (w^350\log(1/\epsilon))^w$ terms.  A
similar statement holds for \cnf s.}

\subsection{\dnf\ Counting and Pseduorandom Generators}

The problem of estimating the number of satisfying solutions
to \cnf\ and \dnf\ formulae is closely tied to the problem of
designing pseudorandom generators for such formulae with short
seed-length. These problems have been studied extensively
\cite{KarpL83, AjtaiW85, NisanW88, Nisan91, LubyV91, LubyV96,
  LubyVW93, Trevisan04, Bazzi09, Razborov09, DeETT10}. 

For a formula $f$, let 
$$\bias(f) = \pr_{x \in \zo^n}[f(x) = 1].$$ 
Given a formula $f$ from a class $\mc{F}$ of functions, the goal of a
counting algorithm for the class $\mc{F}$ is to compute $\bias(f)$. We refer to the counting problems for \cnf s and \dnf s as $\ccnf$ and $\cdnf$ respectively. 
The problem of computing $\bias(f)$ exactly is $\#\Ptime$-hard
\cite{Valiant79}, hence we look to approximate $\bias(f)$. 

An algorithm gives an $\eps$-additive approximation for
$\bias(f)$ if its output is in the range $[\bias(f) -\eps, \bias(f) + \eps]$.
It is easy to see that additive approximations for \cnf s and \dnf s
are equivalent.  There is a trivial solution based on random sampling,
but finding a deterministic polynomial time algorithm has proved challenging. 

Computing multiplicative approximations to $\bias(f)$ is harder, and here the
complexities of $\ccnf$ and $\cdnf$ are very different. An algorithm
is said to be a $c$-approximation algorithm if its output lies in the
range $[\bias(f),c\bias(f)]$. It is easy to see that obtaining a
multliplicative approximation for $\ccnf$ is $\NP$-hard. Karp and Luby
gave the first multiplicative approximation for $\cdnf$, their
algorithm is randomized \cite{KarpL83}. There is a reduction between additive and
multiplicative approximations for $\cdnf$: for $\dnf$ formulae with
$m$ terms, the problem of computing a $(1 + \eps)$-multiplicative
approximation can be reduced deterministically to the problem of
computing an $(\eps/m)$-additive approximation to $\cdnf$. This
reduction is stated explicitly in \cite{LubyV96}, where is attributed
to \cite{KarpL83,KLM:89}

Derandomizing thes Karp-Luby algorithm is an important problem in
derandomization that has received a lot of attention starting form the
work of Ajtai and Wigderson
\cite{AjtaiW85,LN:90,LubyV91,LubyVW93,LubyV96, Trevisan04}. The best
previous result is due to Luby and Velickovic \cite{LubyV91,LubyV96} from
nearly two decades ago: they gave a deterministic
$n^{\exp(O(\sqrt{\log \log n}))}$ time algorithm that can compute an
$\eps$-additive approximation for any fixed constant $\epsilon$.

A natural approach to this problem is to design pseudorandom
generators (\prg s) with small seeds that can $\eps$ fool depth two
circuits. This problem and its generalization to constant depth circuits 
are central problems in pseudorandomness \cite{AjtaiW85,NisanW88,Nisan91,LubyV96, LubyVW93,Trevisan04,Bazzi09,Razborov09,Braverman10,DeETT10}. 

\begin{definition}
A generator $G:\zo^r \rgta \zo^n$ $\delta$-fools a class $\mc{F}$ of
functions if 
\begin{align*}
\left|\E_{y \in \zo^r}[f(G(y))] - \bias(f)\right| \leq \delta
\end{align*}
for all $f\in \mc{F}$. The genrator is said to be explicit if $G$ is
computable in time polynomial in $r$ and $n$.
\end{definition}

A generator with seed-length $r$ that $\eps$-fools \dnf s with $m$ clauses gives
an $\eps$-additive approximation for $\bias(f)$ in $\poly(m,n,2^r)$
time by enumerating over all seeds. Such an algorithm only requires
black-box access to $f$. The reduction form \cite{KarpL83,KLM:89} implies that an
optimal pseduorandom generator for $\dnf$s with seedlength
$O(\log(mn/\eps))$ will give a deterministic multiplicative
approximation algorithm for $\cdnf$. However, the best known generator
currently due to De, Etesami, Trevisan and Tulsiani \cite{DeETT10}
requires seed length $O((\log(mn/\eps)^2)$. The Luby-Velikovic
algorithm is a not a  black-box algorithm, but \prg s for small-width
\dnf s are an important ingredient.

\subsubsection*{Our Results}

We use our sparsification lemma to give a better \prg \ for the class of
width $w$ \dnf\ formulae on $n$ variables, which we denote by  $\dnf(w,n)$.
\footnote{The $\tilde{O}()$ notation is used to hide terms that are
  logarithmic in the arguments.}

\begin{theorem}\label{th:foolswidth}
For all $\delta$, there exists an explicit generator $G:\zo^r \rgta
\zo^n$ that $\delta$-fools $\dnf(w,n)$ and has seed-length  
\begin{align*}
r & = \tilde{O}\left(w^2 + w\log\left(\frac{1}{\delta}\right) +
\log\log(n)\right). 
\end{align*}
\end{theorem}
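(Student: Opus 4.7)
The plan is a standard sandwiching reduction to the problem of fooling bounded-size width-$w$ DNFs, followed by a near-optimal PRG for that restricted class. First I would apply Theorem~\ref{th:mainstruct} with accuracy $\epsilon = \delta/3$: every width-$w$ DNF $f$ then admits sandwiching approximators $f_\ell \leq f \leq f_u$, each itself a width-$w$ DNF with $M = (w\log(1/\delta))^{O(w)}$ terms. Any distribution $\mathcal{D}$ on $\{0,1\}^n$ that $(\delta/3)$-fools every width-$w$ DNF of size at most $M$ automatically $\delta$-fools $f$, since
\[
\ex_{\mathcal{D}}[f] \;\leq\; \ex_{\mathcal{D}}[f_u] \;\leq\; \ex_U[f_u] + \delta/3 \;\leq\; \ex_U[f] + 2\delta/3,
\]
and symmetrically $\ex_{\mathcal{D}}[f] \geq \ex_U[f] - 2\delta/3$ from $f_\ell$. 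Hence it suffices to $(\delta/3)$-fool only the subclass of width-$w$ DNFs of size at most $M$.

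For this bounded-size subclass, I would show that $k$-wise independence suffices for $k = \tilde{O}(w^2 + w\log(1/\delta))$. The key ingredients are that $\log M = \tilde O(w)$ and that each width-$w$ term is exactly fooled by $w$-wise independence; combining these with a suitably width-sensitive Bazzi/Razborov-style polynomial approximation argument yields $(\delta/6)$-fooling of every width-$w$ DNF of size at most $M$ by $k$-wise independent distributions. To eliminate the $\log n$ factor intrinsic to exact $k$-wise independence, I would instantiate via a $(\delta/6)$-almost $k$-wise independent distribution on $\{0,1\}^n$: standard explicit constructions (e.g., small-bias distributions composed with an appropriate hashing of the index set) deliver seed length $\tilde O(k + \log\log n + \log(1/\delta))$, and the almost-independent substitution loses only an additive $O(\delta)$ in the fooling guarantee. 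Plugging in the value of $k$ yields $r = \tilde O(w^2 + w\log(1/\delta) + \log\log n)$, as claimed.

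The main obstacle is the sharp independence bound. A black-box application of Bazzi's theorem to generic size-$M$ DNFs gives only $k = O(\log^2(M/\delta)) = \tilde O(w^2 + \log^2(1/\delta))$, weaker than the claim whenever $\log(1/\delta) \gg w$. Getting the tighter $w\log(1/\delta)$ dependence needs a width-sensitive refinement---informally, truncating the inclusion-exclusion expansion at depth $t = O(\log(1/\delta))$, noting that each surviving product of $t$ width-$w$ terms is a width-$wt$ conjunction fooled exactly by $wt$-wise independence, and bounding the Bonferroni tail using the bounded width (and the structural guarantees provided by the sparsification of Theorem~\ref{th:mainstruct}) rather than the total number of terms. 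A secondary concern is that the explicit construction of an almost $k$-wise independent distribution used in the final step must actually achieve the claimed $\log\log n$ dependence and compose additively with the Bazzi-type bound.
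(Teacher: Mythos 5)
Your first step---applying Theorem~\ref{th:mainstruct} to reduce to width-$w$ DNFs of size $M = (w\log(1/\delta))^{O(w)}$, then using sandwiching to transfer the fooling guarantee---is exactly what the paper does. But there is a genuine gap in the second step, and you have in fact identified the crux of it yourself: you need a width-sensitive fooling result for bounded-size small-width DNFs, and the one you sketch is not established.

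Two specific problems. First, your proposed inclusion-exclusion truncation at depth $t = O(\log(1/\delta))$ does not obviously control the Bonferroni tail for a size-$M$ DNF; the usual bound on that tail grows with the number of terms, and $M$ is still $\exp(\tilde O(w))$, so ``bounded width alone'' is not a substitute for the analysis. Second, and more fundamentally, your final instantiation uses a $(\delta/6)$-almost $k$-wise independent distribution (i.e., a $(k,\delta/6)$-biased space) and asserts this ``loses only an additive $O(\delta)$.'' That substitution is safe only when the relevant sandwiching \emph{polynomial} approximators have $\ell_1$-norm $O(1)$. In fact the approximators one gets here (following De~et~al., \tref{th:foolsmallbias}) have degree $k = O(w\log(m/\delta))$ but $\ell_1$-norm as large as $(m/\delta)^{O(w\log w)}$, so you need the bias parameter to be $\epsilon$ with $\log(1/\epsilon) = O(w\log(w)\log(m/\delta))$---exponentially smaller than $\delta/6$. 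The paper's proof goes through precisely because it uses $(k,\epsilon)$-biased spaces with this exponentially small $\epsilon$, and it is the $\log(1/\epsilon)$ term, not $k$, that dominates the $O(k + \log(1/\epsilon) + \log\log n)$ seed length of the Naor--Naor construction, giving $r = O(w^2\log^2 w + w\log(w)\log(1/\delta) + \log\log n)$. So the missing ingredient is the width-and-size-sensitive version of the De~et~al.\ $\ell_1$-sandwiching bound, together with the recognition that the bias must be driven down to cancel the exponential $\ell_1$-norm; pure $k$-wise independence with the $k$ you propose is not known to suffice.
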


In comparison, Luby and Velickovic \cite{LubyV96} give a \prg\ with
seed-length $O(2^w  + \log \log n)$ for fooling width $w$ \dnf s. Note
that for $w = O(\log \log n)$ and $\delta$ constant, the seed-length of
the our generator is $\tilde{O}((\log\log n)^2)$, whereas Luby and
Velickovic need seed-length $O(\log^{O(1)}n)$. For $w = \log\log(n)$
and $\delta \geq 1/\poly(n)$, our seed-length is still $\tilde{O}(\log n)$.

The improved generator for small-width \dnf s is obtained  by
using our sparsification result to reduce fooling
width $w$ \dnf s with an arbitrary number of terms to fooling
width $w$ \dnf s with $2^{\tilde{O}(w)}$ terms.  We then apply recent
results by De \etal\ on fooling \dnf\ formulas using small-bias
spaces. The fact that our sparsification gives sandwiching
approximators is critical for this result.

The Luby-Velickovic counting algorithm can be viewed as a (non
black-box) reduction from fooling \dnf s of size $\poly(n)$ to fooling
\dnf s of smaller width. Given \tref{th:foolswidth}, we can
improve and simplify their analysis to get a faster deterministic
counting algorithm. This is the first progress on this well-studied
problem in nearly two decades. In addition, we can allow for smaller
values of $\eps$. 

\begin{theorem}\label{th:mainintro}
There is a deterministic algorithm which when given a \dnf \ formula
on $n$ variables of size $m$ as input, returns an
$O(\eps)$-additive approximation to $\bias(f)$ in time 
\begin{align*}
\left(\frac{mn}{\eps}\right)^{\tilde{O}(\log\log(n) + \log\log(m) + \log(1/\eps))}
\end{align*}
For $m \leq \poly(n)$ and $\eps \geq 1/\poly(\log n)$, the running
time is $O(n^{\tilde{O}(\log\log(n))})$.
\end{theorem}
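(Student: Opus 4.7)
The plan is to retain the Luby--Velickovic framework \cite{LubyV91,LubyV96} for deterministic DNF counting and swap in our quantitatively sharper PRG from \tref{th:foolswidth} as its base primitive. The authors already flag in the introduction that the Luby--Velickovic counting algorithm should be viewed as a non-black-box reduction from counting $\poly(n)$-size DNFs to fooling small-width DNFs, so any improvement of the small-width PRG ports directly into an improvement of the counting time.

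First I sparsify: discarding every term of $f$ whose width exceeds $w_0 := \lceil\log(4m/\eps)\rceil$ loses at most $m\cdot 2^{-w_0}\le \eps/4$ of the bias, so it suffices to $(\eps/2)$-approximate the resulting width-$w_0$ DNF $\tilde f$. This matches the starting point of the Luby--Velickovic reduction.

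Next I run their recursion on $\tilde f$. Their procedure is a nested derandomization that at each level partitions variables into blocks via limited-independence hashing and fools pseudorandom assignments to subsets of the blocks; the net effect is to reduce the effective width of the residual formula, while each level pays the seed length of a PRG for width-$w$ DNFs. Because every such PRG call can now be realized by \tref{th:foolswidth} at seed length $\tilde O(w^2 + w\log(1/\eta) + \log\log n)$, the parameters improve across the board versus their own seed length of $\Theta(2^w + \log\log n)$. Propagating these sharper parameters through the running-time recurrence of \cite{LubyV96} and choosing the per-level error budget $\eta$ so that the total error remains at most $\eps/4$ yields the claimed bound $(mn/\eps)^{\tilde O(\log\log n + \log\log m + \log(1/\eps))}$. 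The $\log\log n$ in the exponent traces back to the $\log\log n$ in our PRG seed, while the $\log\log m + \log(1/\eps)$ terms come from $w_0$ entering the recursion; the specialization to $m\le \poly(n)$ and $\eps\ge 1/\poly(\log n)$ collapses the exponent to $\tilde O(\log\log n)$ by substitution.

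The main obstacle is not devising a new algorithm but re-auditing the Luby--Velickovic recursion with the new PRG slotted in: I must verify that each of their small-width PRG invocations can be cleanly replaced by \tref{th:foolswidth}, that the inductive hypothesis on the width at each level remains compatible with what our PRG fools, and that the per-level error budgets telescope to $O(\eps)$. No fundamentally new algorithmic idea beyond our PRG is needed; the content is a careful replay of \cite{LubyV96} with updated parameters, with the quantitative win coming entirely from swapping $2^w$ for $\tilde O(w^2)$ in the small-width seed length.
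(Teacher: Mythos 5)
Your high-level plan coincides with the paper's: truncate terms wider than $w_0 = O(\log(m/\eps))$ (costing $O(\eps)$ in bias), then apply a Luby--Velickovic-style hashing reduction to small width with the improved small-width \prg\ from \tref{th:foolswidth} as the base primitive. The truncation step and the final substitution $w = \log(m/\eps)$ are literally how the paper derives \tref{th:mainintro} from its width-parametrized bound \tref{thm:main}.

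The imprecision is in what you think the intermediate step looks like. You describe the Luby--Velickovic procedure as a multi-level ``nested derandomization'' whose ``running-time recurrence'' you would re-solve with the new seed length. But the paper's point in \sref{sec:dnf-count} is that once the \prg\ seed shrinks from $O(2^w + \log\log n)$ to $\tilde O(w^2 + \log\log n)$, no multi-level balancing is required: Algorithm {\sf DNFCount} does a \emph{single} round of hashing (a $k$-wise independent $h : [n] \rgta [t]$ with $k = \log(w/\eps)$, $t = w/k$, $w' = 6k$), drops the bad terms, applies one independent copy of the width-$w'$ generator per bucket, and closes with a $t$-step hybrid argument (\lref{lem:lv2}) together with \lref{lem:lv1}. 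The running-time exponent $O(k\log n + wk\log^2 k + w\log\log n + \log m)$ falls out directly, with no recursion to unwind. So the structure you need to audit is considerably simpler than you anticipate, and the phrase ``propagating these sharper parameters through the running-time recurrence'' hides the entire quantitative content of the theorem; to turn your sketch into a proof you would have to actually write down the one-pass seed-length arithmetic ($r = O(k^2\log^2 k + \log\log n)$, $rt = O(wk\log^2 k + w\log\log n)$, $|\hh| = n^{O(k)}$) and then plug in $w = \log(m/\eps)$, which is precisely what the paper does in \tref{thm:main} and its short corollary.
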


\newcommand{\zro}{\{*,0,1\}}

H{\aa}stad's celebrated Switching Lemma \cite{Hastad86} is a powerful tool
in proving lower bounds for small-depth circuits. It also has
applications in computational learning \cite{LMN:93,Mansour:95} and
\prg \ constructions \cite{AjtaiW85,GMRTV:12}.  
As an additional application of our sparsification result, we give a
partial derandomization of the switching lemma. The parameters we
obtain are close to that of the previous best results due to Ajtai and
Wigderson \cite{AjtaiW85} and perhaps more importantly, our argument
is conceptually simpler, involving iterative applications of our
sparsification result and a naive union bound. We defer the details to
\sref{sec:slemma}.  

\eat{
\section{Outline of Results}

\subsection{The Deterministic Counting Algorithm}
As a first step towards our final counting algorithm for \dnf s, we
give a pseudorandom generator (\prg) that $\delta$-fools width $w$
\dnf s on $n$ variables with seed-length $\tilde{O}(w^2 +
\log^2(1/\delta)) + O(\log \log n)$. Our seed-length is exponentially
better than that of Luby and Velickovic in terms of width and this
improvement turns out to be critical in improving the counting
algorithm for general \dnf s. 


The improved generator for small-width \dnf s works as follows: We first use our sparsification result to reduce the case of fooling width $w$ \dnf s with an arbitrary number of terms to that of fooling width $w$ \dnf s with $2^{\tilde{O}(w)}$ terms and then apply recent results (\tref{th:foolswidth}, \cite{DeETT10}) showing that small-bias spaces with $\exp(-\tilde{O}(\log^2m))$-bias fool \dnf s with $m$ terms. The fact that our sparsification in fact gives sandwiching approximators is critical for this step.

We are now ready to present our deterministic algorithm for approximating the number of satisfying solutions to a \dnf. The high level outline of the algorithm is similar to that of Luby and Velickovic and is as follows. Let $f = T_1 \vee T_2 \vee \cdots \vee T_m$ be a \dnf on $n$ variables with $m = \poly(n)$. Without loss of generality suppose that $f$ has width at most $O(\log n)$.

Let $\hh: [n] \rgta [t]$ be a family of $O(\log \log n)$-wise independent hash functions for $t = O(\log n)$. We choose a hash function $h \in_u \hh$ and partition $[n]$ into $t$ ``buckets'' by setting $B_j = \{i: h(i) = j\}$ for $j \in [t]$. Then, for any $i \in [m]$ with probability at least $ 1 - 1/\poly(\log n)$, the term $T_i$ will be split almost evenly across the buckets so that $\max_{j\in [t]} |B_j \cap T_i| < w = O(\log \log n)$. Call a term $T_i$ {\sl bad} if the above property is not satisfied.

Let $f'$ denote the \dnf obtained by dropping all the bad terms. We now wish to exploit the fact that when restricted to the coordinates within a single bucket $B_j$, $f'_{|B_j}$ is a width at most $w$ \dnf. So we can use our generator for small-width \dnf s to fool $f'_{|B_j}$ for any arbitrary fixing of the variables in the remaining buckets. We now use a hybrid argument to show that $f'$ is fooled by the generator which uses independent copies of the generator from \tref{th:foolswidth} within each of the $t$ buckets $B_1,B_2,\ldots,B_t$. The seed-length of the final generator for fooling $f'$ is $t \cdot \tilde{O}(w^2 + \log \log n) = \tilde{O}(\log n)$. Thus, we can in particular estimate the bias of $f'$ deterministically within an additive error of $1/\poly(\log n)$ in time $\exp(\tilde{O}(\log n))$. 

Finally, we need to relate the bias of $f'$ with that of $f$. This simply follows from the fact that each term $T_i$ of $f$ is not in $f'$ with a probability of at most $1/\poly(\log n)$. Combining the above arguments and setting the parameters appropriately, we get a deterministic $\exp(\tilde{O}(\log n))$ time algorithm for approximating the bias of $f$ within an additive $1/\poly(\log n)$ error. }

\eat{
\subsection{Derandomizing the Switching Lemma}\label{sec:slemmaintro}
The switching lemma \cite{Hastad86} is one of the central techniques in proving lowerbounds for small-depth circuits. We first setup some notation to state it formally.

Given $L \subseteq [n]$ and $x \in \zo^{[n]\setminus L}$ define a {\sl restriction} $\rho \equiv \rho_{L,x} \in \zro^n$ by $\rho_i = *$ if $i \in L$ and $\rho_i = x_i$ otherwise. We call the set $L \equiv L(\rho)$ as the set of ``live'' variables. For $f:\zo^n \rgta \zo$, and $\rho \in \zro^n$, define $f_\rho:\zo^{L(\rho)} \rgta \zo$ by $f_\rho(y) = f(x)$, where $x_i = y_i$ for $i \in L(\rho)$ and $x_i = \rho_i$ otherwise. 

Given a distribution $\calD$ on $2^{[n]}$, let $\calD$ (abusing notation, the meaning will be clear from context) denote the distribution on $\rho \in \zro^n$ by setting $\rho = \rho_{L,x}$ where $L \lfta \calD$ and $x \in_u \zo^{[n]\setminus L}$. Call a distribution $\calD$ as above $p$-regular if for each $i \in [n]$, $\pr_{L \lfta \calD}[i \in L] = p$. Finally, let $\calD_p(n)$ (we omit $n$ if clear from context) denote the $p$-regular distribution on subsets $L$ of $[n]$ where each element $i \in [n]$ is present in $L$ independently with probability $p$.
\begin{theorem}[Switching Lemma, \cite{Hastad86}]\label{th:slemmam}
Let $f:\zo^n \rgta \zo$ be a \dnf\ of width $w$ and let $\rho \lfta \calD_p(n)$. Then, $\pr[f_\rho \text{ cannot be computed by a decision tree of depth s}]$ $< (5pw)^s$. 
\end{theorem}

Given the fundamental nature of the switching lemma it is natural to ask if there exists a {\sl derandomized} version of the switching lemma in the sense of choosing the set of live variables in a pseudorandom way. One could even ask for a stronger derandomization where the assignments to the non-live variables are also chosen pseudorandomly, however, this is unnecessary for most interesting applications and we limit ourselves to the former case here.

The only result we are aware of in this direction is that of Ajtai and Wigderson \cite{AjtaiW85}.
\begin{theorem}[\cite{AjtaiW85}]
For all $\gamma \in (0,1]$, $p < 1/n^\gamma$, there is a $p$-regular distribution $\calD$ on $2^{[n]}$ with $L \lfta \calD$ samplable using $O_\gamma(\log n)$ random bits such that the following holds. For $\rho \lfta \calD$, and any polynomial size \dnf\ $f$, $\pr[f_\rho \text{ depends on more than $O_\gamma(1)$ variables}]$ $< 1/\poly(n)$.
\end{theorem} 

Here, we give a different argument that essentially recovers the result of Ajtai and Wigderson and further gives a trade-off between the survival probability $p$, the complexity of the restricted function and the failure probability of the restriction. Our argument is much simpler than those of H\o{a}stad and Ajtai and Wigderson involving iterative applications of \tref{th:mainstruct}.
\begin{theorem}\label{th:slemmad}
There exists a constant $C$ such that for any $w,s, \delta > 0$  and $p < \delta /(w\log(1/\epsilon))^{C\log w}$, there is a distribution $\calD$ on $2^{[n]}$ such that $L \lfta \calD$ can be sampled using 
$$ r(n,s,\epsilon,\delta) = O\left( (\log w) \cdot \left(\log n + s\log(1/\delta)\right) + w\log(w\log(1/\epsilon))\right)$$
random bits, the indicator events $\mathsf{1}\{i \in L\}$ are $p$-biased and the following holds. For any width $w$ \dnf\ $f:\zo^n \rgta \zo$, and $\rho \lfta \calD$,
$\pr[f_\rho \text{ is not $\epsilon$-strongly approximable by width s \dnf s}]$ $< \epsilon + \delta^{s/4}$.
\end{theorem}
In particular, by setting $\delta = 1/n^\gamma$, $s = \Theta(1/\gamma)$, $\epsilon = 1/\poly(n)$, $w = O(\log n)$, we almost recover the derandomized switching lemma of Ajtai and Wigderson, with the main difference being that we need $O((\log n)(\log \log n))$ bits to sample from $\calD$ and we only get $f_\rho$ is strongly approximable by width $O_\gamma(1)$ \dnf s. 

We remark that if instead of finding a small set of restrictions that
work for all formulas $f$, we are given the formula $f$ as input,
Agrawal et al.~\cite{AgrawalAIPR01} give a polynomial-time algorithm
to find a restriction that simplifies the formula as well as the
bounds given by the switching lemma \tref{th:slemmam}.
}


\section{DNF Sparsification}
\label{sec:sparsify}

We will consider \dnf\ formulas that are specified as  $f =
\vee_{i=1}^m T_i$ where the representation is minimal in the
following sense:
\begin{itemize}
\item Each $T_i$ is non-constant. Hence each term is non-empty (else
  we replace it by $1$), and does not contain a variable and its negation
  (else we replace it by $0$). This guarantess that $\pr_x[T_i =1] \leq
  1/2$.
\item Each that $T_i$ is not implied by some other $T_j$; if this is so,
we can simply drop $T_i$ from the definition of $f$. This means that
when viewed as a set of literals, $T_j\not\subset T_i$. A consequence
is that $T_i \cap T_j \subsetneq T_j$.
\end{itemize}

If some stage of our sparsification produces a representation which
is not minimal, we can convert it to a minimal represntation without
increasing the number of terms.

We call a DNF $f$ {\sl unate} if it does not contain a variable and its
negation.

\subsection{Sparsification using Sunflowers}

We will first show the following weaker version of
\tref{th:mainstruct} with a bound of $(w2^w\ln(m/\epsilon))^{w}$, and
assumes that $f$ is unate. The proof will illustrate the key ideas
behind our sprsification procedure. 

\begin{Thm}
\label{th:main-simple}
For every unate \dnf\ formula $f$ with width $w$ and size $m$ every
$\eps > 0$, there exist \dnf\ formulae $f_\ell, f_u$ each with width $w$ and
at most $(w\log(m/\epsilon))^{O(w)}$ which are $\eps$-sandwiching
approxmiators for $f$.
\end{Thm}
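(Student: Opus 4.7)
The plan is to construct both approximators via iterated applications of Rossman's quasi-sunflower lemma. Since $f$ is unate, by flipping variables if necessary I may assume without loss of generality that $f$ is monotone; this preserves the uniform distribution on $\zo^n$ and all width/size parameters. The upper sandwich $f_u$ will be produced by iteratively replacing sunflower families of terms by their common core; the lower sandwich $f_\ell$ will be obtained by a dual argument applied to the CNF representation of $\bar f$.

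For $f_u$, I invoke Rossman's quasi-sunflower lemma: any family of more than $M_\delta := (Cw\log(1/\delta))^w$ subsets of $[n]$ of size at most $w$ contains a subfamily $\{T_{i_1},\ldots,T_{i_k}\}$ with core $Y=\bigcap_j T_{i_j}$ and petals $P_j = T_{i_j}\setminus Y$ satisfying
\[
\pr_{x \in \zo^n}\bigl[\forall j:\ P_j \not\subseteq x^{-1}(1)\bigr] \leq \delta.
\]
Starting from $f = T_1 \vee \cdots \vee T_m$, I apply this iteratively with $\delta = \eps/m$: while the current formula has more than $M := M_{\eps/m}$ terms, find such a quasi-sunflower, delete $T_{i_1},\ldots,T_{i_k}$, insert the core $Y$ (a monotone term of width $\leq w$), and restore minimality. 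Since each $T_{i_j}$ implies $Y$, the new DNF $f'$ satisfies $f \leq f'$ pointwise, and
\[
\pr[f'\neq f] \leq \pr\bigl[Y(x)=1 \wedge \textstyle\bigwedge_j T_{i_j}(x)=0\bigr] = \pr[Y=1] \cdot \pr\bigl[\forall j:\ P_j \not\subseteq x^{-1}(1)\bigr] \leq \delta,
\]
using that the variables of $Y$ are disjoint from those in the petals, so the two events are independent. Each iteration strictly decreases the term count, so after at most $m$ iterations we obtain $f_u$ of size at most $M = (w\log(m/\eps))^{O(w)}$ with total error at most $m \cdot (\eps/m) = \eps$.

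For $f_\ell$, the plan is to sparsify the CNF $\bar f = \bigwedge_i \neg T_i$ (a width-$w$ CNF with $m$ clauses) into an upper CNF approximator $\tilde g \geq \bar f$ of size at most $M$ with $\pr[\tilde g \neq \bar f] \leq \eps$; then $f_\ell := \neg \tilde g$ is a width-$w$ DNF of size at most $M$ satisfying $f_\ell \leq f$ and $\pr[f_\ell \neq f]\leq \eps$. The CNF sparsification is obtained by applying Rossman's lemma to the clause-literal sets and iteratively removing quasi-sunflowers of clauses, where the core is the common sub-clause and the quasi-sunflower property is used (under a dual symmetry argument) to bound the error introduced when redundant clauses are eliminated.

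The principal obstacle is the dual construction for $f_\ell$: unlike the DNF side, where replacing a sunflower by its core naturally enlarges the function, the analogous CNF operation naturally produces a stronger (smaller) CNF, i.e.~a lower approximator of $\bar f$ rather than the upper one we need. The correct dual replacement rule must therefore be different—likely involving removal of a carefully chosen subset of sunflower clauses rather than substitution by a core—and one must verify that the quasi-sunflower structure delivers the right error bound in the opposite direction. Once this is set up, the iteration structure and error accounting closely parallel the $f_u$ argument, yielding the same size bound $(w\log(m/\eps))^{O(w)}$.
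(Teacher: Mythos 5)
Your upper-sandwich argument is correct, and it is in fact somewhat stronger than the paper's proof of this particular theorem: the paper proves \tref{th:main-simple} using the classical Erd\H{o}s--Rado sunflower lemma (disjoint petals, $k = 2^w\ln(m/\eps)$), which produces a bound with an extra $2^{w^2}$ factor, whereas your use of Rossman's quasi-sunflower lemma eliminates that factor. (The paper only brings in quasi-sunflowers when upgrading to \tref{th:mainstruct}.) The error accounting for $f_u$ — cost $\delta = \eps/m$ per round, at most $m$ rounds — is what the paper does.

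The lower sandwich, however, has a genuine gap. First, the dual framing is an unnecessary detour: negating everything, ``drop a carefully chosen clause of a quasi-sunflower of clauses in $\bar f$'' is literally ``drop a carefully chosen term of a quasi-sunflower of terms in $f$,'' which is what the paper does directly. The real issue is the error bound for that drop, and you have not addressed it. Removing one sunflower term $T_{i_1}$ creates error only when $T_{i_1} = 1$ and all remaining terms — in particular $T_{i_2},\ldots,T_{i_k}$ — are $0$; conditioning on $T_{i_1}=1$ forces the core $Y=1$, so the bad event is contained in $\{(T_{i_1}\setminus Y)=1\} \cap \{\vee_{j\geq 2}(T_{i_j}\setminus Y)=0\}$. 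For a \emph{classical} sunflower the petals $T_{i_j}\setminus Y$ are pairwise disjoint, so this factors and equals $\Pr[T_{i_1}\setminus Y=1]\cdot\Pr[\vee_{j\geq 2}(T_{i_j}\setminus Y)=0]$, which is easily bounded; this is the paper's argument in \tref{th:main-simple}. For a \emph{quasi-}sunflower, which is what you invoke, the petals may overlap, so you cannot argue independence and the quasi-sunflower guarantee (that $\Pr[\vee_j (T_{i_j}\setminus Y)=1]$ is large) does not immediately control $\Pr[(T_{i_1}\setminus Y)=1 \wedge \vee_{j\geq 2}(T_{i_j}\setminus Y)=0]$. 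Closing this requires a positive-correlation inequality for unate/monotone events — the paper does exactly this via Kleitman's lemma in \lref{lem:monotone-1}, showing $\Pr[(T_1=1)\wedge(\vee_{i\geq 2}T_i=0)] \leq \Pr[\vee_{i\geq 1}T_i = 0]$, so the quasi-sunflower error bound $\Pr[p(x)=0]\leq e^{-\gamma}$ can be reused. Your ``closely parallel'' claim at the end glosses over exactly this asymmetry between the two directions. Either import that correlation inequality, or retreat to classical sunflowers (as the paper does for this weaker theorem), where disjointness of petals makes the lower bound immediate.
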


The starting point of our sparsification result is the Erd\H{o}s-Rado
Sunflower Lemma \cite{ER:60}.

\begin{Def}
Let $k \geq 3$. A collection of subsets $S_1, \ldots,S_k \subseteq
[n]$ is a sunflower with core $Y$ if $Y \subsetneq S_i$ for all $i$
and $S_i \cap S_j = Y$ for all $i \neq j$. The sets $S_i \setminus Y$
are called the petals.
\end{Def}

The set systems that we consider will arise from the terms in some  minimal
representation of a monotone \dnf. This will ensure that the petals
are always non-empty, although the core might be empty.

The celebrated Erd\H{o}s-Rado Sunflower Lemma guarantees that every
sufficiently large set system of bounded size sets contains large
sunflowers. 

\begin{Thm}(Sunflower Lemma, \cite{ER:60})
Let $\mc{F} = \{S_1,\ldots,S_m\}$ be a collection of subsets of $[n]$,
  each of cardinality at most $w$. If $m > w! (k-1)^w$,
  then $\mc{F}$ has a {\sl sunflower} of size $k$.  
\end{Thm}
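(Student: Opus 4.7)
My plan is to prove the Sunflower Lemma by induction on the width bound $w$, following the classical Erd\H{o}s--Rado argument. The statement to prove is that if $\mc{F}$ is a family of subsets of $[n]$ each of size at most $w$ with $|\mc{F}| > w!(k-1)^w$, then $\mc{F}$ contains a sunflower of size $k$.

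\textbf{Base case ($w=1$).} All sets in $\mc{F}$ are singletons, and $|\mc{F}| > (k-1)^1 = k-1$, so $|\mc{F}| \geq k$. Since the sets are distinct subsets, any $k$ of them are pairwise disjoint, and with core $Y = \emptyset$ they form a sunflower (note $Y \subsetneq S_i$ because each petal is nonempty).

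\textbf{Inductive step.} Assume the result for $w-1$, and let $\mc{F}$ satisfy $|\mc{F}| > w!(k-1)^w$. The plan is to use a greedy maximal-matching argument. Extract a maximal subcollection $T_1,\ldots,T_\ell \in \mc{F}$ of pairwise disjoint sets. There are two cases:
\begin{itemize}
\item If $\ell \geq k$, then $T_1,\ldots,T_k$ already form a sunflower with empty core, and we are done.
\item Otherwise $\ell \leq k-1$, so $U := T_1 \cup \cdots \cup T_\ell$ has $|U| \leq w(k-1)$. By maximality of the disjoint subcollection, every $S \in \mc{F}$ intersects $U$. Pigeonhole then gives some element $x \in U$ belonging to at least
\[
\frac{|\mc{F}|}{|U|} > \frac{w!(k-1)^w}{w(k-1)} = (w-1)!(k-1)^{w-1}
\]
sets of $\mc{F}$.
\end{itemize}

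\textbf{Applying induction.} Let $\mc{F}_x = \{S \in \mc{F} : x \in S\}$ and consider $\mc{F}' = \{S \setminus \{x\} : S \in \mc{F}_x\}$. The sets in $\mc{F}'$ are distinct (since distinct supersets of $\{x\}$ give distinct sets after removing $x$), have size at most $w-1$, and $|\mc{F}'| > (w-1)!(k-1)^{w-1}$. By the inductive hypothesis, $\mc{F}'$ contains a sunflower $S_1',\ldots,S_k'$ with some core $Y'$ satisfying $Y' \subsetneq S_i'$ for all $i$. Adding $x$ back, the sets $S_i' \cup \{x\} \in \mc{F}$ form a sunflower with core $Y' \cup \{x\}$: for $i \neq j$,
\[
(S_i' \cup \{x\}) \cap (S_j' \cup \{x\}) = (S_i' \cap S_j') \cup \{x\} = Y' \cup \{x\},
\]
and $Y' \cup \{x\} \subsetneq S_i' \cup \{x\}$ since $Y' \subsetneq S_i'$. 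This completes the induction.

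The only delicate point is balancing the constants in the pigeonhole step so that after stripping $x$ we land exactly in the inductive regime $(w-1)!(k-1)^{w-1}$; the bound $|U| \leq w(k-1)$ that comes from $\ell \leq k-1$ disjoint sets of size $\leq w$ is precisely what makes this work. Everything else is bookkeeping: checking that $\mc{F}'$ consists of distinct sets (which uses that all its members came from sets containing $x$) and that the core of the lifted sunflower is a strict subset of each petal.
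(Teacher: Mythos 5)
Your proof is correct: it is the classical Erd\H{o}s--Rado induction on $w$ (maximal disjoint subfamily, pigeonhole on the union of at most $k-1$ sets of size at most $w$, strip the popular element $x$, apply the inductive hypothesis with the bound $(w-1)!(k-1)^{w-1}$, and lift the sunflower back by adding $x$ to the core), and the arithmetic in the pigeonhole step checks out. There is nothing in the paper to compare it against: the paper invokes this lemma as a known result of \cite{ER:60} and gives no proof, so you have supplied the standard argument for a cited black box. One pedantic point worth a sentence if you were to include this: since the sets have size \emph{at most} $w$, the family could in principle contain the empty set, and under the paper's sunflower definition (which demands $Y \subsetneq S_i$, i.e.\ nonempty petals, and $k\geq 3$) your base case and the ``$\ell \geq k$ disjoint sets'' case implicitly assume all sets are nonempty; the paper itself sidesteps this by noting that the set systems arising from minimal \dnf\ representations always have nonempty petals, and discarding the (at most one) empty set costs nothing against the bound. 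Otherwise the argument is complete.
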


The lemma and its variants have found several applications
in complexity theory, we refer the reader to \cite[Chapter 7]{Jukna}
for more details. We will use it to prove \tref{th:main-simple}.

\begin{proof}(Proof of \tref{th:main-simple}.)
Fix a unate, width $w$ \dnf\ $f = T_1 \vee T_2 \vee \cdots \vee T_m$ and for
simplicity suppose that $f$ is monotone.
Since $f$ is monotone, we can think of
each term $T_i$ as a set of variables of size at most $w$. Set $k =
2^w \ln(m/\epsilon)$.  Provided 
\begin{align}
\label{eq:cond-1}
m \geq \left(w2^{w}\ln\left(\frac{m}{\eps}\right)\right)^w \geq
w!(k-1)^w
\end{align} 
the Sunflower Lemma guarantees the  existance of a collection of terms
$T_{i_1},\ldots, T_{i_k}$ with a core $Y = \cap_{j=1}^k T_{i_j}$ and
disjoint petals $T_{i_j} \setminus Y$. 
Hence we can write
\begin{align*}
\vee_{j=1}^kT_{i_j} = Y \wedge \left(\vee_{j=1}^k (T_{i_j}\setminus Y) \right) = Y
\wedge g \ \text{where} \ g = \vee_{i=1}^k(T_{i_j} \setminus Y).
\end{align*}
Note that $g$ is a read-once \dnf\ of width $w$ and size $k
=2^w\ln(m/\eps)$, so it is almost surely satisfied by a random assignment: 
\begin{align*}
\pr_x[g(x) = 0] = \prod_{i=1}^k\pr_x[T_{i_j}\setminus Y  = 0] \leq
\left(1 - \frac{1}{2^w}\right)^k \leq \frac{\eps}{m}.
\end{align*}
The first inequality holds because each $T_{i_j}\setminus Y$ is a term
with width at most $w$, and the second by our choice of $k$.

Thus a natural way to get an upper sandwiching approximation is to
replace $g(x)$ by the constant $1$, which is equivalent to replacing  
$\vee_{j=1}^kT_{i_j}$ with $Y$. Let $f': \zo^n \rgta \zo$ be the
\dnf\ formula obtained by this replacement. It is clear that 
$f(x) \leq f'(x)$. Further, 
$$\pr_x[f(x) = 0 \and f'(x)= 1] \leq \pr_x[g(x) =0] \leq \frac{\eps}{m}.$$ 
Finally, we have $s(f') \leq s(f) - (k-1)$.

We can now iteratively apply the above argument as long as the number
of terms is larger than the bound in Equation \eqref{eq:cond-1}. In
each iteration we reduce $s(f)$ by $k-1$. Thus, we
repeat the process at most $m/(k-1)$ times, obtaining an  upper
approximating formula $f_u$ where 
\begin{align*}
f(x) & \leq f_u(x) \ \forall  x \in \zo^n,\\
\pr_x[f(x) \neq f_u(x)] & \leq \frac{m}{k-1} \cdot \frac{\epsilon}{m}  =
\epsilon,\\ 
s(f_u) & \leq \left(w2^{w}\ln\left(\frac{m}{\eps}\right)\right)^w.
\end{align*}

We next describe the construction of the lower approximating
formula $f_\ell$. We start with the  sunflower $T_{i_1}, \cdots,
T_{i_k}$ with core $Y$. Now consider the formula $f''$
obtained from $f$ by dropping one of the terms, say $T_{i_1}$. Then,
$f''(x) \leq f(x)$. Further, the two of them differ only if $f''(x)
=0$ and $f(x) =1$, which happens if $T_{i_1} =1$ whereas $T_{i_j} =0$
for $j \in \{2,\ldots,k\}$. Hence we can bound this probability by
\begin{align*}
\pr_x[f''(x) \neq f(x)] & = \pr_x[T_{i_j} =1]\cdot \pr_x[(\vee_{j=2}^k T_{i_j}) = 0|T_{i_j} =1]\\
& = \frac{1}{2}\pr_x[(\vee_{j=2}^k T_{i_j}\setminus Y) = 0] = \frac{1}{2}\left(1
- \frac{1}{2^w}\right)^{k-1} \leq \frac{\eps}{m}
\end{align*}
where the second inequality holds since by the sunflower property,
conditioning on $T_{i_1} =1$ fixes the core $Y =1$, but does not
affect the other petals. Note that $s(f'') \leq s(f) -1$. We now
iterate this step no more than $m$ times to obtain a formula $f_\ell$ where
\begin{align*}
f_\ell (x) & \leq f(x) \ \forall  x \in \zo^n,\\
\pr_x[f_\ell(x) \neq f(x)] & \leq m \cdot \frac{\epsilon}{m}  =
\epsilon,\\ 
s(f_u) & \leq \left(w2^{w}\ln\left(\frac{m}{\eps}\right)\right)^w.
\end{align*}
\end{proof}

\tref{th:main-simple} is weaker than \tref{th:mainstruct} in the
assumption of unateness, the dependence on $m$ and the dependence on
$w$. We briefly sketch how one can handle the first two issues.

\begin{enumerate}
\item {\bf Unateness.} One can remove
  this assumption by using \lref{lem:monotone-2} which
  guarantees that any \dnf \ formula contains a large sub-formula
  which is unate. The resulting statement already suffices for Corollary
  \ref{cor:logn}, since any width $\log(n)$ \dnf \ can have at most
  $n^{O(\log(n))}$ many clauses.
\item {\bf Dependence on $m$.} The size of the approximators depends
  logarithmically on $m$. One can avoid this by observing that when
  the formula size is large, the error resulting from each step of the
  sparsification is
  tiny. One can use this argument to get a size bound of
  $(2^w\ln(1/\eps))^{O(w)}$ which is independent of $m$. 
\item {\bf Dependence on $w$.} The final bound is exponential in $w^2$ rather than $w$. This
  comes from the $(k-1)^w$ term in the Sunflower Lemma, which we apply
  for $k =2^w$. The question of whether the $w!$ term in the Sunflower Lemma is
  necessary is a well-known open problem in combinatorics. But there
  is a lower bound of $(k-1)^w$ \cite{Jukna}. So even if the lower
  bound were to be right answer, it does not (directly) imply a better
  bound for \tref{th:main-simple}.  
\end{enumerate}

\subsection{Sparsification using Quasi-Sunflowers.}

The main property of the sunflower system we used in \tref{th:main-simple} is that the
formula $g$ on the petals is highly biased towards $1$. As shown by Rossman \cite{Rossman10}, one can guarantee the existence of such
``quasi-sunflower'' systems satisfying this weaker property, even
when the number of terms is much smaller than in the usual sunflower
lemma. We adapt our argument to use quasi-sunflowers
instead of sunflowers, to obtain \tref{th:mainstruct}.

We shall use the notion of quasi-sunflower due to Rossman \cite{Rossman10}.

\begin{Def}(Quasi-Sunflowers, \cite{Rossman10}) A unate \dnf\ formula
  $h = \vee_{i=1}^k T_i$ where $k \geq 2$ is a
  $\gamma$-quasi-sunflower with core $Y = \cap_{j=1}^kT_i$, and petals
  $\{T_i\setminus Y\}_{i=1}^k$ if 
\begin{align*}
\pr_x[\vee_{i =}^k(T_i\setminus Y) =1] \geq 1 - e^{-\gamma}.
\end{align*}
\end{Def}

Quasi-sunflowers extend the notion of a sunflower in the sense that
even though the ``petals'' $(T_{i_j} \setminus Y)$ are not necessarily
disjoint, the probability that none of them is satisfied is
small.  We disallow $k =1$, since otherwise every term is trivially a
quasi-sunflower. Since we insist that no term of a \dnf\ is contained
in another, the petals are non-empty. Hence each petal is satified
with probability at most $1/2$, so every $\gamma$-sunflower has $k =
\Omega(\gamma)$ petals. 

\begin{lemma}
\label{lm:sunflowercnf}
(Quasi-Sunflower Lemma, \cite{Rossman10}) Any unate width $w$
\dnf\ formula with $m$ terms contains a $\gamma(m)$-quasi-sunflower where
\begin{align}
\label{eq:gamma}
\gamma(m) := \frac{1}{5}\left(\frac{m}{w!}\right)^{1/w}.
\end{align}
\end{lemma}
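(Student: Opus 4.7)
The plan is to prove \lref{lm:sunflowercnf} by induction on the width $w$, tracking how the core can be grown by absorbing ``heavy'' literals while the surviving family shrinks in a controlled way. At each stage we either find a literal that occurs in a large fraction of the remaining terms---add it to the core and recurse on width $w-1$---or the terms are so spread out that a direct calculation shows the union of the petals is satisfied with probability at least $1 - e^{-\gamma(m)}$.

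Base case $w=1$. Every term is a single literal, and by unateness they lie on distinct variables. Take the whole family with core $Y = \emptyset$. Under a uniform $x\in \zo^n$ the petals are independent fair coin flips, so $\pr_x[\wedge_i T_i = 0] = 2^{-m} = e^{-m\ln 2} \le e^{-m/5}$, matching $\gamma(m) = m/5$ exactly.

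Inductive step $w \ge 2$. For a literal $\ell$ let $p_\ell$ be the fraction of terms containing $\ell$, set $p^\ast = \max_\ell p_\ell$ with maximizer $\ell^\ast$, and fix a threshold $\tau$ to be tuned. \emph{Case A, $p^\ast \ge \tau$:} restrict to $\mathcal{F}_{\ell^\ast} = \{T : \ell^\ast \in T\}$ and strip $\ell^\ast$ from every term; the result is a width-$(w-1)$ unate formula with at least $\tau m$ terms, so by induction it contains a $\gamma(\tau m, w-1)$-quasi-sunflower. Re-inserting $\ell^\ast$ into every selected term and into the core leaves the petals unchanged, so the quasi-sunflower property is inherited verbatim. \emph{Case B, $p^\ast < \tau$:} no literal is very common, so the events $A_i = \{T_i = 1\}$ are nearly independent. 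Using a Janson-type / second-moment bound, $\pr[\wedge_i \neg A_i] \le \exp(-\Omega(\mu^2/\Delta))$, where $\mu = \sum_i 2^{-|T_i|}$ and $\Delta = \sum_{i\ne j} \pr[A_i\wedge A_j]$ is controlled by the fact that each variable lies in fewer than $\tau m$ terms. This yields the quasi-sunflower property for the full family with core $\cap_i T_i$.

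The main obstacle is choosing $\tau$ so that both cases deliver $\gamma \ge (m/w!)^{1/w}/5$. Case A needs roughly $\tau \gtrsim (w!/m)^{1/w}/w$ (so that $\gamma(\tau m, w-1) \ge \gamma(m,w)$, using the algebraic identity $(w-1)!/(w!)^{(w-1)/w} = (w!)^{1/w}/w$), and Case B imposes the matching upper bound on $\tau$. The Stirling estimate $(w!)^{1/w} \sim w/e$ together with the slack built into the constant $1/5$ makes both bounds compatible; the nontrivial part is avoiding an extra $2^w$ loss in Case B, which is precisely Rossman's improvement over a naive disjoint-petal (Erd\H{o}s--Rado) argument and the reason the sparsification bound in \tref{th:mainstruct} comes out exponential in $w$ rather than in $w2^w$.
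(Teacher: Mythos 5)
You have taken a much more ambitious route than the paper. The paper does not reprove Rossman's quasi-sunflower lemma: in the appendix it simply restates Rossman's set-system formulation (any family of $m$ sets of size at most $w$ with $m > w!\,(2.47\log(1/\gamma))^w$ contains a $\gamma$-sunflower) and then translates it into the DNF language via the direct correspondence $W = \{i : x_i = 0\}$ between a uniformly random subset $W$ of the universe and a uniformly random assignment $x$, together with a one-line check that $e^{-\gamma(m)}$ with $\gamma(m)=\frac{1}{5}(m/w!)^{1/w}$ dominates the $2^{-(m/w!)^{1/w}/2.47}$ coming from Rossman's constants. Your plan, by contrast, is to establish the combinatorial content of Rossman's lemma from scratch by induction on the width.

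The popular-literal dichotomy is indeed the right shape, and your base case and Case A are sound: absorbing a literal into the core leaves the petals untouched, and the threshold you need there, $\tau \gtrsim \frac{1}{w}(w!/m)^{1/w}$, is as you say. The problem is Case B, and it is more than an unworked detail. With $\mu = \sum_i 2^{-|T_i|} \geq m/2^w$ and $\Delta = \sum_{i\ne j}\pr[T_i = T_j = 1]$ bounded by roughly $w\tau m^2 2^{-(w+1)}$ when no literal appears in more than a $\tau$ fraction of terms, the extended Janson bound gives $\pr[\text{no term satisfied}] \le \exp(-\Theta(\mu^2/\Delta))$ with $\mu^2/\Delta = \Theta\big(1/(w\tau 2^w)\big)$. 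Plugging in the Case-A threshold $\tau = \Theta\big(\frac{1}{w}(w!/m)^{1/w}\big)$ yields $\mu^2/\Delta = \Theta\big((m/w!)^{1/w}/2^w\big)$, which is a factor $2^w$ short of the target $\gamma(m)$. You correctly observe that avoiding a $2^w$ loss is the crux, but then claim that "the slack built into the constant $1/5$ makes both bounds compatible"; that is not so — the constant $1/5$ absorbs constant-factor slack, not a quantity exponential in $w$, and as written the two thresholds are only compatible for $w$ up to about $2$ or $3$. Closing this gap requires replacing the crude estimate $\mu \ge m/2^w$ with finer accounting (after earlier Case-A recursions the surviving petals are narrower, inflating $\mu$), which is exactly the nontrivial content of Rossman's argument that your sketch leaves unaddressed. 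If your goal is to prove what the paper actually proves here, it suffices to cite Rossman's set-system lemma and supply the translation; if your goal is to reprove Rossman, Case B needs to be substantially reworked.
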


Rossman states the result in the language of set systems, which we
have rephrased in the language of \dnf s. We show the equivalence of the
two in the appendix. 

The following lemma will be used to analyze a single step of our sparsification. 

\begin{Lem}\label{lem:monotone-1}
Let $g = \vee_{i=1}^m T_i$ be a unate \dnf. Then
\begin{align*}
\pr_x[(T_1 =1) \wedge ((\vee_{i=2}^k T_i) =0)] \leq
\pr_x[(\vee_{i=1}^k T_i) =0].
\end{align*}
\end{Lem}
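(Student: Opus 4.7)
The plan is to first reduce to the monotone case and then apply the Harris/FKG inequality for the product measure twice in opposite directions. Since $g$ is unate, every variable appears with only one polarity, so by flipping the coordinates whose literals are negated (which is a measure-preserving bijection on $\zo^n$) we may assume that every $T_i$ is a conjunction of positive literals, i.e.\ $T_i = \wedge_{j\in S_i} x_j$ for some $S_i\subseteq[n]$. Under this reduction, the event $\{T_1=1\}$ is a monotone-increasing cylinder, while each $\{T_i=0\}$ and hence their intersection $\{\vee_{i=2}^k T_i=0\}$ is a monotone-decreasing event.

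The first step is the opposite-direction Harris inequality: an increasing and a decreasing event are negatively correlated under the uniform product measure, so
\begin{align*}
\pr_x[(T_1=1)\wedge(\vee_{i=2}^k T_i=0)]\ \leq\ \pr_x[T_1=1]\cdot\pr_x[\vee_{i=2}^k T_i=0].
\end{align*}
Next, I invoke the basic structural assumption on the representation, namely that each term $T_1$ is non-empty, which gives $\pr_x[T_1=1]\leq 1/2\leq \pr_x[T_1=0]$. Substituting yields
\begin{align*}
\pr_x[T_1=1]\cdot\pr_x[\vee_{i=2}^k T_i=0]\ \leq\ \pr_x[T_1=0]\cdot\pr_x[\vee_{i=2}^k T_i=0].
\end{align*}

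For the last step, both $\{T_1=0\}$ and $\{\vee_{i=2}^k T_i=0\}$ are monotone-decreasing, so the same-direction Harris inequality gives positive correlation, whence
\begin{align*}
\pr_x[T_1=0]\cdot\pr_x[\vee_{i=2}^k T_i=0]\ \leq\ \pr_x[(T_1=0)\wedge(\vee_{i=2}^k T_i=0)]\ =\ \pr_x[\vee_{i=1}^k T_i=0].
\end{align*}
Chaining the three inequalities proves the lemma. There is no real obstacle here; the only thing to be careful about is the reduction from unate to monotone, since Harris correlation inequalities are naturally stated for monotone events in a product space, and the hypothesis that each $T_1$ is non-empty (from the minimality convention established at the start of the section) is precisely what gives $\pr[T_1=1]\leq \pr[T_1=0]$ and allows the middle step to go through.
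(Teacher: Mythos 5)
Your proof is correct and takes essentially the same route as the paper's: both reduce to the monotone case, both invoke exactly the same pair of correlation inequalities (what you call Harris and the paper calls Kleitman are the same FKG-type statement for the uniform product measure on $\zo^n$), and both close the gap using $\pr[T_1=1]\le 1/2\le\pr[T_1=0]$ from minimality. The only cosmetic difference is that you chain three one-sided bounds directly, whereas the paper phrases the middle step as a ratio comparison before multiplying through; the content is identical.
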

\begin{proof}
Without loss of generality suppose that $g$ is monotone. Since every
term in $g$ is also monotone, Kleitman's lemma \cite[Chapter 6]{AlonS}
implies that
\begin{align*}
\pr_x[(T_1 =0) \wedge ((\vee_{i=2}^k T_i) =0)] \geq \pr_x[T_1 =0]\cdot
\pr_x[(\vee_{i=2}^k T_i) =0]\\
\pr_x[(T_1 =1) \wedge ((\vee_{i=2}^k T_i) =0)] \leq \pr_x[T_1 =1]\cdot
\pr_x[(\vee_{i=2}^k T_i) =0]
\end{align*}
Hence we have
\begin{align*}
\frac{\pr_x[(T_1 =0) \wedge ((\vee_{i=2}^k T_i) =0)]}{\pr_x[T_1 =0]}
\geq 
\pr_x[(\vee_{i=2}^k T_i) =0] \geq 
\frac{\pr_x[(T_1 = 1) \wedge ((\vee_{i=2}^k T_i) =0)]}{\pr_x[T_1 =1]}.
\end{align*}
But this implies that
\begin{align*}
\pr_x[(T_1 = 1) \wedge ((\vee_{i=2}^k T_i) =0)] \leq
\pr_x[(\vee_{i=1}^k T_i) =0]\cdot \frac{\pr_x[T_1 =1]}{\pr_f[T_1 =0]}
\leq \pr_x[(\vee_{i=1}^k T_i) =0]
\end{align*}
where the last inequality  follows because for any (non-empty) term
$T$, 
\begin{align}
\label{eq:T1}
\pr_x[T =1] \leq \frac{1}{2} \leq \pr_x[T = 0].
\end{align}
\end{proof}

The only property of $T_1$ that we use is that $\pr_x[T_1 =1]
\leq \pr_x[T_1 = 0]$. Indeed, we can drop any set of terms $\{T_i\}_{i
  \in S}$ which satisfies  $\pr_x[\vee_{i \in S}T_i =1] \leq
\pr_x[\vee_{i \in S}T_i= 0]$. 


The following is our key technical lemma. It applies to unate
formulae and allows us to reduce the size of formula by (at least) $1$.
 
\begin{Lem}
\label{lem:unate-main}
For every unate width-$w$ \dnf\ formula $g$ of size $m$, there
exist width-$w$ \dnf\ formulae $g_\ell, g_u$ each of size 
at most $m -1$ that are $e^{-\gamma(m)}$ sandwiching
approximators for $g$. 
\end{Lem}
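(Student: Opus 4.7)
The plan is to invoke the Quasi-Sunflower Lemma (\lref{lm:sunflowercnf}) on $g$ to extract a sub-collection of $k \geq 2$ terms $T_{i_1},\ldots,T_{i_k}$ that forms a $\gamma(m)$-quasi-sunflower with core $Y = \bigcap_{j=1}^k T_{i_j}$ and petals $P_j := T_{i_j}\setminus Y$. By construction, $Y$ and $\bigvee_j P_j$ involve disjoint variables, hence are independent under uniform $x$, and the quasi-sunflower condition reads $\pr_x[\bigvee_j P_j = 0] \leq e^{-\gamma(m)}$. Both approximators will be built by editing only these $k$ distinguished terms while leaving the remaining $m-k$ terms of $g$ untouched, so the width $w$ bound is preserved (since $|Y|\leq w$) and the size shrinks by at least one.

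\textbf{Upper approximator.} I would define $g_u$ by replacing the $k$ quasi-sunflower terms with the single term $Y$ (interpreted as the constant $1$ when the core is empty). Since $\bigvee_{j=1}^k T_{i_j} = Y \wedge \bigvee_j P_j \leq Y$, we have $g \leq g_u$ pointwise. The error occurs only when $Y$ is satisfied but none of the petals is, and by independence,
\[
\pr_x[g_u = 1 \wedge g = 0] \leq \pr_x\left[Y = 1 \wedge \bigvee_{j=1}^k T_{i_j} = 0\right] = \pr_x[Y=1]\cdot\pr_x\left[\bigvee_j P_j = 0\right] \leq e^{-\gamma(m)}.
\]
The term count drops by $k-1 \geq 1$, so $s(g_u)\leq m-1$.

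\textbf{Lower approximator.} I would simply delete one quasi-sunflower term, say $T_{i_1}$, giving $g_\ell$ of size $m-1$ with $g_\ell \leq g$. The error is
\[
\pr_x[g_\ell = 0 \wedge g = 1] \leq \pr_x\left[T_{i_1}=1 \wedge \bigvee_{j=2}^k T_{i_j}=0\right].
\]
On the event $T_{i_1}=1$ the core $Y$ is forced to $1$, so $T_{i_j}=0$ for $j\geq 2$ reduces to $P_j = 0$; factoring out the independent event $\{Y=1\}$ upper-bounds the right-hand side by $\pr_x[P_1 = 1 \wedge \bigvee_{j\geq 2}P_j = 0]$. Now I would apply \lref{lem:monotone-1} to the unate sub-\dnf\ $\bigvee_j P_j$ (with $P_1$ in the role of the distinguished term) to conclude
\[
\pr_x\left[P_1=1 \wedge \bigvee_{j\geq 2} P_j = 0\right] \leq \pr_x\left[\bigvee_{j=1}^k P_j = 0\right] \leq e^{-\gamma(m)}.
\]
The only mild subtlety, which I view as the main bookkeeping obstacle, is handling the case when the core $Y$ is empty: there the "term" $Y$ should be read as the constant~$1$ and $g_u$ collapses to the all-ones DNF, which still has size $1 \leq m-1$ and trivially width at most $w$. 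Everything else is an immediate consequence of the quasi-sunflower property and the Kleitman-based correlation bound of \lref{lem:monotone-1}.
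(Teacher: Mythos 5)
Your proof is correct and follows essentially the same route as the paper's: invoke the quasi-sunflower lemma, form $g_u$ by collapsing the quasi-sunflower terms to their core $Y$ (or the constant $1$ if $Y=\emptyset$), form $g_\ell$ by deleting one quasi-sunflower term, and bound both errors by $\pr_x[\bigvee_j P_j = 0] \leq e^{-\gamma(m)}$, using \lref{lem:monotone-1} for the lower side. The only cosmetic difference is that you explicitly factor out the independent event $\{Y=1\}$ (valid because the core and petals are variable-disjoint in a unate formula) where the paper just uses containment $\{g=1,\,g_\ell=0\} \subseteq \{P_1=1 \wedge \bigvee_{j\geq 2}P_j = 0\}$ directly; both yield the same bound.
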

\begin{proof}
Let $g = \vee_{i=1}^m T_i$. \lref{lm:sunflowercnf} guarantees the
existance of a $\gamma(m)$-quasi-sunflower  $h = \vee_{i=1}^kT_{i_j}$
where $\gamma(m)$ is given by Equation \eqref{eq:gamma}. Letting $p(x) =
\vee_{i=1}^k(T_{i_j} \setminus Y)$ be the formula on the petals, we
have $\pr_x[p(x) =0] \leq e^{-\gamma(m)}$. We can write
\begin{align*}
h(x) \ = \ \vee_{j=1}^kT_{i_j} \ = \ Y \wedge \left(\vee_{j=1}^k
(T_{i_j}\setminus Y) \right) \ = \ Y \wedge p(x)
\end{align*}

We get an upper sandwiching \dnf\ formula $g_u: \zo^n \rgta \zo$ from
$g(x)$ by replacing $p(x)$ by the constant $1$, which is equivalent to
replacing $h(x)$ with the core $Y$. It is clear that 
\begin{align*}
g(x) \leq g_u(x), \ \ s(g_u) \leq s(g) - (k-1) \leq s(g) -1.
\end{align*}
Further, 
\begin{align*}
\pr_x[g(x) \neq g_u(x)] & = \pr_x[(g(x) = 0) \wedge (g_u(x)= 1)]\\ 
& \leq \pr_x[p(x) =0] \\
& \leq e^{-\gamma(m)}.
\end{align*}  

We now construct the lower sandwiching approximation.  Let $g_\ell$ be
the formula obtained from $g$ by dropping the term $T_{i_1}$. Then, it
is clear that
$$g_\ell(x) \leq g(x), \ \ s(g_\ell) \leq s(g) -1.$$
Further, 
\begin{align*}
\pr_x[g(x) \neq g_\ell(x)] & = \pr_x[g(x) =1 \wedge g_\ell(x) = 0]\\ 
& \leq \pr_x[ ((T_{i_1}\setminus Y) =1)
  \wedge (\vee_{j=2}^k(T_{i_j}\setminus Y) ) = 0]\\
& \leq \pr_x[p(x) = 0]  \hspace{2cm} \text{(By \lref{lem:monotone-1})}\\
& \leq e^{-\gamma(m)}. 
\end{align*}
\end{proof}

One can prove Theorem \ref{th:mainstruct} for unate \dnf s by
repeated applications of this Lemma. To handle the general case, we
use the following simple lemmas to reduce the problem of
constructing sandwiching approximations to the unate case. 

\begin{Lem}
\label{lem:or}
Let $f,g,h:\zo^n \rgta \zo$ be such that  $f = g
\vee h$. Let $g_\ell, g_u$ be $\eps$-sandwiching approximators for
$g$. Then $g_\ell \vee h$ and $g_u \vee h$ are $\eps$-sandwiching
approximators for $f$. 
\end{Lem}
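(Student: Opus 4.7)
The plan is to verify the two required properties of a sandwiching approximator separately for each of $g_\ell \vee h$ and $g_u \vee h$, reducing each to the corresponding property of $g_\ell, g_u$. The proof should be short; no obstacle is anticipated.

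First I would establish the pointwise inequalities. By hypothesis $g_\ell(x) \leq g(x) \leq g_u(x)$ for every $x \in \zo^n$. Since the OR operation is monotone in each argument, taking the OR of both sides with $h(x)$ preserves inequalities, giving
\[
g_\ell(x) \vee h(x) \;\leq\; g(x) \vee h(x) \;=\; f(x) \;\leq\; g_u(x) \vee h(x).
\]
This handles the sandwiching clause of the definition.

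Next I would bound the error probabilities. The key observation is that if $g_\ell$ and $g$ agree on an input $x$, then so do $g_\ell \vee h$ and $g \vee h = f$. Equivalently, the event $\{(g_\ell \vee h)(x) \neq f(x)\}$ is contained in the event $\{g_\ell(x) \neq g(x)\}$, so by monotonicity of probability
\[
\Pr_x[(g_\ell \vee h)(x) \neq f(x)] \;\leq\; \Pr_x[g_\ell(x) \neq g(x)] \;\leq\; \eps.
\]
The same argument, with $g_u$ in place of $g_\ell$, yields the corresponding bound for the upper approximator. Combined with the pointwise inequalities established above, this shows that $g_\ell \vee h$ and $g_u \vee h$ are $\eps$-sandwiching approximators for $f$, completing the proof.
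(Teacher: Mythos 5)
Your proof is correct and follows essentially the same route as the paper: establish the pointwise inequalities by monotonicity of OR, then bound the error of $g_\ell \vee h$ (respectively $g_u \vee h$) by reducing to the error of $g_\ell$ (respectively $g_u$) against $g$. The one small difference is stylistic: you argue via the clean event inclusion $\{(g_\ell \vee h)(x) \neq f(x)\} \subseteq \{g_\ell(x) \neq g(x)\}$, which does not even use the sandwiching direction, while the paper explicitly expands the mismatch event as $(g_\ell(x) = 0) \wedge (g(x) = 1) \wedge (h(x) = 0)$ before dropping the $h(x) = 0$ condition; both are valid and yield the same bound.
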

\begin{proof}
It is easy to see that for every $x \in \zo^n$,
$$g_\ell(x)\vee h(x) \leq g(x)\vee h(x) \leq g_u(x)\vee h(x).$$
We bound the approximation error for $g_\ell
\vee h$, the proof for $g_u \vee h$ is similar.
\begin{align*}
\pr_x[(g_\ell(x)\vee h(x)) \neq (g(x)\vee h(x))] & = \pr_x[(g_\ell(x)\vee
  h(x) = 0) \wedge(g(x)\vee h(x) =1)] \\
& = \pr_x[(g_\ell(x) =0) \wedge (g(x) = 1) \wedge(h(x) =0)] \\
& \leq \pr_x[(g_\ell(x) =0) \wedge (g(x) = 1)] \\
& \leq \eps.
\end{align*}
\end{proof}

\begin{Lem}\label{lem:monotone-2}
For every width $w$ \dnf\ $f = \vee_{i=1}^mT_i$ of size $m$, there
exists $S \subseteq [m]$ where $|S| \geq m/2^w$ such that
the formula $g = \vee_{j \in S} T_{i_j}$ is unate.
\end{Lem}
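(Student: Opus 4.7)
The plan is a one-line probabilistic/averaging argument. For each polarity assignment $\sigma \in \{+,-\}^n$, I would call a term $T_j$ \emph{consistent with $\sigma$} if every literal appearing in $T_j$ matches the polarity chosen by $\sigma$ (that is, $T_j$ contains $x_i$ only when $\sigma_i = +$ and $\bar x_i$ only when $\sigma_i = -$). Letting $S_\sigma \subseteq [m]$ be the set of indices of consistent terms, the sub-formula $g_\sigma = \vee_{j \in S_\sigma} T_j$ uses each variable with at most one polarity by construction, so it is automatically unate. So the whole task reduces to finding a single $\sigma$ for which $|S_\sigma|$ is large.

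To produce such a $\sigma$, I would pick $\sigma \in \{+,-\}^n$ uniformly at random. A fixed term $T_j$ has at most $w$ literals, and it is consistent with $\sigma$ precisely when $\sigma$ matches the polarity of each of the $|T_j| \leq w$ variables occurring in $T_j$. Since these are independent coin flips, this happens with probability exactly $2^{-|T_j|} \geq 2^{-w}$. Linearity of expectation then gives
$$
\ex_\sigma[|S_\sigma|] \;=\; \sum_{j=1}^m \pr_\sigma[j \in S_\sigma] \;\geq\; \frac{m}{2^w},
$$
so there must exist a specific $\sigma^\star$ achieving $|S_{\sigma^\star}| \geq m/2^w$, and $S := S_{\sigma^\star}$ is the desired subset.

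I do not anticipate any real obstacle: the lemma is essentially a counting statement, and neither the probabilistic step nor the unateness verification requires anything beyond the definitions. The only minor point to be careful about is that variables not appearing in any term of $S$ can be assigned an arbitrary polarity in $\sigma$ without affecting unateness, so the output is well-defined regardless of how $\sigma^\star$ acts on such variables. The $2^w$ loss is essentially tight in the worst case, so no significant improvement should be expected from a more delicate argument.
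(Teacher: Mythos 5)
Your proposal is correct and is essentially identical to the paper's own proof: the paper picks a random set $S$ of literals by including exactly one of $x_i, \bar x_i$ for each variable (i.e., a random polarity assignment), keeps the terms all of whose literals lie in $S$, and applies linearity of expectation to conclude that the expected number of surviving terms is at least $m/2^w$.
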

\begin{proof}
Pick a random set of literals $S$ as follows: for
each of the variables $x_i$ add one of $x_i$ or $\bar{x}_i$ to $S$
uniformly at random. Let $g_S$ be the sub-formula of $f$ formed of
terms containing only literals from $S$. Then, $g_S$ is always unate. 

Each term has at least a $2^{-w}$ chance of being in $g_S$. By
linearity of expectation  
$$\ex_S[s(g_S)] \geq \frac{m}{2^w}.$$ 
\end{proof}

We will use the following asymptotic bound whose proof is a
calculation and is deferred to the appendix. 
\begin{fact}\label{fct:tedious}
For $\gamma:\R_+ \rgta \R_+$ defined by \eref{eq:gamma}, $W =
(2w)^{3w}(50 \log(1/\epsilon))^w$, and $\epsilon \leq 1/4$, 
$$\sum_{j=W +1}^m e^{-\gamma(j/2^w)} \leq \eps.$$
\end{fact}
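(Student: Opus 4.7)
The plan is to bound the tail sum by an integral and apply a standard estimate for the upper incomplete gamma function. Setting $C = \tfrac{1}{5}(2^w w!)^{-1/w}$ so that $\gamma(j/2^w) = C j^{1/w}$, the summand $e^{-Cj^{1/w}}$ is monotonically decreasing in $j$, so
\begin{align*}
\sum_{j=W+1}^m e^{-\gamma(j/2^w)} \;\leq\; \int_W^\infty e^{-C j^{1/w}}\, dj.
\end{align*}
Substituting $u = C j^{1/w}$, equivalently $j = 2^w w!(5u)^w$ and $dj = w \cdot 2^w w! \cdot 5^w \cdot u^{w-1}\,du$, the right-hand side becomes $w \cdot 2^w w! \cdot 5^w \cdot \Gamma(w, u_0)$, where $\Gamma$ is the upper incomplete gamma function and $u_0 = \gamma(W/2^w)$.

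Next I would compute a clean lower bound on $u_0$. Expanding $W = 2^{3w} w^{3w} 50^w \log^w(1/\epsilon)$ and using $w! \leq w^w$, I get
\begin{align*}
\frac{W}{2^w w!} \;\geq\; \frac{2^{3w} w^{3w} 50^w \log^w(1/\epsilon)}{2^w w^w} \;=\; (200\, w^2 \log(1/\epsilon))^w,
\end{align*}
so $u_0 \geq 40\, w^2 \log(1/\epsilon)$. In particular $u_0 \geq 2w$ (since $\epsilon \leq 1/4$), which puts us in the regime where the standard tail estimate $\Gamma(w, u_0) \leq 2 u_0^{w-1} e^{-u_0}$ applies (this follows by integration by parts once, using $u_0 \geq 2(w-1)$).

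Combining the two bounds gives
\begin{align*}
\sum_{j=W+1}^m e^{-\gamma(j/2^w)} \;\leq\; 2w \cdot 2^w w! \cdot 5^w \cdot u_0^{w-1} \cdot e^{-u_0}.
\end{align*}
The final step is to verify this is at most $\epsilon$. The factor $e^{-u_0} \leq \epsilon^{40 w^2}$ is extremely small, while the prefactor $2w \cdot 2^w w! \cdot 5^w \cdot u_0^{w-1}$ grows only polynomially in $w$ and polylogarithmically in $1/\epsilon$; explicitly, it is at most $(C' w \log(1/\epsilon))^{O(w)}$ for an absolute constant $C'$, which is $\epsilon^{-o(w^2)}$. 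Thus the product is dominated by $\epsilon^{39 w^2} \leq \epsilon$ for $w \geq 1$ and $\epsilon \leq 1/4$. The whole argument is essentially routine; the only potentially delicate step is making sure the lower bound on $u_0$ is strong enough to swallow the $w! \cdot 5^w$ prefactor, which is precisely why the constant $50$ and the $(2w)^{3w}$ (rather than merely $w^w$) appear in the definition of $W$.
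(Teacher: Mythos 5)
Your proof follows essentially the same strategy as the paper's: bound the tail sum by an integral, change variables so the integral becomes an upper incomplete gamma function, and apply an integration-by-parts tail estimate (your $\Gamma(w,u_0) \leq 2u_0^{w-1}e^{-u_0}$ for $u_0 \geq 2(w-1)$ plays the role of the paper's $\int_\theta^\infty x^k e^{-x}\,dx \leq (k+1)\theta^k e^{-\theta}$ for $\theta \geq k$), concluding that the $e^{-u_0}$ factor overwhelms the prefactor. The only cosmetic differences are that the paper first simplifies to the lower bound $\gamma(j/2^w) \geq j^{1/w}/(10w)$ before integrating while you carry the exact constant, and your closing assertion that the product is "dominated by $\epsilon^{39w^2}$" is slightly loose at $w=1$ (the prefactor $20$ need not be $\leq \epsilon^{-1}$), but this is harmless and mirrors the paper's own "can be checked numerically" wrap-up.
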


We can now prove \tref{th:mainstruct}:
\begin{proof}
Let $f= \vee_{i=1}^m T_i$. By applying \lref{lem:monotone-2}, we can
write $f = g \vee h$ where $g$ is unate and has $m' \geq m/2^w$
terms. By \lref{lem:unate-main}, there exist sandwiching approximators $g_\ell,g_u$ each of width $w$ and size at most $m'-1$, whose error is bounded by
$$ e^{-\gamma(m')} \leq e^{-\gamma(m/2^w)}.$$ 
By \lref{lem:or}, $f^1_\ell = g_\ell \vee h$ and $f^1_u= g_u \vee h$ are
$e^{-\gamma(m')}$ sandwiching approximations for $f$. Further
$$s(f^1_\ell) = s(g_\ell) +s(h) \leq s(g) -1 +s(h) \leq s(f) -1$$
and similarly $s(f^1_u) \leq s(f) -1$. 

We iterate this construction separately for the upper and lower
approximator till the size of the formulae drops below $W$. This gives
the sequence
\begin{align*}
f(x) \leq f_u^1(x) \cdots \leq f_u^{k_u}(x) := f^u(x)\\
f(x) \geq f_\ell^1(x) \cdots \geq f_\ell^{k_\ell}(x) := f_\ell(x)
\end{align*}
where $s(f_\ell), s(f_u) \leq W$.
We can bound the error of these approximators by
\begin{equation}
\sum_{j=W +1}^m e^{-\gamma(j/2^w)} \leq \eps.
\end{equation}
where the inequality is from Fact \ref{fct:tedious}.
This completes the proof of \tref{th:mainstruct}.
\end{proof}

\section{Fooling Small-Width DNFs}
\label{sec:prg}

We next use our sparsification result to construct a pseudorandom
generator for small-width \dnf s, obtaining an exponential improvement
in terms of the width over the generator of Luby and Velickovic
\cite{LubyV96}. We restate \tref{th:foolswidth} with
the exact asymptotics for $r$.

\begin{theorem}\label{th:foolswidth2}
For all $\delta$, there exists an explicit generator $G:\zo^r \rgta
\zo^n$ that $\delta$-fools all width $w$ \dnf s and has seed-length  
\begin{align*}
r & = O\left(w^2\log^2(w) + w\log(w)\log\left(\frac{1}{\delta}\right)
+ \log\log(n)\right)
\end{align*}
\end{theorem}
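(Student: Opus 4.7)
The plan is to combine the sparsification theorem with the \prg\ of De, Etesami, Trevisan, and Tulsiani \cite{DeETT10} for small-size \dnf s, critically leveraging the fact that \tref{th:mainstruct} yields sandwiching approximators. I first apply \tref{th:mainstruct} with error $\delta/3$ to $f$, obtaining width-$w$ \dnf s $f_\ell \leq f \leq f_u$ of size $M = (w\log(1/\delta))^{O(w)}$. Pointwise monotonicity
\begin{align*}
\ex_y[f_\ell(G(y))] \ \leq \ \ex_y[f(G(y))] \ \leq \ \ex_y[f_u(G(y))]
\end{align*}
combined with $\bias(f_u) - \bias(f_\ell) \leq 2\delta/3$ implies that any generator $G$ which $(\delta/3)$-fools both $f_\ell$ and $f_u$ will automatically $\delta$-fool $f$. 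Hence it suffices to construct a \prg\ for the class of width-$w$ \dnf s with at most $M$ terms.

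For this subproblem I invoke \cite{DeETT10}: an $\eps$-biased distribution with appropriately small bias $(\delta/3)$-fools every $M$-term \dnf. Substituting the bound $\log M = O(w\log w + w\log\log(1/\delta))$ from the sparsification bound, the required $\log(1/\eps)$ contributes the $w^2\log^2 w$ and $w\log w\cdot\log(1/\delta)$ terms of the target seed length.

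Finally, to obtain only $\log\log n$ (rather than $\log n$) dependence on the universe size, I build the \prg\ as a composition of (i) an almost $k$-wise independent hash $h\colon[n]\to[N']$ into a universe of size $N' = \poly(wM/\delta)$ with $k = O(w)$, and (ii) a small-bias distribution on $\zo^{N'}$ with bias as above. Standard constructions of such hashes have seed length $O(\log\log n) + \poly(w,\log(1/\delta))$, and with probability at least $1-\delta/6$ the hash $h$ is injective on the $\leq wM$ variables that the sparsified \dnf s depend on. A hybrid argument, using the near-injectivity of $h$ and the sandwiching property from the first step, then shows that the composed distribution $(\delta/3)$-fools both $f_\ell$ and $f_u$. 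The main obstacle will be balancing the parameters of the hash and the small-bias generator so that the three sources of error (sparsification, hashing collisions, small-bias fooling) each remain bounded by $\delta/3$ while the seeds add up to the claimed $O(w^2\log^2 w + w\log(w)\log(1/\delta) + \log\log n)$.
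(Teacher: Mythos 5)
Your high-level structure matches the paper exactly: sparsify via \tref{th:mainstruct} to get width-$w$, size-$M$ sandwiching approximators with $M=(w\log(1/\delta))^{O(w)}$, then observe that fooling $f_\ell,f_u$ suffices to fool $f$, then invoke \cite{DeETT10}. Where you diverge is in how you get the $\log\log n$ dependence. You use the original form of the De et al.\ result (full $\eps$-biased distributions over $\zo^n$, seed $O(\log n+\log(1/\eps))$) and then patch the $\log n$ via an explicit almost-$k$-wise hash $h:[n]\to[N']$ followed by a small-bias string over $\zo^{N'}$, paying a collision-probability term. The paper avoids all of this: it observes that De et al.'s sandwiching polynomials have degree at most $k=O(w\log(m/\delta))$, so $(k,\eps)$-biased distributions already suffice (\tref{th:foolsmallbias}), and Naor--Naor give $(k,\eps)$-biased spaces with seed $O(k+\log(1/\eps)+\log\log n)$ directly, so the construction is a single explicit object rather than a two-stage composition. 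Your route can be made to work --- with $N'=\poly(wM/\delta)$ a pairwise-collision bound gives failure probability $\leq\delta/6$, and a Naor--Naor almost-$O(\log N')$-wise independent string of length $n\log N'$ gives the hash with seed $O(w^2\log w+w\log(1/\delta)+\log\log n)$, within budget --- but note that the composed distribution is \emph{not} itself $(k,\eps)$-biased (any two variables in the same bucket have perfectly correlated outputs), so you must make the failure-probability accounting explicit rather than appeal to any black-box small-bias statement, and the ``Standard constructions of such hashes have seed length $O(\log\log n)+\poly(w,\log(1/\delta))$'' step needs to be instantiated concretely, since a genuinely $k$-wise independent hash over $[n]$ costs $\Omega(\log n)$. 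The missing observation that would have let you skip the hashing detour is precisely the degree bound $k=O(w\log(m/\delta))$ on the De et al.\ approximators, which is what converts ``$\eps$-biased over $\zo^n$'' into ``$(k,\eps)$-biased over $\zo^n$'' and is where the $\log\log n$ truly comes from in the paper.
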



We prove the theorem as follows: we first use our sparsification
result to reduce the case of fooling width $w$ \dnf s with an
arbitrary number of terms to that of fooling width $w$ DNFs with
$2^{\tilde{O}(w)}$ terms and then apply the recent results due to De \etal~\cite{DeETT10} showing that small-bias spaces fool DNFs with few terms.

\begin{definition}[$k$-wise $\epsilon$-biased spaces]
 A distribution $\calD$ over $\zo^n$ is said to be
 $(k,\epsilon$)-biased space if for every non-empty subset $I
 \subseteq [n]$ of size at most $k$,
\begin{align*}
\left|\pr_{x \lfta \calD}[ \oplus_{i \in I} x_i = 1] -
\frac{1}{2}\right| \leq \epsilon.
\end{align*} 
\end{definition}
Naor and Naor \cite{NaorN93} constructed explicit $(k,\epsilon)$-biased spaces that
require only $O(k + \log(1/\epsilon) + \log\log n)$ bits to sample
from.  

Next, we need the following result of De \etal ~\cite{DeETT10}
showing that $(k,\eps)$-biased spaces fool \dnf s for suitable choices
of $k$ and $\epsilon$.  

\begin{theorem}\cite[Theorem 4.1]{DeETT10}
\label{th:foolsmallbias}
For every $\delta > 0$, every \dnf \ 
with width $w$ and size $m$ is $\delta$-fooled by
$(k,\epsilon)$-biased distributions for 
\begin{align*}
k  & = O\left(w\log\left(\frac{m}{\delta}\right)\right), \\
\log\left(\frac{1}{\epsilon}\right) & =
O\left(w\log(w)\log\left(\frac{m}{\delta}\right)\right). 
\end{align*}
\end{theorem}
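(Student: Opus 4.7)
My approach is the standard sandwiching-polynomial method for fooling $f$ with small-bias distributions: (i) construct polynomials $p_\ell \leq f \leq p_u$ pointwise of low degree and small Fourier $\ell_1$-weight, and (ii) use that $(k,\epsilon)$-biased distributions automatically fool any polynomial of degree $\leq k$ up to an error proportional to its Fourier $\ell_1$-weight.

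For the Fourier reduction, suppose we exhibit polynomials $p_\ell, p_u : \zo^n \rgta \R$ satisfying $p_\ell(x) \leq f(x) \leq p_u(x)$ pointwise, $\ex_U[p_u - p_\ell] \leq \delta/2$ under the uniform distribution $U$, total degree $\leq k$, and total Fourier $\ell_1$-weight at most $W$. For any polynomial $p$ of degree $\leq k$ and any $(k,\epsilon)$-biased $\calD$, Fourier expansion yields
\begin{align*}
|\ex_\calD[p] - \ex_U[p]| \ \leq\ \sum_{\emptyset \neq S,\, |S|\leq k} |\wh{p}(S)|\cdot \bigl|\ex_\calD[\chi_S] - \ex_U[\chi_S]\bigr| \ \leq\ \epsilon W.
\end{align*}
Choosing $\epsilon = \delta/(2W)$ and applying the bound to $p_\ell$ and $p_u$, the pointwise sandwich transfers: $\ex_\calD[f] \leq \ex_\calD[p_u] \leq \ex_U[p_u] + \delta/2 \leq \ex_U[f] + \delta$, and symmetrically for the lower bound. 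Since $\log(1/\epsilon) = \log W + O(\log(1/\delta))$, the claimed parameters demand sandwichers with $k = O(w\log(m/\delta))$ and $\log W = O(w\log w \cdot \log(m/\delta))$.

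The construction of such sandwichers adapts the Bazzi-Razborov sandwiching-polynomial framework to width-$w$ \dnf s. Starting from the identity $1 - f = \prod_{i=1}^m(1 - T_i)$, one expands this product and truncates at depth $t = \Theta(\log(m/\delta))$; each surviving monomial is a product of $|S|\leq t$ width-$w$ conjunctions and is itself either zero (if the literal sets conflict) or a single conjunction of width $\leq wt = O(w\log(m/\delta))$, giving the required degree bound. Since naive truncation of inclusion-exclusion provides only an expectation bound and not a pointwise one (Bonferroni fails pointwise for $t \geq 2$), the construction must be augmented by Razborov's technique of adding and subtracting carefully designed low-degree corrections so as to produce genuine pointwise upper and lower bounds on $f$ while keeping the expected gap below $\delta/2$.

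The main obstacle is bounding the Fourier $\ell_1$-weight of the resulting sandwichers by $\exp(O(w\log w \cdot \log(m/\delta)))$. The naive bound — summing the unit $\ell_1$-weight of each surviving conjunction over all index sets $S$ of size $\leq t$ — is $\binom{m}{\leq t}$ and gives $\log W = O(\log m \cdot \log(m/\delta))$, which is too large when $m$ is large compared to $2^w$. To obtain the $w\log w$ dependence, one exploits that many distinct $S$'s produce the same underlying monomial on literals and therefore combine in the expansion: since each $T_i$ involves only $w$ literals, the number of truly distinct conjunctions arising at level $t$ is bounded by $\exp(O(wt\log w))$ rather than $\binom{m}{t}$, with their combined coefficients controlled by a careful cancellation argument in the spirit of Razborov's simplification of Bazzi's theorem, specialized to the bounded-width setting. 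Carrying out this refined counting and the pointwise sandwich construction simultaneously is the technical heart of the proof.
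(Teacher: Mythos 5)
Your reduction step is fine and is exactly the framework the paper has in mind: small-bias distributions fool any polynomial of degree at most $k$ up to $\epsilon$ times its Fourier $\ell_1$-weight, so it suffices to exhibit pointwise sandwiching polynomials for $f$ with degree $O(w\log(m/\delta))$, $\ell_1$-weight $(m/\delta)^{O(w\log w)}$, and expected gap $O(\delta)$. Note, however, that the paper does not prove this statement at all: it is quoted from De, Etesami, Trevisan and Tulsiani \cite{DeETT10}, and the paper merely observes that rerunning their proof while keeping $w$ and $m$ as separate parameters yields approximators with degree $k = O(w\log(m/\delta))$ and $\ell_1$-norm $(m/\delta)^{O(w\log w)}$. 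So the entire content of the theorem lives in the construction of those sandwichers, which is precisely the part your proposal does not carry out.

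Moreover, the route you sketch for that construction has a concrete problem. Truncating the expansion of $\prod_{i=1}^m(1-T_i)$ at depth $t$ and then bounding the $\ell_1$-weight by counting ``truly distinct conjunctions'' cannot give a bound of $\exp(O(wt\log w))$: if the terms are pairwise disjoint (a width-$w$ read-once DNF with $m = n/w$ terms), all $\binom{m}{t}$ level-$t$ products are distinct conjunctions, so the number of distinct monomials genuinely grows with $m$ and $n$, not just with $w$ and $t$. Nor do the signed coefficients obviously cancel: for disjoint terms the combined coefficient of a level-$|S_0|$ monomial involves sums of the form $\sum_j (-1)^j\binom{m-|S_0|}{j}2^{-wj}$, which can be of order $(m2^{-w})^{t-|S_0|}$ when $m \gg 2^w$. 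This is exactly why Bazzi, Razborov and De et al.\ do not use naive truncated inclusion--exclusion but a more delicate construction (and why De et al.'s width-sensitive $\ell_1$ analysis is a real contribution). As written, your ``careful cancellation argument'' is an appeal to the very statement being proved; either carry out the De et al.\ construction with $w$ and $m$ tracked separately, or simply cite their Theorem 4.1 as the paper does.
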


De \etal \ prove the above statement only for the case of $k = n$, and they use the bound $w \leq \log(m/\delta)$. Their proof proceeds by
constructing small $\ell_1$-norm sandwiching approximators. The above
statement is obtained by repeating their proof keeping $w$ and $m$
separate, and bounding both the degree and the $\ell_1$ norm of the
resulting approximators. It is easy to see from their proof 
that the approximators have  degree $k \leq O(w\log(m/\delta))$ and
$\ell_1$-norm bounded $(m/\delta)^{O(w\log(w))}$. 
 
We use the fact that to fool a class of functions, it suffices to
fool sandwiching approximators \cite{BGGP,Bazzi09}. 

\begin{fact}\label{fct:foolsandwich}
Let $\mc{F}, \mc{G}$ be classes of functions such that every $f \in
\mc{F}$ has $\eps$-sandwiching approximators in $\mc{G}$. Let $G:
\zo^r \rgta \zo^n$ be a pseudorandom generator that $\eps$-fools
$\mc{G}$. Then  $G$ $(\eps + \delta)$-fools $\mc{F}$.
\end{fact}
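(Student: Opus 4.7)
The plan is a standard sandwiching argument. Given $f\in\mathcal{F}$, I will invoke the hypothesis to obtain sandwiching approximators $f_\ell, f_u \in \mathcal{G}$ with $f_\ell(x) \le f(x) \le f_u(x)$ pointwise and $\pr_x[f_\ell(x)\ne f(x)], \pr_x[f_u(x)\ne f(x)] \le \eps$. Interpreting Boolean functions as $\{0,1\}$-valued real functions, the sandwiching inequalities give $\ex_x[f_u(x)] - \ex_x[f(x)] \le \eps$ and $\ex_x[f(x)] - \ex_x[f_\ell(x)] \le \eps$, since the two functions differ only on a set of measure at most $\eps$ and the pointwise gap is at most $1$. (I'm reading the statement with the natural correction that $G$ is assumed to $\delta$-fool $\mathcal{G}$; the conclusion then matches the claimed $(\eps+\delta)$ bound.)

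Next I will transfer the pointwise inequalities to the pseudorandom distribution: since $f_\ell \le f \le f_u$ on every input, this inequality remains valid when both sides are averaged over $y$ drawn uniformly from $\zo^r$ and we substitute $G(y)$, yielding
\begin{equation*}
\ex_y[f_\ell(G(y))] \;\le\; \ex_y[f(G(y))] \;\le\; \ex_y[f_u(G(y))].
\end{equation*}
Because $f_\ell, f_u \in \mathcal{G}$ and $G$ $\delta$-fools $\mathcal{G}$, each of these endpoint expectations is within $\delta$ of its uniform counterpart.

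Chaining these bounds gives the desired conclusion. On the upper side,
\begin{equation*}
\ex_y[f(G(y))] \;\le\; \ex_y[f_u(G(y))] \;\le\; \ex_x[f_u(x)] + \delta \;\le\; \ex_x[f(x)] + \eps + \delta,
\end{equation*}
and symmetrically $\ex_y[f(G(y))] \ge \ex_x[f(x)] - \eps - \delta$ via $f_\ell$. Combining the two inequalities yields $|\ex_y[f(G(y))] - \ex_x[f(x)]| \le \eps + \delta$, as required. There is no real obstacle here: the proof is a four-line chase, and the only thing to be careful about is to apply the fooling hypothesis separately to $f_\ell$ and $f_u$ (both of which lie in $\mathcal{G}$), not directly to $f$.
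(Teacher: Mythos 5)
Your proof is correct and is the standard sandwiching argument; the paper does not prove this Fact, instead citing prior work (Bazzi and others), so your argument is essentially the canonical one those references give. You also correctly spotted the typo in the statement: the hypothesis should read that $G$ $\delta$-fools $\mathcal{G}$ (not $\eps$-fools), which is what makes the $(\eps+\delta)$ conclusion come out right, and your derivation is clean once that correction is made.
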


We are now ready to prove the main result of this section.

\begin{proof}[Proof of \tref{th:foolswidth2}]

Recall that $\dnf(w,n)$ denotes the class of all width $w$ \dnf\ s on
$n$ variables. Let $\mc{G} \subset \dnf(w,n)$ denote the subset of all formulae
with size at most $m = (w\log(1/\delta))^{cw}$ for some sufficiently large
constant $c$. By Theorem \ref{th:mainstruct}, every $f \in \dnf(w,n)$ can
has $\delta$-sandwiching approximators in $\mc{G}$. 

Next, we apply \tref{th:foolsmallbias} with $m =
(w\log(1/\delta))^{cw}$. Note that
\begin{align*}
\log\left(\frac{m}{\delta}\right) = O\left(w\log(w) +
\log\left(\frac{1}{\delta}\right)\right).  
\end{align*} 
So we conclude that $(k, \eps)$-biased
distributions $\delta$-fool $\mc{G}$ where
\begin{align*}
k & = O\left(w^2\log(w) + w\log\left(\frac{1}{\delta}\right)\right)\\
\log\left(\frac{1}{\eps}\right) & = O\left(w^2\log^2w +
w\log(w)\log\left(\frac{1}{\delta}\right)\right).
\end{align*}
Note that we can sample from such a distribution using a seed of
length
\begin{align*}
r & = O\left(k + \log\left(\frac{1}{\eps}\right) + \log\log(n)\right)\\ 
& = O\left(w^2\log^2(w) + w\log(w)\log\left(\frac{1}{\delta}\right)
+ \log\log(n)\right)
\end{align*}
Finally, by Fact \ref{fct:foolsandwich}, such distributions $2\delta$
fool the class $\dnf(w,n)$.
\end{proof}

\section{Deterministic Counting for DNFs}
\label{sec:dnf-count}

We now use the \prg\ for small-width \dnf s from the previous section
in the Luby-Velickovic counting algorithm \cite{LubyV96}.  The
better seed-length means that we do not need to balance various
parameters as carefully, and can redo their arguments with simpler and
better settings of parameters.   

The input to our algorithm is a $\dnf$ formula $f = \vee_{j=1}^m T_j$
on $n$ variables with size $m$ and width $w$, and the output is an
$\eps$-additive approximation to $\bias(f)$. We set the following parameters
\begin{align*}
k := \log\left(\frac{w}{\epsilon}\right), \  t := \frac{w}{k}, \ w' =
6k, \delta =\frac{\eps}{t}
\end{align*}

Let $\hh = \{h:[n] \rgta [t]\}$ be a family of $k$-wise independent
hash functions. Fix a  hash function $h \in \hh$ and let $B_j = \{i:
h(i) = j\}$. We say the term $T_i$ bad for $h$ if 
$$\max_{j \in [t]}|B_j \cap T_i| > w'$$ 
where we view $T_i$ as a set of variables. Let $f_h$ be the
formula obtained from $f$ by dropping all terms that are bad for $h$.  

Let $G:\zo^r \rgta \zo^n$ be the generator from \tref{th:foolswidth} that fools
$\dnf(w',n)$ with error at most $\delta$. Define a new generator
$G_h:(\zo^{r})^t \rgta \zo^n$ as follows: 
\begin{equation}
  \label{eq:mz}
  G_h(z_1,\ldots,z_t) = x, \text{ where for $j \in [t]$, } x_{|B_j} = G(z_j).
\end{equation}
Thus $G_h$ applies an independent copy of $G$ to each 
bucket defined by the hash function $h$. 

We now state the counting algorithm:

\algo{Algorithm {\sf DNFCount}}{

For each $h \in \hh$,\\
$~~~$ Drop all bad terms for $h$ from $f$ to obtain $f_h$.\\
$~~~$ By enumeration over all $z \in \zo^{rt}$, compute
\begin{align}
\label{eq:ph}
p_h = \pr_{z \in \zo^{rt}}[f_h(G_h(z)) =1].
\end{align}
Return $p_\hh = \max_{h \in \hh}p_h$.
}

We need the following lemma about $k$-wise independent hash functions.
\begin{lemma}\label{lm:hashing}
  Let $\hh:[n] \rgta [t]$ be a $k$-wise independent family of hash
  functions. Then, for every set $S \subseteq [n]$ of size $|S| \leq k
  t$, and every $j \in [t]$,
\[ \pr_{h \in_u \hh}\left[ \,|h^{-1}(j) \cap S| \geq 6k\,\right] \leq 2^{-k}.\] 
\end{lemma}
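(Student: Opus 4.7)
The plan is to reduce this to a standard moment-based tail bound for sums of $k$-wise independent Bernoulli random variables. For each $i \in S$, I would define $X_i = \ind{h(i) = j}$ and set $X = \sum_{i \in S} X_i = |h^{-1}(j) \cap S|$. Since $\hh$ is $k$-wise independent, the family $\{X_i\}_{i \in S}$ consists of $k$-wise independent Bernoullis each with mean $1/t$, and so $\mu := \ex[X] = |S|/t \leq k$.

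Next I would invoke the standard $k$-th moment Chernoff-type inequality for $k$-wise independent sums (Schmidt--Siegel--Srinivasan, or Bellare--Rompel): for even $k$,
$$\ex\!\left[(X - \mu)^k\right] \;\leq\; C \bigl(k\mu + k^2\bigr)^{k/2}$$
for some absolute constant $C$. The proof of such an inequality proceeds by expanding $(X - \mu)^k$ and using $k$-wise independence to evaluate each monomial of degree at most $k$ exactly as if the $X_i$ were fully independent, reducing to a standard moment computation for true Bernoullis. Since $\mu \leq k$, the event $\{X \geq 6k\}$ forces $X - \mu \geq 5k$, so applying Markov's inequality to the nonnegative random variable $(X - \mu)^k$ yields
$$\pr[X \geq 6k] \;\leq\; \frac{\ex[(X - \mu)^k]}{(5k)^k} \;\leq\; C \left(\frac{k\mu + k^2}{25 k^2}\right)^{k/2} \;\leq\; C \left(\frac{2}{25}\right)^{k/2} \;\leq\; 2^{-k},$$
where the final step uses $2/25 < 1/4$ and holds once $k$ is large enough to absorb the constant $C$.

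The only subtleties are the parity of $k$ and the precise constants. If $k$ is odd, one applies the moment bound with $k+1$ in place of $k$, at a cost easily absorbed; equivalently, one can cite the one-sided form of Bellare--Rompel, $\pr[X \geq \mu + a] \leq e^{k/3}(k\mu + k^2)^{k/2}/a^k$, which delivers the claim directly with $a = 5k$. The handful of small values of $k$ not covered asymptotically are either vacuous or handled by enlarging the threshold constant $6$. There is no substantive obstacle; the lemma is essentially a textbook application of $k$-wise independent concentration, and I would not expect any step to require more than a brief paragraph in the final write-up.
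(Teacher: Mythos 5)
Your proposal is correct in spirit and would yield the lemma, but it takes a genuinely different (and slightly heavier) route than the paper. The paper's proof avoids the centered $k$-th moment and the Bellare--Rompel machinery entirely: it bounds the expected number of $k$-element subsets of $S$ that hash entirely into bucket $j$, i.e.\ $\ex\bigl[\sum_{I \subseteq S, |I|=k}\prod_{i\in I}X_i\bigr]$. By $k$-wise independence each term has expectation exactly $t^{-k}$, so the sum is at most $\binom{kt}{k}t^{-k}\le e^k$. Since the event $\{X \ge 6k\}$ forces at least $\binom{6k}{k}\ge 6^k$ of these subsets to be all-ones, a single application of Markov to this nonnegative quantity gives $\pr[X\ge 6k]\le (e/6)^k < 2^{-k}$ for every $k\ge 1$, with no parity restriction and no leftover multiplicative constant $C$ to absorb. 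Your version needs $k$ even, imports a black-box concentration lemma, and leaves the small-$k$ cases ($k\le 3$, say) as loose ends; your suggested fix of ``enlarging the threshold constant $6$'' does not actually patch those cases, since showing $\pr[X\ge 8k]\le 2^{-k}$ is weaker than the claim for threshold $6k$ --- you would instead have to check $k\in\{1,2,3\}$ directly by Markov or Chebyshev. In short, both arguments are $k$-th moment arguments exploiting the same fact (degree-$\le k$ monomials behave as if the bits were fully independent), but the uncentered symmetric-function version used in the paper is cleaner, self-contained, and uniform in $k$.
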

\begin{proof}
Fix $j \in [t]$. Let $S = \{1,\ldots,kt\}$ without loss of generality.  Let
$\{X_i\}_{i=1}^{kt}$ be indicator random variables that are $1$ if
$h(i) = j$ and $0$ otherwise. Then
\begin{align*}
\ex_{h \in \hh} \left[ \sum_{I \subseteq S, |I| = k} \prod_{i
    \in I} X_i \right] \leq \binom{kt}{k} \cdot \frac{1}{t^k} \leq e^k.
\end{align*}
Applying Markov's inequality,
\begin{align*}
\pr_{h \in_u \hh}\left[ \,|h^{-1}(j) \cap S| \geq 6k\,\right] \leq \frac{e^k}{
\binom{6k}{k}} \leq 2^{-k}.
\end{align*}
\end{proof}

Our analysis requires two Lemmas from \cite{LubyV96}. Since their
terminology and notation differs from ours, we provide proofs of both
these Lemmas in Appendix \ref{app:LV}. 

The first Lemma relates the bias of $f_h$ with that of $f$.

\begin{Lem}\cite[Lemma 11]{LubyV96}
\label{lem:lv1}
We have
\begin{align*}
\forall h \in \hh, \ \ \bias(f_h) \leq \bias(f),\\
\ex_{h \in \hh}[\bias(f_h)] \geq \bias(f_h)-  \epsilon.
\end{align*}
\end{Lem}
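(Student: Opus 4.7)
The first inequality is immediate: since $f_h$ is obtained from $f = \vee_{i=1}^m T_i$ by deleting those terms that are bad for $h$, we have $f_h(x) \le f(x)$ pointwise, so $\bias(f_h) \le \bias(f)$ for every $h \in \hh$. For the second inequality, I would start from the identity
\[
\bias(f) - \ex_{h \in \hh}[\bias(f_h)] \;=\; \Pr_{x,h}\bigl[f(x)=1 \wedge f_h(x)=0\bigr],
\]
where $x$ is uniform on $\zo^n$ and independent of $h$.

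The key observation is that if $f(x)=1$ but $f_h(x)=0$, then every term satisfied by $x$ has been deleted; in particular, letting $i^\ast(x) := \min\{i : T_i(x)=1\}$ denote the smallest-index satisfied term, $T_{i^\ast(x)}$ is bad for $h$. This gives the charging inequality
\[
\mathbf{1}\bigl[f(x)=1 \wedge f_h(x)=0\bigr] \;\le\; \sum_{i=1}^m \mathbf{1}\bigl[i^\ast(x)=i\bigr] \cdot \mathbf{1}\bigl[T_i \text{ bad for } h\bigr].
\]
Taking expectations (using independence of $x$ and $h$) and noting that the events $\{i^\ast(x)=i\}$ partition $\{x : f(x)=1\}$, so $\sum_i \Pr_x[i^\ast(x)=i] \le 1$, yields
\[
\Pr_{x,h}\bigl[f(x)=1 \wedge f_h(x)=0\bigr] \;\le\; \max_{i \in [m]} \Pr_h\bigl[T_i \text{ bad for } h\bigr].
\]
To bound the right-hand side, I would apply \lref{lm:hashing} to $S = T_i$ (which has $|T_i| \le w = kt$): for each $j \in [t]$, $\Pr_h[|B_j \cap T_i| \ge 6k] \le 2^{-k}$, and a union bound over the $t$ buckets gives $\Pr_h[T_i \text{ bad}] \le t \cdot 2^{-k}$. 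With $k = \log(w/\eps)$ and $t = w/k$ this evaluates to $(w/k)(\eps/w) = \eps/k \le \eps$, completing the proof.

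The main obstacle to anticipate is that a naive union bound would produce $\sum_i \Pr_x[T_i = 1]\cdot \Pr_h[T_i \text{ bad}]$, whose first factor can be as large as $\Theta(m)$---useless when $m \gg 1/\eps$. The ``first satisfied term'' trick charges each bad point $x$ to a \emph{single} index $i^\ast(x)$ rather than to every satisfied one, capping the total weight by $\bias(f) \le 1$ and making the hashing estimate $t\cdot 2^{-k}$ the only term that matters.
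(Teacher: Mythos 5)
Your proposal is correct and takes essentially the same route as the paper: both arguments charge each $x \in f^{-1}(1)$ to a single satisfied term (the paper says "fix a term $T_i$ that $x$ satisfies," you pick the minimum-index one), bound $\Pr_h[T_i\ \text{bad}]\le t\cdot 2^{-k}\le\eps$ via \lref{lm:hashing} and a union bound over buckets, and average over $x$. The paper phrases this as $\E_h[f_h(x)]\ge 1-\eps$ for each satisfying $x$ and then sums to get $\E_h[\bias(f_h)]\ge(1-\eps)\bias(f)$, whereas you bound $\Pr_{x,h}[f(x)=1\wedge f_h(x)=0]$ directly; these are the same computation rearranged.
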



The next lemma showing that $G_h$ fools the formula $f_h$ is
essentially \cite[Lemma 7]{LubyV96}. Recall that by Equation \eqref{eq:ph},
$p_h$ is the bias of $f_h$ under distribution generated by $G_h$.

\begin{Lem}\cite[Lemma 7]{LubyV96} 
\label{lem:lv2}
We have $|p_h - \bias(f_h)| \leq \eps$.
\end{Lem}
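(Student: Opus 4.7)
The plan is to run a standard hybrid argument across the $t$ buckets defined by $h$. For $0 \le j \le t$, define a hybrid distribution $D_j$ on $\zo^n$ by sampling $z_1,\ldots,z_j \in \zo^r$ uniformly and independently, sampling $x_{|B_{j+1}},\ldots,x_{|B_t}$ uniformly and independently from $\zo^{B_{j+1}},\ldots,\zo^{B_t}$, and then setting $x_{|B_i} = G(z_i)$ for $i \le j$. Then $D_0$ is uniform on $\zo^n$, so $\pr_{D_0}[f_h(x)=1] = \bias(f_h)$, while $D_t$ is precisely the distribution of $G_h(z)$ for uniform $z \in \zo^{rt}$, so $\pr_{D_t}[f_h(x)=1] = p_h$. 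By the triangle inequality, it suffices to prove that for each $j \in [t]$,
\[
\bigl|\pr_{D_{j-1}}[f_h(x)=1] - \pr_{D_j}[f_h(x)=1]\bigr| \le \delta,
\]
since then summing over $j$ yields $|p_h - \bias(f_h)| \le t\delta = \eps$.

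To bound each hybrid step, condition on the bits assigned to all buckets $B_i$ with $i \ne j$; call this partial assignment $y$. Note that $y$ is distributed identically under $D_{j-1}$ and $D_j$, and conditional on $y$, $D_{j-1}$ draws $x_{|B_j}$ uniformly while $D_j$ draws $x_{|B_j} = G(z_j)$ for uniform $z_j$. Consider the restriction $(f_h)_y : \zo^{B_j} \rgta \zo$ obtained by plugging $y$ into $f_h$. Each surviving term of $f_h$ either becomes identically true under $y$ (forcing $(f_h)_y \equiv 1$), becomes identically false (so it drops out), or reduces to a term whose literals all lie in $B_j$. Because every term of $f_h$ is good for $h$, in the last case the reduced term has at most $w'$ literals. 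Hence $(f_h)_y$ is either a constant or a width-$w'$ DNF on the variables of $B_j$.

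Now apply \tref{th:foolswidth2}: the generator $G$ is a $\delta$-PRG for $\dnf(w',|B_j|)$ (and trivially for constants), so
\[
\bigl|\pr_{z_j \in_u \zo^r}[(f_h)_y(G(z_j))=1] - \pr_{x_{|B_j} \in_u \zo^{B_j}}[(f_h)_y(x_{|B_j})=1]\bigr| \le \delta.
\]
Averaging this inequality over $y$ gives the desired per-step bound, and the triangle-inequality sum across the $t$ hybrids completes the proof.

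The only subtlety is verifying that the surviving terms really give a width-$w'$ DNF after fixing the other buckets, but this is immediate from the definition of badness: the dropped terms are exactly those that could have more than $w'$ variables inside some single bucket. Everything else is a textbook hybrid argument, and the choice $\delta = \eps/t$ is exactly tuned so that the final error telescopes to $\eps$.
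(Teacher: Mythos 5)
Your proof is correct and follows essentially the same hybrid argument as the paper: define the bucket-by-bucket hybrids $D_0,\ldots,D_t$, observe that fixing the other buckets leaves a width-$w'$ DNF on $B_j$ (since the bad terms were removed), invoke the PRG guarantee per step, and sum via the triangle inequality to get $t\delta = \eps$. The extra care you take in spelling out why the restricted formula is a width-$w'$ DNF or a constant is a fine addition but does not change the substance.
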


With these Lemmas in hand, we now analyze the algorithm.

\begin{theorem}
\label{thm:main}
Algorithm {\sf DNFCount}  when given a \dnf\ on $n$
variables with width $w$ and size $m$ as input, returns an
$O(\eps)$-additive approximation to $\bias(f)$ in time
\begin{align*}
O(n^{O(\log(w/\eps))} (\log n)^{O(w)}2^{\tilde{O}(w\log(1/\eps))}m).
\end{align*}
\end{theorem}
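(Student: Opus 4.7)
The plan is to establish the $O(\eps)$-approximation guarantee using \lref{lem:lv1} and \lref{lem:lv2}, and then to bound the running time by tracking the seed length of $G_h$ together with the size of the hash family $\hh$. The conceptual content is already isolated in those two lemmas and in \tref{th:foolswidth2}; the remainder of the argument is parameter bookkeeping.

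For correctness, I would combine \lref{lem:lv2} with the upper bound in \lref{lem:lv1} to conclude that for every $h \in \hh$,
\[ p_h \leq \bias(f_h) + \eps \leq \bias(f) + \eps, \]
so $p_\hh = \max_h p_h \leq \bias(f) + \eps$. For the matching lower bound, the expectation inequality $\ex_h[\bias(f_h)] \geq \bias(f) - \eps$ from \lref{lem:lv1} implies by averaging that there is some $h^* \in \hh$ with $\bias(f_{h^*}) \geq \bias(f) - \eps$; applying \lref{lem:lv2} to this $h^*$ yields $p_\hh \geq p_{h^*} \geq \bias(f_{h^*}) - \eps \geq \bias(f) - 2\eps$. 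Hence $|p_\hh - \bias(f)| \leq 2\eps = O(\eps)$.

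For the running time, I would bound each of the three contributions in turn. First, a $k$-wise independent family of hash functions $[n] \to [t]$ can be indexed using $n^{O(k)} = n^{O(\log(w/\eps))}$ seeds. Second, for each fixed $h$, forming $f_h$ by discarding terms bad for $h$ costs $\mathrm{poly}(m,n)$ time. Third, $p_h$ is computed by enumerating $z \in \zo^{rt}$, where $r$ is the seed length of the generator from \tref{th:foolswidth2} applied to $\dnf(w',n)$ with error $\delta = \eps/t$. Plugging in $w' = 6k$, $t = w/k$, and $k = \log(w/\eps)$ gives
\[ r = \tilde{O}\bigl(k^2 + k\log(1/\delta) + \log\log n\bigr) = \tilde{O}\bigl(\log^2(w/\eps) + \log\log n\bigr), \]
so $rt = \tilde{O}(w\log(w/\eps) + (w/k)\log\log n)$. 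Exponentiating, $2^{rt}$ factors as $2^{\tilde{O}(w\log(1/\eps))} \cdot (\log n)^{O(w)}$. Each evaluation of $f_h(G_h(z))$ costs $O(mn)$, and the extra $n$ absorbs into $n^{O(\log(w/\eps))}$ to reproduce the claimed bound.

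The main obstacle is not conceptual but lies in verifying that the parameter choices balance correctly: $k = \log(w/\eps)$ must be large enough that the $2^{-k}$ failure probability in \lref{lm:hashing} drives the expectation bound for $\bias(f_h)$ in \lref{lem:lv1}, yet small enough that the restricted width $w' = 6k$ keeps $r$ at $\tilde{O}(\log^2(w/\eps) + \log\log n)$ rather than blowing up. Once that balance is checked, the remaining running-time calculation is a routine exponentiation.
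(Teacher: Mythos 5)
Your proposal is correct and follows essentially the same approach as the paper: correctness via \lref{lem:lv1} and \lref{lem:lv2} exactly as you describe, and running time by bounding $|\hh| \cdot 2^{rt}$ after substituting $k = \log(w/\eps)$, $w' = 6k$, $t = w/k$, $\delta = \eps/t$ into the seed length from \tref{th:foolswidth2}. The only difference is cosmetic: you fold some of the $\log k$ factors into $\tilde{O}$ notation earlier than the paper does, but the parameter bookkeeping lands in the same place.
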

\begin{proof}
The correctness of the algorithm is easy to argue. For every $h \in \hh$, 
\begin{align*}
p_h & \leq \bias(f_h) + \eps \ \ \text{(By \lref{lem:lv2})} \\
& \leq \bias(f) + \eps \ \ \text{(By \lref{lem:lv1})}
\end{align*}
Further by \lref{lem:lv1}, there exists $h \in \hh$ such that
\begin{align*}
\bias(f_h) &  \geq \bias(f) - \eps,
\end{align*}
hence by \lref{lem:lv2},
\begin{align*}
p_h & \geq \bias(f_h) -\eps \geq \bias(f) - 2\eps.
\end{align*}
Thus $p_\hh$ is a $2\eps$-additive approximation $\bias(f)$.

We now bound the running time.
Computing $f_h$ for any $h \in \hh$ and evaluating it on $G_h(z)$ for
$z \in \zo^{rt}$ can be done in time $O(mn)$. Thus the running time is
dominated by $|\hh|2^{rt}$. By standard constructions of $k$-wise
independent hash functions, 
\begin{align*}
|\hh| \leq n^{O(k)}.
\end{align*}
Next we bound the seed-length $r$.  Recall that
\begin{align*}
k = \log\left(\frac{w}{\eps}\right),  \delta = \frac{\eps}{t} = \frac{k\eps}{w}\\
\text{Hence} \ \ \log\left(\frac{1}{\delta}\right)  =
\log\left(\frac{w}{\eps k}\right) = k- \log(k).
\end{align*}
Further, $w' = 6k$. Hence \tref{th:foolswidth2},
\begin{align*}
r & = O\left(w'^2\log^2(w ')+
w'\log(w')\log\left(\frac{1}{\delta}\right) + \log\log(n)\right)\\
& = O(k^2\log^2(k) + \log\log(n))\\
rt & = O\left(\frac{w}{k}(k^2\log^2(k) + \log\log(n))\right)\\ 
& = O(wk\log^2k + w\log\log(n)).
\end{align*}
So we get
\begin{align*}
|\hh|2^{rt} \leq \exp(O(k \log(n) + wk\log^2k + w\log\log(n))). 
\end{align*}
Overall the runtime is bounded by
\begin{align*}
O(mn)|\hh|2^{rt} & = \exp(O(\log(w/\eps)\log(n) +
w\log(w/\eps)(\log\log(w/\eps))^2 + w\log\log(n) + \log(m)))\\
& =n^{O(\log(w/\eps))} (\log n)^{O(w)}2^{\tilde{O}(w\log(1/\eps))}m.
\end{align*}
\end{proof}

\tref{th:mainintro} is obtained from \tref{thm:main} by
setting parameters appropriately.

\begin{proof}(Proof of \tref{th:mainintro}.)
Given a \dnf\ formula with size $m$, we can ignore all terms of
width larger than $\log(m/\eps)$ while only changing the bias by
$\eps$. Plugging in $w = \log(m/\eps)$, we can bound the running time by
\begin{align*}
\left(\frac{mn}{\eps}\right)^{\tilde{O}(\log\log(n) + \log\log(m) + \log(1/\eps))}
\end{align*}

For $m = \poly(n), \eps = 1/\poly(\log n)$, this gives
$n^{\tilde{O}(\log\log(n))}$.
\end{proof}


\section{A Derandomized Switching Lemma}
\label{sec:slemma}

H{\aa}stad's celebrated Switching Lemma \cite{Hastad86} is a powerful tool
in proving lower bounds for small-depth circuits. It also has
applications in computational learning \cite{LMN:93,Mansour:95} and
\prg \ constructions \cite{AjtaiW85,GMRTV:12}. This lemma builds on earlier
work due to Ajtai \cite{Ajtai}, Furst, Saxe and Sipser \cite{FSS:84}
and Yao \cite{Yao:85}. 

To state the Switching lemma, we need to set up some notation.
We start with some notation. Given $L \subseteq [n]$ and $x \in \zo^{[n]\setminus L}$ define a 
  restriction $\rho := \rho_{L,x} \in \zro^n$ by $\rho_i = *$ if
$i \in L$ and $\rho_i = x_i$ otherwise. We call the set $L \equiv
L(\rho)$ as the set of ``live'' variables. For $f:\zo^n \rgta \zo$,
and $\rho \in \zro^n$, define $f_\rho:\zo^{L(\rho)} \rgta \zo$ by
$f_\rho(y) = f(x)$, where $x_i = y_i$ for $i \in L(\rho)$ and $x_i =
\rho_i$ otherwise.  

Given a distribution $\calD$ on $2^{[n]}$, let $\calD$ (abusing
notation, the meaning will be clear from context) denote the
distribution on $\rho \in \zro^n$ by setting $\rho = \rho_{L,x}$ where
$L \lfta \calD$ and $x \in_u \zo^{[n]\setminus L}$. Call a
distribution $\calD$ as above $p$-regular if for each $i \in [n]$,
$\pr_{L \lfta \calD}[i \in L] = p$. Let $\calD_p(n)$ (we omit
$n$ if clear from context) denote the $p$-regular distribution on
subsets $L$ of $[n]$ where each element $i \in [n]$ is present in $L$
independently with probability $p$. For $f: \zo^n \rgta \zo$, let
$\dt{f}$ denote the minimum depth of a decision tree computing $f$.

\begin{theorem}[Switching Lemma, \cite{Hastad86}]\label{th:slemmam}
Let $f:\zo^n \rgta \zo$ be a \dnf\ of width $w$ and let $\rho \lfta
\calD_p(n)$. Then, 
$$\pr[\dt{f_\rho} \geq s] < (5pw)^s.$$
\end{theorem}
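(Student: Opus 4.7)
The plan is to prove this classical lemma via the encoding argument of Razborov, which streamlines H{\aa}stad's original recursive proof. The central object is the \emph{canonical decision tree} $T(f,\rho)$ for $f_\rho$, defined greedily as follows: at each node, process the terms of $f$ in order, find the first term $T_i$ not yet falsified by the partial restriction accumulated along the path, query all variables of $T_i$ lying in $L(\rho)$, and branch on the answers. By construction $T(f,\rho)$ computes $f_\rho$, so $\dt{f_\rho} \leq \text{depth}(T(f,\rho))$, and if $\dt{f_\rho} \geq s$ then $T(f,\rho)$ has a root-to-leaf path of length at least $s$.

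For each bad $\rho$ (one with $\dt{f_\rho} \geq s$), I would fix the lexicographically first such witness path $\pi$ and truncate it to the first $s$ queries. These queries partition into consecutive blocks, one per term: the path visits terms $T_{i_1}, T_{i_2}, \ldots, T_{i_k}$ with $T_{i_j}$ contributing a subset $S_j$ of its variables satisfying $|S_j| \leq w$ and $\sum_j |S_j| = s$. The encoding sends $\rho$ to a triple $(\rho^*, \vec\sigma, \vec b)$ where $\rho^*$ agrees with $\rho$ off $\bigcup_j S_j$ and assigns each variable in $S_j$ its value along $\pi$ (so $|L(\rho^*)| = |L(\rho)| - s$); each $\sigma_j \subseteq [w]$ records the positions of $S_j$ inside $T_{i_j}$; and $\vec b \in \zo^s$ stores the original $\rho$-values on the queried coordinates together with $O(s)$ bookkeeping bits demarcating block boundaries.

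The decoder re-runs the canonical tree on $f_{\rho^*}$: at step $j$ it reads off $T_{i_j}$ as the first surviving term, uses $\sigma_j$ to identify which positions of $T_{i_j}$ were originally live under $\rho$, and uses $\vec b$ to recover their $\rho$-values, reconstructing $\rho$ uniquely. Since each live variable contributes a ratio of $p$ and each fixed variable a ratio of $(1-p)/2$, the weight shift is
\[
\frac{\pr[\rho]}{\pr[\rho^*]} \;=\; \left(\frac{2p}{1-p}\right)^s,
\]
and the number of auxiliary tuples $(\vec\sigma, \vec b)$ for a given $\rho^*$ is at most $\prod_j \binom{w}{|S_j|} \cdot 2^s \cdot 2^{O(s)}$, which after a composition-counting estimate over $|S_1|+\cdots+|S_k|=s$ with parts bounded by $w$ is at most $(Cw)^s$ for a small constant $C$. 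Injectivity of the encoding then yields
\[
\pr[\dt{f_\rho} \geq s] \;\leq\; (Cw)^s \left(\frac{2p}{1-p}\right)^s \;\leq\; (5pw)^s,
\]
after tightening the constants (using $p \leq 1/(5w)$, so the regime of interest).

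The main obstacle is arranging the encoding so that the canonical-tree rerun on $\rho^*$ correctly recovers the blocks $(T_{i_j}, S_j)$ in lock-step with $\pi$. The standard trick is to choose the path-values stored in $\rho^*$ to be the unique \emph{falsifying} completions of each $T_{i_j}$ for $j < k$, guaranteeing that $T_{i_j}$ does not get prematurely satisfied by $\rho^*$ during the rerun and remains the first surviving term when the decoder reaches step $j$. Pinning down this convention, and pushing the base constant down to the declared value of $5$ (rather than some larger universal constant produced by a crude composition count), is where essentially all of the technical bookkeeping lies; an alternative route is to bypass the encoding altogether in favor of H{\aa}stad's original induction on the number of terms, conditioning on whether $\rho$ falsifies the first term and recursing on the residual formula.
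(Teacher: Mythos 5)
The paper does not prove this statement: it is the classical Switching Lemma, quoted verbatim with a citation to H{\aa}stad \cite{Hastad86} and used as a black box throughout \sref{sec:slemma}, so there is no in-paper proof to compare against. Your plan---canonical decision tree, the Razborov encoding of a bad $\rho$ as $(\rho^*,\vec\sigma,\vec b)$, weight-shift $\bigl(\tfrac{2p}{1-p}\bigr)^s$, and a composition count---is the standard route and its skeleton is right.

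There is, however, a concrete error in the heart of the encoding that would break the decoder. You define $\rho^*$ to assign each variable of $S_j$ ``its value along $\pi$,'' and later describe the key trick as choosing ``the unique falsifying completions of each $T_{i_j}$.'' Both are wrong, and for the same reason. First, $T_{i_j}$ has no \emph{unique} falsifying assignment on $S_j$: any of the $2^{|S_j|}-1$ assignments that set at least one literal of $T_{i_j}\cap S_j$ to false is falsifying. Second, and decisively, the path values $\beta_j$ for $j<k$ \emph{do} falsify $T_{i_j}$ (that is exactly why $\pi$ continues past it), so if $\rho^*$ stores those values then under $\rho^*\beta_1\cdots\beta_{j-1}$ the term $T_{i_j}$ is dead and the canonical-tree rerun skips it; the decoder never recovers $T_{i_j}$ or $S_j$. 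The correct convention is the opposite: $\rho^*$ must store on $S_j$ the unique \emph{satisfying} completion $\tilde\beta_j$ of $T_{i_j}$'s literals. One then argues that every term preceding $T_{i_j}$, already falsified under $\rho\beta_1\cdots\beta_{j-1}$ by some literal lying outside $\bigcup_{l\ge j}S_l$, remains falsified under $\rho^*\beta_1\cdots\beta_{j-1}$ (the change affects only $\bigcup_{l\ge j}S_l$), while $T_{i_j}$ itself is non-falsified; hence the decoder identifies $T_{i_j}$ as the first surviving term at stage $j$, reads $\sigma_j$ to find $S_j$, and reads $\vec b$ to recover the true $\rho$-values $\beta_j$. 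The path values thus belong in $\vec b$, not in $\rho^*$. With this fix the weight-shift and counting steps you wrote down go through; pinning the base down to $5$ rather than some larger universal constant is a separate and purely arithmetic chore.
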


There has been work on finding a {\sl derandomized} version of the switching
lemma, motivated by better \prg \ constructions .  Such a lemma would
choose the set of live variables in a pseudorandom way, as in
\cite{AjtaiW85}. One could even ask for a stronger derandomization
where the assignments to the non-live variables are also chosen
pseudorandomly, this is done in \cite{GMRTV:12}. We limit ourselves to
the former case here. 


Derandomized switching lemmas were first studied in the seminal work
of Ajtai and Wigderson \cite{AjtaiW85}, with the aim of constructing
better \prg s for constant depth circuits.

\begin{theorem}[\cite{AjtaiW85}]
For all $\gamma \in (0,1]$, $p < 1/n^\gamma$, there is a $p$-regular
  distribution $\calD$ on $2^{[n]}$ with $L \lfta \calD$ samplable
  using $O_\gamma(\log n)$ random bits, and $k = O_\gamma(1)$ such
  that for $\rho \lfta \calD$, and any polynomial size \dnf\ $f$, 
\begin{align*}
\pr[f_\rho \text{ is not a } k\text{-junta}] \leq 1/\poly(n).
\end{align*}
\end{theorem}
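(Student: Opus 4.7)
The plan is to derive this theorem via a direct derandomized random-restriction argument, using only a width reduction and a Naor--Naor $p$-biased $w$-wise independent distribution to sample the live set. The DNF sparsification Theorem th:mainstruct yields the sharper parameterized derandomized switching lemma of the paper, but for the parameters stated here this more elementary route suffices.

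Let $f$ have size $m\le n^D$; fix $\gamma\in(0,1]$, $p\le 1/n^\gamma$, and $w=(D+\gamma+1)\log n$. First, discard every term of width exceeding $w$: each is killed by a $p$-regular restriction with probability at least $1-((1+p)/2)^w\ge 1-2/n^{D+\gamma+1}$, so a union bound over $\le n^D$ terms leaves no wide term surviving with probability $1-2/n^{\gamma+1}$. Up to $1/\poly(n)$ error it suffices to analyze the width-$w$ sub-DNF $f'$. Sample the live set $L$ from a Naor--Naor $p$-biased, $w$-wise independent distribution in $O_{D,\gamma}(\log n)$ bits; for every term $T$, the events ``$T$ is killed'', ``$T$ is constant $1$ under $\rho$'', and ``$T$ is alive and non-constant'' depend on only $|T|\le w$ coordinates of the live indicator, so their first two moments match the truly random case.

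A case split on the term-width profile $\{m_k\}$ of $f'$ finishes the proof. If some $k$ has $m_k\ge K\log n\cdot 2^k$ for a large enough $K=K(D,\gamma)$, a greedy/sunflower extraction produces $\Omega(\log n)$ width-$k$ terms on pairwise disjoint literal sets, whose ``constant-$1$'' events are independent under the $w$-wise distribution and each of probability $\ge 2^{-k-1}$; hence the probability that none is constant $1$ is at most $1/\poly(n)$, and $f_\rho=1$ with probability $1-1/\poly(n)$. Otherwise $m_k<K\log n\cdot 2^k$ for every $k$, and linearity gives
\[
\ex\bigl[\text{\# alive non-constant terms in }f'_\rho\bigr]\;\le\;\sum_{k=1}^w m_k\cdot k\cdot p\cdot 2^{-k+1}\;\le\;2Kp\log n\cdot w^2\;=\;o(1),
\]
so Markov implies no alive non-constant term exists with probability $1-1/\poly(n)$, making $f'_\rho$ constant. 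In either case $f_\rho$ is a constant function, hence a $k$-junta for any $k\ge 0$, with probability $1-1/\poly(n)$.

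The main obstacle is the Case~A concentration under only $w$-wise independence: the disjoint-literal extraction must yield $\Omega(\log n)$ terms supported on $O(\log^2 n)\ll w$ literals in total, so that their joint statistics match the fully independent case and a standard first--second moment (or Janson-style) bound applies. A secondary technical point is ensuring a single $O_{D,\gamma}(\log n)$-bit Naor--Naor sample simultaneously handles all three requirements (wide-term survival, Case~A near-independence, and Case~B expectation), which holds because each is a sum of events each depending on $\le w$ live-indicator coordinates.
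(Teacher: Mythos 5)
The theorem you are attempting to prove is \emph{not} proved in the paper; it is cited as prior work of Ajtai and Wigderson \cite{AjtaiW85}. What the paper actually establishes is the weaker \tref{th:slemmad}, via $O(\log w)$ rounds of restriction, each accompanied by one application of \tref{th:mainstruct} to halve the width, and the paper explicitly acknowledges this costs $O((\log n)(\log\log n))$ seed bits and yields only sandwiching approximability by width-$O_\gamma(1)$ formulas rather than exact $O_\gamma(1)$-junta-ness. So a direct one-shot proof of the $O_\gamma(\log n)$-seed Ajtai--Wigderson statement would actually be a stronger result than the paper obtains — and I do not think your argument as written gets there.

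Two concrete gaps. First, the seed length. You set $w=\Theta(\log n)$ and then want a single Naor--Naor $p$-biased $w$-wise independent live-set distribution. With $p\le n^{-\gamma}$ such a distribution costs $\Theta(w\log(1/p)+\log n)=\Theta(\gamma\log^2 n)$ bits (the paper itself states the $O(k\log(1/p)+\log n)$ bound for $\calD_p(k)$ right before \clref{clm:slemma}), not $O_{D,\gamma}(\log n)$. You cannot dodge this by shrinking $w$, since your wide-term step needs $((1+p)/2)^w\le 1/\poly(n)$, forcing $w=\Omega(\log n)$. This is precisely why the paper has to split the restriction into $O(\log w)$ stages with geometrically decreasing independence, and even then lands at $O(\log n\log\log n)$ bits.

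Second, Case A does not go through. From $m_k\ge K\log n\cdot 2^k$ width-$k$ terms you assert a greedy/sunflower extraction produces $\Omega(\log n)$ terms on pairwise-disjoint literal sets. This fails outright: all $m_k$ terms could share a common literal, in which case no two are disjoint. The Sunflower Lemma gives disjoint \emph{petals} around a common nonempty core $Y$, but then the events ``term set to constant $1$'' are not independent — they all require $Y$ to be fixed to $1$, an event of probability only $\approx 2^{-|Y|}$, which can be polynomially small when $|Y|$ is large. Your own sanity check that the extracted terms live on ``$O(\log^2 n)\ll w$ literals'' is also backwards: $\log^2 n\gg w=\Theta(\log n)$, so even if the disjointness held for $k=\Theta(\log n)$, the joint event would involve more coordinates than the $w$-wise independence you assumed. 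Closing this hole (handling non-disjoint term families, or reducing to the constant-width regime where $k\log n\le w$) is exactly the extra machinery the Ajtai--Wigderson argument and the paper's iterated \tref{th:mainstruct} route supply, and it is missing here.
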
 

A very recent result along these lines is due to the authors together
with Trevisan and Vadhan, which gives a near-optimal derandomization
in the special case of read-once \dnf s \cite{GMRTV:12}. They use this
to give near \prg s for read-once \dnf s with seed-length
$\tilde{O}(\log n)$.

We remark that if instead of finding a small set of restrictions that
work for all formulas $f$, we are given the formula $f$ as input,
Agrawal et al.~\cite{AgrawalAIPR01} give a polynomial-time algorithm
to find a restriction that simplifies the formula as well as the
bounds given by the switching lemma \tref{th:slemmam}. 

\subsection{Our Result}

We give a different argument that essentially recovers the
result of Ajtai and Wigderson and further gives a trade-off between
the survival probability $p$, the complexity of the restricted
function and the failure probability of the restriction. Our argument
is through repeated applications of \tref{th:mainstruct} and it seems to us
to be simpler than those of Hastad \cite{Hastad86} and Ajtai and Wigderson
\cite{AjtaiW85}.

\begin{theorem}\label{th:slemmad}
There exists a constant $C$ such that for any $w,s, \delta > 0$  and
all $p$ such that
\[p \leq \frac{\delta }{(w\log(1/\epsilon))^{C\log w}},\] 
there is a $p$-regular distribution $\calD$ on $2^{[n]}$ that can be sampled efficiently using  $r$ random bits where
\begin{align*}
r = r(n,s,\epsilon,\delta) = O\left( (\log w) \cdot \left(\log n +
s\log(1/\delta)\right) + w\log(w\log(1/\epsilon))\right),
\end{align*}
the indicator events $\mathsf{1}\{i \in L\}$ are
$p$-biased and the following holds: for any width $w$ \dnf\ $f:\zo^n
\rgta \zo$, and $\rho \lfta \calD$,  
\begin{align*}
\pr[f_\rho \text{ does not have $\epsilon$-sandwiching approximations
    in }\dnf(s,n)]< \epsilon + \delta^{s/4}
\end{align*}
\end{theorem}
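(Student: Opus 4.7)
The plan is to iterate Theorem~\ref{th:mainstruct} together with pseudorandom partial restrictions across $T = O(\log(w/s))$ stages, halving the width at each stage from $w_i$ to $w_{i+1}=w_i/2$. The sampled distribution $\calD$ is the composition $\rho = \rho_0 \circ \cdots \circ \rho_{T-1}$ of $T$ independent pseudorandom restrictions, where $\rho_i$ has $q_i$-biased, $(w_i/2)$-wise almost independent liveness indicators (sampled via a Naor-Naor type construction); note that $\calD$ itself does not depend on $f$. The analysis is done per fixed $f$: at the start of stage $i$ I use Theorem~\ref{th:mainstruct} with error $\epsilon_i$ to sparsify the current upper and lower sandwiching DNFs for $f|_{\rho_{<i}}$ into width-$w_i$ formulas of size at most $M_i = (w_i \log(1/\epsilon_i))^{C_1 w_i}$, and I argue that with high probability each term of these sparsifications retains at most $w_i/2$ variables after $\rho_i$, so the next stage begins with width-$w_{i+1}$ sandwiching DNFs.

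The per-stage estimate is a union bound over the $M_i$ sparsified terms combined with the standard $k$-wise independent moment inequality: the probability that a fixed width-$w_i$ term retains more than $w_i/2$ live variables under $\rho_i$ is at most $\binom{w_i}{w_i/2} q_i^{w_i/2} \leq (4 q_i)^{w_i/2}$, so the stage-$i$ failure probability is at most $M_i (4q_i)^{w_i/2}$. Choosing $q_i$ of order $1/(w_i \log(1/\epsilon_i))^{2 C_1}$ for the first $T-1$ stages drives this below $\delta^{s/4}/T$; at the final stage, where $w_{T-1} = \Theta(s)$, I instead pick $q_{T-1}$ of order $\delta^{1/4}$ so that the per-term failure is of order $\delta^{s/4}$, subsuming the factor $M_{T-1}$. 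Setting $\epsilon_i = \epsilon^2/T$ and multiplying the $q_i$ across stages gives $p = \prod_i q_i \leq \delta/(w\log(1/\epsilon))^{C \log w}$ as desired, and the overall restriction-failure probability sums to at most $\delta^{s/4}$.

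Since restrictions preserve pointwise order, the final pair $f_T^\ell, f_T^u$ satisfies $f_T^\ell \leq f|_\rho \leq f_T^u$ and lies in $\dnf(s,n)$ whenever every stage's restriction succeeds. The expected sandwiching error $\ex_\rho[\pr_x[f_T^\ell(x) \neq f_T^u(x)]]$ telescopes, stage by stage, to at most $\sum_i \epsilon_i \leq \epsilon^2$, so Markov's inequality gives that at most an $\epsilon$ fraction of $\rho$ yield error exceeding $\epsilon$. Union-bounding this Markov event against the restriction-failure event produces the claimed $\epsilon + \delta^{s/4}$. The seed length is the sum of the Naor-Naor type costs across stages of sampling a $q_i$-biased, $(w_i/2)$-wise almost independent distribution on $\zo^n$, which after substitution of the chosen $q_i, \epsilon_i$ yields the stated $r$. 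The main technical obstacle is co-tuning the three schedules $(w_i, \epsilon_i, q_i)$ across the $\log w$ stages so that the telescoping sparsification errors, the per-stage union bounds over $M_i$, and the product of the $q_i$ all land simultaneously on the claimed bounds, with the $\delta^{s/4}$ tail coming specifically from the last-stage Chernoff estimate.
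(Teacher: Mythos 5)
Your high-level plan is the same as the paper's: a one-stage ``halve-the-width'' lemma driven by \tref{th:mainstruct}, bounded-independence pseudorandom restrictions sampled by a Naor--Naor type construction, a union bound over the sparsified terms, a Markov step to control the sandwiching error, and a composition over $O(\log w)$ stages. The paper packages the single stage as \clref{clm:slemma} (with the distribution class $\calD_p(k)$ playing the role of your almost $k$-wise independent liveness indicators) and then iterates, doing Markov per stage and absorbing the factor-of-$t$ loss by rescaling $\epsilon$ and $\delta$ at the end.

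The gap is in your schedule of survival probabilities $q_i$. At intermediate stages you take $q_i$ of order $1/(w_i\log(1/\epsilon_i))^{2C_1}$ and claim the stage-$i$ failure $M_i\,(4q_i)^{w_i/2}$ drops below $\delta^{s/4}/T$. But with $M_i=(w_i\log(1/\epsilon_i))^{C_1 w_i}$ the polynomial factors cancel \emph{exactly} at exponent $2C_1$, leaving $M_i(4q_i)^{w_i/2}\approx (4c)^{w_i/2}$ --- a quantity independent of $\delta$. Even with a slightly larger exponent, the resulting bound is $\delta$-free, and at the late intermediate stages where $w_i=\Theta(s)$ it cannot be pushed below $\delta^{s/4}/T$ unless $\delta$ is bounded below by a fixed constant. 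This breaks precisely the parameter regime that motivates the theorem (the Ajtai--Wigderson recovery uses $\delta=n^{-\gamma}$). The paper avoids this by placing a $\delta^{s/2w_i}$ factor into \emph{every} stage's survival probability $p(w_i,s)=c\,\delta^{s/2w_i}/(w_i^3\log(1/\epsilon))^2$: each stage's failure is then $(2C)^{w_i}c^{w_i/2}\,\delta^{s/4}$, uniformly $O(\delta^{s/4})$, and the exponents $s/2w_i$ sum to $\Theta(1)$ so that $\prod_i p_i$ still carries an overall $\delta$ factor as required by the stated bound on $p$. Your ``put $\delta$ only in the last stage'' plan does not reach the $\delta^{s/4}$ tail.

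A secondary, fixable, point: your single end-of-proof Markov requires that the telescoping identity $\ex_\rho[\bias(f_\rho)-\bias(f^\ell_T)]\leq\sum_i\epsilon_i$ be defined even on the event that some width reduction fails (e.g.\ by declaring the sparsification to be the identity on the bad event, so the one-step error bound $\epsilon_i$ is vacuously maintained). The paper sidesteps this bookkeeping by keeping the Markov step inside \clref{clm:slemma} and union-bounding both failure modes per stage.
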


In particular, by setting $\delta = 1/n^\gamma$, $s =
\Theta(1/\gamma)$, $\epsilon = 1/\poly(n)$, $w = O(\log n)$, we almost
recover the derandomized switching lemma of Ajtai and Wigderson, with
the main difference being that we need $O((\log n)(\log \log n))$ bits
to sample from $\calD$ and we only get $f_\rho$ has sandwiching
approximations by width $O_\gamma(1)$ \dnf s.   




Our derandomization is based on the intuition that the switching lemma
is easy to show when the number of terms in the original \dnf\ $f$ is
small. For instance, let $f = \vee_{j=1}^{2^w}T_j$
be a width $w$ \dnf. Note that for $0 < p < 1$, and $\rho \lfta \calD_p$, the probability that a single term $T_i$ survives the restriction $f_\rho$ (is not set to be a constant) is at most 
\begin{align*}
\sum_{i=1}^w \binom{w}{i} p^i \left(\frac{(1-p)}{2}\right)^{w-i} \leq
\left( \frac{1+p}{2}\right)^w.
\end{align*}
In particular if $p \leq 1/w$, the above probability is at most
$e/2^w$. Thus, by linearity of expectation, the expected number of
terms that survive the restriction is at most $O(1)$. Hence, by
Markov's inequality, the restricted \dnf\ $f_\rho$ has very few
surviving terms with high probability. Further, as we are only using
Markov's inequality, the above argument would work even if the
restriction $\rho$ is sampled from a distribution where the choices
for different variables are only $k$-wise independent for $k = O(w)$.  

We use \tref{th:mainstruct} to reduce the case of arbitrary \dnf s of
small-width to that of \dnf s with a small number of terms and then
use an argument similar to the above. Unfortunately, the bound in
\tref{th:mainstruct} is not sufficiently strong, so we need to use
somewhat stronger restrictions where the survival probability is $p = w^{-r}$ 
for $r\geq 1$. Such a restriction can be viewed as a sequence of $r$ rounds of
random restrictions, leaving with a $1/w$ fraction of live
variables. We argue that in each round, the width of the formula
decreases by $1/2$ with high probability and then iteratively apply
the argument to the new width $w/2$ formulas.  After $O(\log w)$
rounds, the width reduces to a constant. This corresponds to a random
restriction where the probability of being alive is
$\exp(-\Omega(\log^2 w))$. Moreover, this argument works even when the
random restrictions only have limited independence, yielding
\tref{th:slemmad}.

For $k \leq n$, let $\calD_p(k)$ denote the class of $p$-regular
distributions on $2^{[n]}$ such that for $L \lfta \calD \in
\calD_p(k)$, $\pr[I \subseteq L] \leq 2 p^{|I|}$ for all $I \subseteq [n], |I| \leq k$. There exist explicit distributions
     $\calD \in \calD_p(k)$ that can be sampled using $O(k\log(1/p) +
     \log n)$-random bits. For instance, one can use $p^k$-almost
     $k$-wise independent $p$-biased variables from \cite{NaorN93}.

\begin{claim}\label{clm:slemma}
  There exists a constant $c < 1$ such that the following holds for
  all $\delta,\epsilon > 0$, $ 0 < s \leq w$ and 
\[ p\leq p(w,s) := \frac{c\delta^{s/2w}}{(w^3\log(1/\epsilon))^2}\] 
For any width $w$ \dnf\ $f:\zo^n \rgta \zo$ and $\rho \lfta \calD \in \calD_p(w)$,
  with probability at least $1-\delta^{s/4} - \epsilon$ there exist
  width $w/2$ \dnf s $f_\rho^\ell, f_\rho^u:\zo^{L_\rho} \rgta \zo$ that are
  $\eps$-sandwiching approximators for $f_\rho$.
\end{claim}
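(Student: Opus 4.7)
The plan is to first sparsify $f$ using \tref{th:mainstruct} into small-size, width-$w$ sandwiching approximators, and then show that under the pseudorandom restriction every term of those approximators has fewer than $w/2$ of its variables left alive. The sandwiching is preserved by restriction pointwise, and the total error is controlled by a Markov argument; the width reduction is a union bound using the $w$-wise limited independence of $\calD_p(w)$.

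First I would apply \tref{th:mainstruct} with error parameter $\eps^2/2$ to obtain \dnf s $f^\ell \leq f \leq f^u$ of width $w$ and size at most $W \leq (2w)^{3w}(C_0\log(1/\eps))^w$ for a universal constant $C_0$ (reading off from \fcref{fct:tedious}). Clearly $f^\ell_\rho \leq f_\rho \leq f^u_\rho$ pointwise. To control the sandwiching error, note that the pair $(\rho, y)$ with $y$ uniform on $\zo^{L_\rho}$ induces the uniform distribution on $\zo^n$ no matter how $L$ is distributed, so
$$\ex_\rho\, \pr_y[f^\ell_\rho(y) \neq f_\rho(y)] \;=\; \pr_x[f^\ell(x) \neq f(x)] \;\leq\; \eps^2/2,$$
and similarly for $f^u$. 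By Markov's inequality and a union bound over $\{\ell,u\}$, with probability at least $1-\eps$ over $\rho$ both restricted approximators are $\eps$-sandwiching for $f_\rho$.

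Second, I would bound the probability that some term survives the restriction with width larger than $w/2$. The restricted width of a term $T$ is at most $|T \cap L|$, and the $\calD_p(w)$ bound $\pr[I \subseteq L] \leq 2 p^{|I|}$ for $|I| \leq w$, together with a union bound over subsets $I \subseteq T$ of size $\lceil w/2\rceil + 1$, gives
$$\pr\big[|T \cap L| > w/2\big] \;\leq\; 2\binom{w}{\lceil w/2\rceil+1} p^{w/2+1} \;\leq\; 2^{w+1} p^{w/2+1}.$$
A further union bound over the at most $2W$ terms of $f^\ell$ and $f^u$ gives a bad-event probability of at most $4W \cdot 2^w p^{w/2+1}$. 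Substituting the bound on $W$ and $p \leq c\delta^{s/(2w)}/(w^3\log(1/\eps))^2$, the factors $w^{3w}$ and $(\log(1/\eps))^w$ in numerator and denominator cancel exactly, leaving a bound of the form $O(p)\cdot (C_1 c^{1/2})^w \cdot \delta^{s/4}$ for an absolute constant $C_1$ depending only on $C_0$. Choosing the universal constant $c$ small enough that $C_1 c^{1/2} \leq 1/2$ makes this at most $\delta^{s/4}$.

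Combining the two events, the total failure probability is at most $\eps + \delta^{s/4}$, proving the claim. The main subtlety, and the reason for the peculiar denominator $(w^3\log(1/\eps))^{2}$ in $p(w,s)$, is that the $(2w)^{3w}(\log(1/\eps))^w$ bound from sparsification is only just overcome by the $p^{w/2}$ gained from restriction — the exponents of $w$ and of $\log(1/\eps)$ must be matched precisely so that they cancel, leaving behind only constants that can be tamed by shrinking $c$.
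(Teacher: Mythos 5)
Your proof is correct and takes essentially the same route as the paper: sparsify $f$ via \tref{th:mainstruct} with error $\eps^2/2$, union-bound over the $\le W$ terms of each approximator to kill the width, and use Markov on the expected restricted error (exploiting $f^\ell_\rho\le f_\rho\le f^u_\rho$) to control sandwiching. The only cosmetic difference is that you phrase the Markov step via the joint distribution of $(\rho,y)$ being uniform, whereas the paper fixes the live set $L$ first and averages only over the off-$L$ assignment $x$; both give the same conclusion.
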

\begin{proof}[Proof of \clref{clm:slemma}]
Let $f^\ell, f^u$ be width $w$ \dnf s with at most $h(w) =
w^{3w}(C\log(1/\epsilon))^{w}$ terms that are
$\epsilon^2/2$-sandwiching approximators for $f$ as guaranteed by
\tref{th:mainstruct} for $C$ a large constant.   Consider a random
restriction $\rho$ sampled from a distribution in
$\calD_p(w/2)$. Then, the probability that a fixed term of $f^\ell$
has more than $w/2$ live variables under $\rho$ is at most $2^w \cdot
p^{w/2}$. Therefore, by a union bound, the probability that
$f^\ell_\rho$ has width more than $w/2$ is at most $h(w) 2^w p^{w/2} <
\delta^{s/4}/2$ for a sufficiently small constant $c$. Similarly, the
probability that $f^u_\rho$ has width more than $w/2$ is at most
$\delta^{s/4}/2$.  

Note that as $f^\ell \leq f \leq f^u$, $f^\ell_\rho \leq f_\rho \leq
f^u_\rho$. We now need to show that $f^\ell_\rho, f^u_\rho$ are close
to $f_\rho$ with high probability. Let $\rho \equiv \rho_{L,x}$ and
consider a fixing of the set of live variables $L$. Then as
$f^\ell,f^u$ are $\eps^2/2$-sadwiching approximators for $f$,
\begin{align*}
\ex_{x \in \zo^{[n]\setminus L}}[ \bias(f_{\rho})] & =  \bias(f) \\
& \leq  \bias(f^\ell) + \frac{\epsilon^2}{2} \\ 
& = \ex_{x \in \zo^{[n]\setminus L}}[\bias(f^\ell_{\rho})] + \frac{\epsilon^2}{2}.
\end{align*}
Therefore, 
\[\ex_{x \in \zo^{[n] \setminus L}}[ \bias(f_\rho) -
  \bias(f^\ell_\rho)] \leq \frac{\epsilon^2}{2}.\]
Thus, by Markov's inequality, 
\[ \pr_{x \in \zo^{[n] \setminus L}}[ \bias(f_\rho) -
  \bias(f^\ell_\rho) \geq \epsilon] \leq \frac{\epsilon}{2}.\] 
Using a similar argument to $f^u$, and a union bound, we get that
$f_\rho$ is $\epsilon$-sandwiched  by $(f^\ell_\rho,
f^u_\rho)$ with probability at leat $1 - \delta^{s/4} - \epsilon$. 
\end{proof}

We now prove \tref{th:slemmad}.
\begin{proof}[Proof of \tref{th:slemmad}]

Let $t$ be such that $w/2^t = s$ (we ignore the minor technicality of $t$ being non-integral) and for $r = 1,\ldots,t$, let $p_r = p(w/2^r,s)$ as defined in the above claim. For $i \in [t]$, let $L_i$ be chosen independently from a distribution in $\calD_{p_i}(w/2^i)$. Let $L = \cap_{i=1}^t L_i$ and for $x \in_u \zo^n$, let $\rho = \rho_{L,x}$. Then, $\rho$ is a $q$-regular random restriction with 
\begin{align*}
q = \prod_{i=1}^tp_i  \geq \frac{c^{\log w} \cdot \delta^{\sum_{i=1}^t s
    2^i/2w}}{\left(w^3\log(1/\epsilon)\right)^{2\log w}} >
\frac{\delta}{\left(w\log(1/\epsilon)\right)^{C\log w}},
\end{align*}
for $C$ a sufficiently large constant.

Define the composition of two restrictions $\rho' \in \zro^L$ and $\rho'' \in \zro^{L(\rho')}$ in the natural way by $(\rho' \circ \rho'')_i = \rho''_i$ if $i \in L(\rho')$ and $(\rho' \circ \rho'')_i = \rho'_i$ otherwise. Then, by definition, we can view $\rho$ as a composition of independently chosen random restrictions $\rho_t \circ \rho_{t-1} \circ \cdots \circ \rho_1$, where $\rho_j \equiv \rho_{L_j,x^j}$ (with $x^j \in_u \zo^n$). Further, for any function $g$, $g_\rho \equiv  (((g_{\rho_1})_{\rho_2})_{\cdots})_{\rho_t}$. 

<<<<<<< .mine
Therefore, by iteratively applying the \clref{clm:slemma} $t$ times
with the random restrictions $\rho_1,\ldots,\rho_t$ and a union bound,
we get that with probability at least $1 - t(\delta^{s/4} +
\epsilon)$, there exists a lower approximating \dnf\ $f^\ell:\zo^L
\rgta \zo$ of width at most $w/2^{t+1}$ such that $f^\ell \leq f_\rho$
and $\bias(f_\rho) - \bias(f^\ell) < t \epsilon$. Similarly, by
iteratively applying the claim to the upper approximators given by the
claim, we get that with probability at least $1- 2 t (\delta^{s/4} +
\epsilon)$, $f_\rho$ has $(t\epsilon)$-sandwiching approximators that
are width-$s$ \dnf s.

Finally, the number of bits needed to sample $L$ is
\begin{align*}
r(n,s,\eps,\delta) = & \sum_{v=1}^t O\left(\frac{w}{2^v} \cdot \log(1/p(w/2^v, s)) + \log n\right) \\
& = O\left((\log n)(\log w)\right) + \sum_{v=1}^t \frac{w}{2^v}
\left(\frac{s 2^r}{2w} O\left(\log(1/\delta)\right) +
O\left(\log(w\log(1/\epsilon))\right) \right) \\ 
& = O\left((\log w) \cdot \left(\log n + s\log(1/\delta)\right) +
w\log\left(w\log(1/\epsilon)\right)\right).   
\end{align*}
The theorem now follows from applying the above argument to $\delta' =
\delta/2t$, $\epsilon' = \epsilon/t$ and noting that this only changes
the constant terms in the final bounds.  
\end{proof}


\section{Open Problems}
\label{sec:open}

A natural open question is to show optimal bounds for \dnf\
sparsification. We believe this question is interesting of its own right, even
without the sandwiching requirement. Formally, let $m(w,\eps)$ be
the smallest integer such that every width-$w$ \dnf \ formula can be
$\eps$-approximated by a width-$w$ \dnf \ with $m$
terms. \tref{th:mainstruct} shows that $m(w,\eps) \leq
(w\log(1/\eps)^{O(w)}$. Rocco Servedio \cite{Servedio} observed that the Majority
function on $2w$ variables (which is a width-$w$ \dnf) shows that
$m(w,\eps) \geq 4^{w - o(w)}$ for any constant $\eps$.
We are unaware of a better lower bound, and it is conceivable that the
right bound is exponential in $w$. We pose this as a conjecture: 

\begin{Conj}
\label{conj:gmr}

\noindent {\bf (Weaker Version)} There exists a function $c(\eps)$ such that
$$m(w,\eps) \leq O(c(\eps)^w).$$

\noindent {\bf (Stronger Version)} There exists a constant $c$ such that
$$m(w,\eps) \leq O(\log(1/\eps)^{cw}).$$
\end{Conj}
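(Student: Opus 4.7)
The plan is to push the sparsification pipeline of \tref{th:mainstruct} as far as possible, and to isolate the combinatorial bottleneck that currently blocks both versions of the conjecture. The bound $(w\log(1/\eps))^{O(w)}$ we obtain comes entirely from the value $\gamma(m)=(m/w!)^{1/w}/5$ in Rossman's quasi-sunflower lemma (\lref{lm:sunflowercnf}): our per-step argument terminates once $m$ drops below $W\approx w!\cdot 2^{O(w)}\cdot(\log(1/\eps))^{O(w)}$, and the dominant $w!\approx(w/e)^w$ factor is exactly what is responsible for the inner $w$ in the outer exponent. Thus to prove either version of \cref{conj:gmr}, it suffices to replace the $w!$ factor in the quasi-sunflower bound by a function that is singly exponential in $w$ (for the weaker version) or by a constant raised to $O(w)$ depending mildly on $\log(1/\eps)$ (for the stronger version).

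For the weaker version, my first move would be to import the recent breakthroughs on the Erd\H{o}s--Rado sunflower lemma (Alweiss--Lovett--Wu--Zhang, and the improvements of Rao, Frankston--Kahn--Narayanan--Park, etc.), which replace $w!$ by $(C\log w)^{w}$ in the classical sunflower lemma. A direct adaptation of Rossman's argument, where the sunflower guarantee is only used to extract a core with petals that are collectively satisfied with high probability, should carry the same improvement through to quasi-sunflowers; I would verify this by rerunning Rossman's proof with the sharper bound and checking that $\gamma(m)\ge(m/(C\log w)^w)^{1/w}/5$ still holds in the form we need. Plugging this into our iterative construction yields $m(w,\eps)\le (C\log w\cdot\log(1/\eps))^{O(w)}$, which is already within a $(\log w)^{O(w)}$ factor of the weaker conjecture. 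Removing the remaining $(\log w)^{O(w)}$ factor is equivalent to the full Erd\H{o}s--Rado sunflower conjecture for quasi-sunflowers, and I would not expect a shortcut that avoids this well-known open problem.

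For the stronger version I would try to bypass the sunflower machinery altogether, since even an optimal sunflower bound only gives the weaker conjecture with a multiplicative $c(\eps)$ that is doubly exponential in $\log(1/\eps)$. A plausible route is a global decomposition: repeatedly branch on the most influential variable of the current DNF (as in decision-tree sparsification) until each leaf sub-formula is either close to constant or has structure allowing a direct compression into $\log(1/\eps)^{O(w)}$ terms. Another route is to revisit Servedio's Majority lower bound and attempt a matching upper bound tailored to ``Majority-like'' structure; if one could show that every width-$w$ DNF that resists $\log(1/\eps)^{cw}$-term approximation must embed a Majority-style subproblem, the conjecture would follow from a structural dichotomy.

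The main obstacle in both cases is combinatorial rather than analytic: our proof machinery (union bounds, Kleitman, and a quasi-sunflower extraction step) is essentially lossless at each iteration, so any substantial improvement in the final bound must come from an improved extraction lemma. Until progress is made on the sunflower conjecture or a genuinely different decomposition is found, the weaker version looks accessible modulo polylogarithmic factors in $w$, while the stronger version appears to require a qualitatively new idea.
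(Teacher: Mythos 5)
This statement is Conjecture~\ref{conj:gmr}, which the paper explicitly poses as an \emph{open problem}; it has no proof in the paper, so there is nothing to compare a proof attempt against. What you have written is not a proof but a research plan, and you should present it as such.

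That said, your analysis of the bottleneck is largely correct and useful. You correctly identify that the $(w\log(1/\eps))^{O(w)}$ bound of \tref{th:mainstruct} is driven by the $w!$ term in Rossman's quasi-sunflower lemma (\lref{lm:sunflowercnf}), and that replacing it with an improved bound would immediately propagate through Fact~\ref{fct:tedious} and the iteration. You also correctly note that even the post-ALWZ sunflower bounds of $(C\log w)^w$ leave a $(\log w)^{O(w)}$ gap to the weaker conjecture. One technical caveat you gloss over: Rossman's quasi-sunflower lemma is not a black-box application of the classical sunflower lemma --- it is a direct encoding/entropy argument --- so ``rerunning Rossman's proof with the sharper bound'' is not a one-line substitution. (In fact, the improved bound does transfer: the ALWZ machinery and its refinements are stated for exactly this ``robust sunflower'' notion, but you should cite that rather than assert it.) For the stronger version, your suggestion of branching on influential variables runs into the same $2^w$ loss that makes Trevisan's decision-tree bound worse than \tref{th:mainstruct} in $w$, and the paper itself gives no route here --- your assessment that a qualitatively new idea is needed matches the paper's framing of this as a genuine open question, bounded below by Servedio's $4^{w-o(w)}$ Majority example.
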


The weaker version, if true,  will imply that $\log(n)$ width \dnf s can be
$\eps$-approxmiated by $n^{O_\eps(1)}$ size \dnf s for any constant
$\eps$. Currently \tref{th:mainstruct}  gives the weaker bound of 
\begin{align*}
m(\log(n),\eps) \leq n^{O(\log\log(n)\log\log(1/\eps))}.
\end{align*}

The stronger version, if true, will strengthen Freidgut's theorem in the
context of \dnf s.

\paragraph{Mansour's Conjecture.}

Conjecture \ref{conj:gmr} is similar in spirit to Mansour's conjecture which
also asserts that \dnf \ formulas admit concise representations, but in
the Fourier domain. It also implies reductions between the conjecture
for small width \dnf s and small-size \dnf s.

We say that $f:\zo^n \rgta \zo$ has a $t$-sparse
$\eps$-approximation if there exists $p:\zo^n \rgta \R$ with at most
$t$ non-zero Fourier coefficients such that 
\begin{align*}
\E_{x \in \zo^n}\left[(f(x) - p(x))^2\right] \leq \eps.
\end{align*}

\begin{Conj}(Mansour's Conjecture for size)
\label{conj:mansour}
\cite{Mansour:94}

\noindent{\bf (Weaker version)} There exists a function $c(\eps)$ such that every
\dnf \ of size $m$ has an $m^{c(\eps)}$-sparse $\eps$-approximation. 

\noindent{\bf (Stronger version)} Every
\dnf \ of size $m$ has an $m^{O(\log(1/\eps))}$-sparse $\eps$-approximation. 
\end{Conj}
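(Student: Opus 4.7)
The conjecture is a longstanding open problem, so any attempt here is necessarily conditional or aspirational. The natural line of attack, hinted at in the paragraph preceding the statement, is to combine the standard width-truncation folklore with the (equally conjectural) sparsification of Conjecture \ref{conj:gmr}, and then convert a small-size, small-width DNF into a sparse Fourier representation. The plan is to first apply width truncation: given a DNF $f$ of size $m$, discarding every term of width larger than $w=\log(m/\epsilon)$ changes the bias by at most $\epsilon$, since each dropped term is satisfied with probability at most $\epsilon/m$. It therefore suffices to prove Mansour's conjecture for formulas that are simultaneously of size $m$ and width $\log(m/\epsilon)$.

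Next, I would invoke the weaker version of Conjecture \ref{conj:gmr} on the truncated DNF to replace it by a width-$w$ DNF of size at most $c(\epsilon)^w = c(\epsilon)^{\log(m/\epsilon)} = (m/\epsilon)^{O_\epsilon(1)}$, at a further cost of $\epsilon$ in approximation error. Mansour's conjecture is then reduced to producing a sparse Fourier approximation for a DNF whose size and width are both bounded by explicit functions of $m$ and $\epsilon$. The stronger version of Conjecture \ref{conj:gmr} gives the analogous reduction with the tighter size bound $(\log(m/\epsilon))^{O(\log(m/\epsilon))}$, which is what one would feed into the stronger form of Mansour.

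The third step would be to convert this structural simplification into Fourier sparsity. One concrete route is to control the Fourier $L_1$ norm of the simplified formula: each width-$w$ term as a $\{0,1\}$-function has $L_1$ norm $1$, and one can combine these bounds (ideally by re-using the quasi-sunflower decomposition underlying \tref{th:mainstruct}, rather than naive inclusion-exclusion) to obtain a total $L_1$ bound polynomial in the remaining number of terms. Feeding such an $L_1$ bound into the Kushilevitz--Mansour thresholding argument then gives a sparse approximation whose sparsity is polynomial in $L_1(f)/\epsilon$, and this plugs directly into the weaker form of Mansour's conjecture.

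The main obstacle is that Conjecture \ref{conj:gmr} is itself open: using the best unconditional bound from \tref{th:mainstruct} instead, the pipeline above yields only $m^{\tilde{O}(\log\log m \cdot \log(1/\epsilon))}$-sparse approximations, matching the qualitative gap we already see in \cref{cor:logn}, rather than the $m^{O_\epsilon(1)}$ that Mansour predicts. Moreover, even granting Conjecture \ref{conj:gmr}, the final $L_1$-to-sparsity step is lossy---generic $L_1$ bounds for DNFs of bounded size and width carry overhead that must be controlled carefully to avoid blowing up the final sparsity, and the stronger form of Mansour appears to require a Fourier-analytic bound that goes beyond what $L_1$ truncation alone can deliver. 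Thus the barrier is primarily combinatorial: either one proves sparsification for narrow DNFs with a bound of the form $c(\epsilon)^w$, or one discovers a direct Fourier concentration result for size-$m$ DNFs that bypasses the width-reduction route entirely.
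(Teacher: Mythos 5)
You have correctly identified that this statement is a \emph{conjecture}, not a theorem: the paper does not prove Mansour's Conjecture for size, and there is no ``paper's own proof'' to compare against. The paper merely cites \cite{Mansour:94}, records the weaker and stronger forms (attributing the weaker form to \cite{Ryan-list}), and in the surrounding discussion uses it only as a hypothesis --- e.g.\ the subsequent Lemma shows that Conjecture \ref{conj:gmr} together with the size form implies the width form, and the substitution $w \leq \log(m/\eps)$ gives the reverse implication. So the honest answer here is simply ``open problem,'' which you say.

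Your proposed pipeline is a reasonable speculative sketch, but it is worth being precise about where it falls short of a proof, since a reader might mistake it for more than a heuristic. First, steps one and two (width truncation, then Conjecture \ref{conj:gmr}) reduce a size-$m$ DNF to a DNF that is small in \emph{both} size and width, but they do not by themselves say anything about Fourier sparsity --- that is, they reduce Mansour-for-size to Mansour-for-(size and width simultaneously), which is not obviously easier and is exactly the content of Conjecture \ref{conj:mansour-width}. Second, your step three is where the entire difficulty lives, and the route via an $L_1$ bound plus Kushilevitz--Mansour thresholding is known to be too lossy: a size-$m'$ DNF has $L_1$ norm that can be as large as $m'$, and $(L_1/\eps)^2$ sparsity gives $\poly(m'/\eps)$, which for the stronger version needs $m'$ itself to be $m^{O(\log(1/\eps))}$; but even the stronger version of Conjecture \ref{conj:gmr} only bounds the number of \emph{terms} of a width-$w$ approximator, not its $L_1$ norm, so the naive combination does not close. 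Mansour's own bound $w^{O(w \log(1/\eps))}$ from \cite{Mansour:95} already uses more than $L_1$ truncation, and the paper explicitly notes that plugging \tref{th:mainstruct} (rather than Conjecture \ref{conj:gmr}) into this pipeline does not improve on Mansour's bound. In short: your account of the obstacles is accurate, but the sketch should not be read as a conditional proof --- even granting Conjecture \ref{conj:gmr}, a genuinely new Fourier concentration argument is needed.
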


Mansour originally stated the stronger version of the conjecture, the
weaker version appears in \cite{Ryan-list}. The following analogue
of Mansour's conjecture for small width suggests itself. To our knowledge,
this conjecture has not appeared explicitly in the literature.
  
\begin{Conj}(Mansour's conjecture for width)
\label{conj:mansour-width}

\noindent{\bf (Weaker version)} There exists a function $c(\eps)$ such that every
\dnf \ of width $w$ has an $2^{c(\eps)w}$-sparse $\eps$-approximation. 

\noindent{\bf (Stronger version)} Every \dnf \ of width $w$ has an
$2^{O(w\log(1/\eps))}$-sparse $\eps$-approximation. 

\end{Conj}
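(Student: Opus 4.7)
The plan is to reduce Mansour's conjecture for width to Mansour's conjecture for size (Conjecture \ref{conj:mansour}) via our sparsification result. Given a width-$w$ \dnf\ $f$ and error $\eps$, \tref{th:mainstruct} produces a width-$w$ \dnf\ $g$ of size $M = (w\log(2/\eps))^{O(w)}$ with $\pr_x[f(x)\neq g(x)] \leq \eps/4$. Viewing $f,g$ as $\{0,1\}$-valued functions we have $\ex[(f-g)^2] = \pr_x[f\neq g] \leq \eps/4$. If we can find a $t$-sparse polynomial $p$ with $\ex[(g-p)^2] \leq \eps/4$, then the $L_2$ triangle inequality gives $\ex[(f-p)^2] \leq \eps$, so $p$ is a sparse Fourier $\eps$-approximation of $f$ with the same sparsity $t$ as inherited from $g$.

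The second step is to invoke a Mansour-type bound for $g$. Under the weak form of Conjecture \ref{conj:mansour}, $g$ admits an $M^{c(\eps)}$-sparse $(\eps/4)$-approximation, so the overall sparsity is $M^{c(\eps)} = 2^{O(c(\eps)\, w\log(w\log(1/\eps)))}$. For constant $\eps$ this is $2^{\tilde{O}(w)}$, matching the weaker Mansour-for-width conjecture up to a $\log w$ factor in the exponent. Using Mansour's proved unconditional bound $t \leq M^{O(\log(1/\eps)/\eps)}$ in place of the conjecture, the same reduction yields an unconditional sparse Fourier approximation; in particular, width-$O(\log n)$ \dnf s acquire quasi-polynomially sparse Fourier approximations for any constant $\eps$, recovering an analogue of \cref{cor:logn} in the Fourier domain.

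The main obstacle is the $\log w$ (equivalently $\log\log(1/\eps)$) slack in the exponent, inherited from the $(w\log(1/\eps))^{O(w)}$ bound of \tref{th:mainstruct}. The stronger form of Conjecture \ref{conj:gmr}, giving $M = \log(1/\eps)^{O(w)}$ terms, would shrink but not eliminate this slack. To close the gap and prove the weaker Mansour-for-width conjecture exactly, I would try a hybrid argument: first apply \tref{th:mainstruct} to reduce to a size-$M$ width-$w$ \dnf\ $g$, then exploit the width structure of $g$ directly via H{\aa}stad-style $p$-random restrictions with $p = \Theta(1/w)$, under which $g$ collapses to a decision tree of depth $O(\log(1/\eps))$ with probability $1-\eps$. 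Depth-$d$ decision trees are $2^{O(d)}$-sparse in Fourier, so the strategy is to reconstruct a global sparse Fourier approximation for $g$ by averaging sparse approximations across restriction classes, tracking contributions carefully so that no extra $\log w$ factor reenters the bookkeeping. The stronger form of the conjecture, aiming for exactly $2^{O(w\log(1/\eps))}$ sparsity, looks substantially harder and will likely require a direct Fourier-analytic attack on width-$w$ \dnf s that bypasses both generic Mansour-for-size and Conjecture \ref{conj:gmr}, perhaps by exploiting the quasi-sunflower structure already used in the proof of \tref{th:mainstruct} to construct the sparse polynomial directly rather than passing through an intermediate \dnf.
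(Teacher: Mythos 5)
There is a fundamental mismatch here: the statement you are trying to establish is a \emph{conjecture}. The paper does not prove Conjecture~\ref{conj:mansour-width} (it explicitly introduces it as a new open question), and your proposal does not prove it either. Your first two paragraphs are essentially the reduction the paper itself records in the Lemma following the conjectures: Mansour's conjecture for size (Conjecture~\ref{conj:mansour}) plus a sparsification step yields a sparse Fourier approximation for width-$w$ \dnf s. But that argument is (a) conditional on the unproven Mansour-for-size conjecture, and (b) even conditionally it does not reach the stated bound: with \tref{th:mainstruct} giving size $M=(w\log(1/\eps))^{O(w)}$, the resulting sparsity is $2^{O(c(\eps)\,w\log(w\log(1/\eps)))}$, which carries exactly the $\log w$ (equivalently $\log\log(1/\eps)$) slack you acknowledge. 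This slack is precisely why the paper needs the stronger Conjecture~\ref{conj:gmr} rather than \tref{th:mainstruct} in its conditional Lemma, and even then only obtains a $2^{O(w\log(1/\eps)\log\log(1/\eps))}$-type bound, not the conjectured $2^{O(w\log(1/\eps))}$. Your claimed unconditional byproduct is also not a gain: the paper notes explicitly that substituting \tref{th:mainstruct} into this reduction does not improve on Mansour's known $w^{O(w\log(1/\eps))}$ bound \cite{Mansour:95}, which already gives quasipolynomially sparse approximations for width-$O(\log n)$ \dnf s.

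The third paragraph is a research plan, not a proof. In particular, the step ``reconstruct a global sparse Fourier approximation by averaging sparse approximations across restriction classes'' is exactly where the difficulty lies: the Fourier supports of the restricted approximators differ across restrictions, and combining them without blowing up the sparsity (or reintroducing $w^{O(w)}$-type factors) is the content of Mansour's original argument, which yields $w^{O(w\log(1/\eps))}$ and not $2^{O(w\log(1/\eps))}$. Until that bookkeeping is actually carried out with a genuinely better bound, neither the weaker nor the stronger version of Conjecture~\ref{conj:mansour-width} is established; the honest conclusion is that, like the paper, you have at best conditional reductions and a plausible line of attack.
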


The best known bounds for both size and width are due to Mansour, who
shows that every \dnf\ of width $w$ has an
$w^{O(w\log(1/\eps))}$-sparse $\eps$-approximation and then derives
a bound for size using $w = O(\log(m/\eps)$ \cite{Mansour:95}.

We feel that this width analogue of Mansour's Conjecture is natural;
indeed most results on \dnf s proceed by first tackling the width-$w$
case, and then translating it to \dnf s of size $m$ using $w \leq
\log(m/\eps)$ \cite{Hastad86,LMN:93,Mansour:95}.  
This substitution also shows that 
\begin{itemize}
\item The weaker version of Mansour's
Conjecture for width implies the weaker version of Mansour's Conjecture
for size. 
\item The stronger version of Mansour's Conjecture
for width implies the stronger version of Conjecture for size, as long
as $\eps \geq 1/\poly(m)$. 
\end{itemize}

Conjecture \ref{conj:gmr} implies the reverse equivalence.

\begin{Lem}

\begin{itemize}
\item Aussme the stronger version of Conjecture \ref{conj:gmr}. Then the stronger version of Mansour's Conjecture for size 
implies that every width $w$ \dnf\ formula has a
$2^{O(w\log(1/\eps)\log\log(1/\eps))}$-sparse
$\eps$-approximation.
\item Assume the weaker version of Conjecture \ref{conj:gmr}. Then the
  weaker version of Mansour's Conjecture for size implies the
  weaker version of Mansour's Conjecture for width. 
\end{itemize}
\end{Lem}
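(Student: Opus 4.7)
The plan is to chain the two assumed implications: first use the width-to-size sparsification given by Conjecture~\ref{conj:gmr} to replace the width-$w$ \dnf\ by a small \dnf, and then apply Mansour's conjecture for size to the resulting small \dnf\ to obtain a sparse Fourier approximation. The only subtlety is that Conjecture~\ref{conj:gmr} gives a pointwise/probability error while Mansour's conjecture is phrased in $L_2$, so we need a short triangle-inequality step to combine the two errors.

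Concretely, fix a width-$w$ \dnf\ $f$ and a target error $\eps$. First, I would apply Conjecture~\ref{conj:gmr} with error parameter $\eps/4$ to obtain a width-$w$ \dnf\ $g$ with
\[
s(g) \leq m(w,\eps/4) \text{ and } \Pr_x[f(x)\neq g(x)] \leq \eps/4.
\]
Since $f$ and $g$ are $\zo$-valued, this gives $\E_x[(f(x)-g(x))^2] \leq \eps/4$. Next, I would apply Mansour's conjecture for size, with error $\eps/4$, to the \dnf\ $g$, obtaining a polynomial $p$ with at most $t$ nonzero Fourier coefficients such that $\E_x[(g(x)-p(x))^2]\leq \eps/4$, where $t = s(g)^{c(\eps/4)}$ in the weak case and $t = s(g)^{O(\log(1/\eps))}$ in the strong case. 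Finally, by the triangle inequality in $L_2(\zo^n)$,
\[
\|f-p\|_2 \;\leq\; \|f-g\|_2 + \|g-p\|_2 \;\leq\; \sqrt{\eps/4}+\sqrt{\eps/4} \;=\; \sqrt{\eps},
\]
so $p$ is a $t$-sparse $\eps$-approximation of $f$.

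It remains to plug in the sizes. For the first (stronger) bullet, Conjecture~\ref{conj:gmr} gives $s(g)\leq \log(4/\eps)^{cw}$, and Mansour's stronger conjecture for size then yields sparsity at most
\[
\bigl(\log(4/\eps)^{cw}\bigr)^{O(\log(4/\eps))} \;=\; 2^{O(w\,\log(1/\eps)\,\log\log(1/\eps))},
\]
as claimed. For the second (weaker) bullet, Conjecture~\ref{conj:gmr} gives $s(g)\leq c(\eps/4)^{w}$, and Mansour's weaker conjecture for size yields sparsity at most
\[
\bigl(c(\eps/4)^{w}\bigr)^{c'(\eps/4)} \;=\; 2^{c''(\eps)\,w},
\]
which is the weaker version of Mansour's conjecture for width.

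There is no serious obstacle here: the argument is essentially a two-step composition. The only thing to be careful about is the conversion from pointwise error to $L_2$ error and the accompanying triangle inequality, which forces us to call each assumed statement at error $\eps/4$ rather than $\eps$; this changes only constants and leaves the asymptotics in both bullets intact.
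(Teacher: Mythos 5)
Your proof is correct, and since the paper states this lemma in the Open Problems section without supplying a proof, yours is a complete filling-in of the intended composition argument. The two-step chain (sparsify the width-$w$ formula via Conjecture~\ref{conj:gmr}, then Fourier-sparsify the resulting small \dnf\ via Mansour for size) is exactly what the surrounding text suggests, and you correctly flagged the one genuine technical point: Conjecture~\ref{conj:gmr} bounds Hamming distance while Mansour's conjecture is an $L_2$ statement, so you need the identity $\E[(f-g)^2]=\Pr[f\neq g]$ for $\zo$-valued functions followed by a triangle inequality in $L_2$, at the cost of calling each hypothesis at error $\eps/4$. Your bookkeeping of the sparsity in both bullets is right: in the strong case $\log t = O(\log(1/\eps))\cdot\log\bigl(\log(1/\eps)^{cw}\bigr)=O\bigl(w\log(1/\eps)\log\log(1/\eps)\bigr)$, and in the weak case $t\leq c(\eps/4)^{w\cdot c'(\eps/4)}=2^{O(c''(\eps)w)}$, matching the statement.
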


Note that if we replace Conjecture \ref{conj:gmr} with
\tref{th:mainstruct}, this does not improve on the bound from
\cite{Mansour:95}. So in this context, the improved
dependence on $w$ in Conjecture \ref{conj:gmr} is crucial.

\paragraph{Sparsification using the Greedy Algorithm.}
 
A natural approach to sparsifying a \dnf \ formula $f$ is to view it as a
set-covering problem, where we wish to cover $f^{-1}(1) \subseteq
\zo^n$ by width $w$ terms. One could use the greedy algorithm in the
hope that it constructs a sparse cover. It woule be interesting
to analyze its performance. In this direction, Jan Vondrak has pointed
out that one can use the analysis of greedy set cover to argue that if
there is a lower sandwiching \dnf \ formula of size $m_\ell(w,\eps)$
which is $\eps$-close to $f$, then greedy returns a $2\eps$
approximation of size at most $m_\ell(w,\eps)\ln(1/\eps)$ \cite{Vondrak}.  

\paragraph{Deterministic \dnf \ counting.}
The question of finding a deterministic polynomial time 
algorithm for approximate \dnf \ counting remains open. One approach towards this
goal would be to construct pseudorandom generators for \dnf s formulas
with seed-length $O(\log(n) + \log(m) + \log(1/\eps))$. Such
constructions are currently not known even for read-once \dnf s. A
recent result by the Trevisan, Vadhan and the authors gets a
seed-length of $\tilde{O}(\log(n) + \log(1/\eps))$ in the
read-once case \cite{GMRTV:12}.

\section*{Acknowledgements}
We thank Adam Klivans, Ryan O'Donnell, Rocco Servedio, Avi Wigderson and David
Zuckerman for valuable discussions. We thank Rocco for drawing our
attention to Friedgut's theorem in this context. 


\bibliographystyle{alpha}
\bibliography{references} 

\appendix
\section{Proofs from Section \ref{sec:sparsify}}

We first show that \lref{lm:sunflowercnf} is equivalent to Lemma
\ref{lm:sunflower} below from \cite{Rossman10}.
\begin{definition}[\cite{Rossman10}]
Let $\mc{F}$ be a family of sets over a universe $U$ and let $Y =
\cap_{T \in \mc{F}} S$. Call $\mc{F}$ a $\gamma$-sunflower if for a
random set $W \subseteq U$, with each element of $U$ present in $W$
independently with probability $1/2$, 
\begin{align*}
\pr[\exists T \in \mc{F}, (T\setminus Y) \cap W = \emptyset] \geq 1 -
\gamma.
\end{align*}
\end{definition}
\begin{lemma}[\cite{Rossman10}]\label{lm:sunflower}
Let $\mc{F}$ be a family of sets over a universe $U$ each of size at most $w$. If $|\mc{F}| > w! \cdot (2.47\log(1/\gamma))^w$, then $\mc{F}$ contains a $\gamma$-sunflower.
\end{lemma}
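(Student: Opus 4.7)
The plan is to prove the quasi-sunflower lemma by induction on $w$, following the classical Erd\H{o}s--Rado template but replacing strict disjointness by the probabilistic quasi-disjointness criterion. Throughout, set $c := 2.47\log(1/\gamma)$; after a routine reduction one may assume every set in $\mc{F}$ has size exactly $w$ (if enough sets are strictly smaller, the inductive hypothesis applies directly to them).

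The base case $w = 1$ is immediate: any $k \geq \log_2(1/\gamma)$ distinct singletons in $\mc{F}$ form a subfamily with empty core, and $\pr_W[\text{all singletons lie in } W] = 2^{-k} \leq \gamma$, verifying the quasi-sunflower condition. For the inductive step with $w \geq 2$, define $\deg(x) := |\{T \in \mc{F} : x \in T\}|$. In the \emph{high-degree case}, if some $x$ satisfies $\deg(x) \geq |\mc{F}|/c$, form the contracted family $\mc{F}_x := \{T \setminus \{x\} : T \in \mc{F},\, x \in T\}$, consisting of sets of size $w-1$ and satisfying $|\mc{F}_x| > (w-1)!\, c^{w-1}$. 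The inductive hypothesis furnishes a $\gamma$-sunflower $\mc{G}' \subseteq \mc{F}_x$; lifting to $\mc{G} := \{S \cup \{x\} : S \in \mc{G}'\} \subseteq \mc{F}$ enlarges the core from $Y'$ to $Y' \cup \{x\}$ while leaving each petal $T \setminus Y = S \setminus Y'$ unchanged, so the $\gamma$-sunflower property transfers verbatim.

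In the \emph{low-degree case}, every $x$ satisfies $\deg(x) < |\mc{F}|/c$, so no element lies in all of $\mc{F}$ and the core is empty. I claim $\mc{F}$ itself is a $\gamma$-sunflower. Passing to the complement $V := U \setminus W$, which has the same distribution, it suffices to show $\pr_V[\exists T \in \mc{F} : T \subseteq V] \geq 1 - \gamma$. The events $A_T := \{T \subseteq V\}$ have $\pr[A_T] = 2^{-w}$ and are dependent only when $T \cap T' \neq \emptyset$. Janson's inequality yields $\pr[\bigcap_T \overline{A_T}] \leq \exp(-\mu + \Delta/2)$, where $\mu := |\mc{F}|\, 2^{-w}$ and $\Delta := \sum_{T \neq T',\, T \cap T' \neq \emptyset} 2^{-|T \cup T'|}$. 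Stratifying $\Delta$ by the intersection size $|T \cap T'| = i$ and bounding the number of intersecting pairs at each level via the low-degree hypothesis $\deg(x) < |\mc{F}|/c$ keeps $\Delta$ small relative to $\mu$; the resulting calculation gives $\mu - \Delta/2 \geq \log(1/\gamma)$.

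The hard part will be the low-degree case. A crude application of Janson's inequality loses factors exponential in $w$, so the estimate on $\Delta$ must be sharpened by the layered degree-based bookkeeping above, using the hypothesis $\deg(x) < |\mc{F}|/c$ at every intersection size. The precise constant $2.47$ arises from optimizing the balance between the high-degree recursion, which costs a factor of $c$ per level, and the low-degree Janson estimate, which forces $c$ to be at least a small multiple of $\log(1/\gamma)$.
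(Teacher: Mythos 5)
The paper does not actually prove this lemma --- it imports it verbatim from Rossman \cite{Rossman10} (the appendix only verifies the translation between the set-system and \dnf\ formulations) --- so your proposal has to be judged against Rossman's argument. Your base case and your high-degree case are fine: lifting a quasi-sunflower from the link $\mc{F}_x$ by adding $x$ to the core leaves the petals, and hence the probability condition, untouched. The gap is in the low-degree case, and it is not merely a hard calculation you have deferred: the claim you plan to prove there is false. Take $c = 2.47\log(1/\gamma)$ and build $\mc{F}$ as follows: partition it into $c$ blocks of size $g \approx |\mc{F}|/c$ on pairwise disjoint parts of the universe, where every set in block $j$ contains a fixed $(w-1)$-set $S_j$ plus one of $g$ distinct extra elements. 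Every element has degree at most $|\mc{F}|/c$ (tweak to be strictly below the threshold), so you are in your low-degree case and the core of $\mc{F}$ is empty; but $\pr_W[\exists T \in \mc{F},\, T \cap W = \emptyset] \approx 1 - (1 - 2^{-(w-1)})^{c} \approx c\,2^{-(w-1)}$, which is tiny for moderately large $w$, so $\mc{F}$ itself is very far from being a $\gamma$-quasi-sunflower. Consistently, Janson's inequality cannot give $\mu - \Delta/2 \geq \log(1/\gamma)$ here: the pairs sharing a whole $(w-1)$-set contribute $\Delta \approx (|\mc{F}|^2/c)\,2^{-(w+1)}$, i.e.\ $\Delta/\mu \approx |\mc{F}|/(2c) \gg 1$. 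More generally, the hypothesis $\deg(x) < |\mc{F}|/c$ only controls first-order overlaps; it gives $\Delta \lesssim \mu \cdot w|\mc{F}|/c$ and nothing better, because it says nothing about codegrees of pairs or larger sets, and it is exactly these higher-order overlaps that both blow up $\Delta$ and genuinely destroy the quasi-sunflower property of the whole family.

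Note that in the counterexample a $\gamma$-quasi-sunflower does exist (each block is one, with core $S_j$ and $g \geq \log(1/\gamma)$ singleton petals), but it is a proper subfamily that your recursion never finds, since it is reached by passing to the link of a $(w-1)$-element set, not of a single popular element. This points to the repair: the dichotomy cannot be ``some element has degree $\geq |\mc{F}|/c$ versus all degrees small.'' A correct argument must either branch on whether the probability condition $\pr_W[\exists T,\ (T\setminus Y)\cap W = \emptyset] \geq 1-\gamma$ already holds for $\mc{F}$ (and otherwise extract a large link), or control the links $\mc{F}_S = \{T \setminus S : S \subseteq T \in \mc{F}\}$ for sets $S$ of \emph{every} size $1 \leq |S| \leq w$ --- a spreadness-type condition of the form $|\mc{F}_S| \lesssim |\mc{F}|/(c^{|S|}\cdot |S|!\,\cdots)$ --- recursing on whichever link is too large. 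With only single-element bookkeeping, as in your proposal, the low-degree step cannot be completed, so the proof as structured does not go through.
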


\begin{proof}[Proof of \lref{lm:sunflowercnf}]
As $f$ is unate, without loss of generality suppose that $f$ is
monotone. Let $U = [n]$ and $\mc{F} = \{T_i: 1\leq i \leq m\}$. By the
above lemma, there exists a $\gamma$-sunflower $\mc{F}' = \{T_{i_1},
\ldots, T_{i_s}\}$ for 
$$\gamma = \mu^{(m/w!)^{1/w}} \ \text{where} \ \mu = \frac{1}{2^{1/2.47}}.$$   

We claim that the lemma holds for the terms in $\mc{F}'$ and $Y = \cap_{j=1}^s T_{i_j}$. Let $x \in_u \zo^n$ and let $W = \{i: x_i = 0\}$. Then, each element of $U$ is present in $W$ independently with probability $1/2$. Therefore, as $\mc{F}'$ is a $\gamma$-sunflower 
\begin{align*}
  \pr_x[ \vee_{j=1}^c(T_{i_j} \setminus Y)  =1 ] =\pr_W[\exists T \in \mc{F}',\; (T \setminus Y) \cap W = \emptyset\,] \geq 1 - \gamma. 
\end{align*}
\end{proof}

We next show \newref[Fact]{fct:tedious}.
\begin{proof}[Proof of \fcref{fct:tedious}]
From the definition of $\gamma(\;)$ from \eref{eq:gamma}, it is easy
to check that $\gamma(j/2^w) \geq j^{1/w}/10 w$. We shall also use the
following inequality that follows from partial integration: for any
$\theta \geq k \geq  0$, 
\begin{equation}
  \label{eq:appendix1}
  \int_\theta^\infty x^k e^{-x} dx = \sum_{i = 0}^k \binom{k}{i} \cdot (i!) \cdot \left(\theta^{k-i} e^{-\theta}\right) \leq (k+1) \theta ^k \cdot e^{-\theta}.
\end{equation}
Therefore, for $\theta = W^{1/w}/10w$, 
\begin{align*}
  \sum_{j=W + 1}^{\infty} e^{-\gamma(j/2^w)} &\leq \sum_{j = W + 1}^\infty e^{-\left(j^{1/w}/10w\right)}\\
&\leq \int_{W}^\infty e^{-\left(x^{1/w}/10w\right)} \,dx\\
&= 10 w^2 \cdot (10 w)^{w-1} \cdot \int_{\theta}^\infty y^{w-1} \cdot e^{-y} dy \;\;\;\;\text{ (substituting $y \equiv x^{1/w}/10w$)}\\
&\leq 10 w^2 \cdot (10 w)^{w-1} \cdot w \cdot \theta^{w-1} e^{-\theta}  \;\;\;\;\;\;\;\text{ (by \eref{eq:appendix1})}\\
&\leq 10 w^3 \cdot W \cdot \exp(-10 w^2 \log(1/\epsilon))\\
&= \exp\left(\log(10 w^3) + w \log 2 + 3w \log w +  w\log(50 \log(1/\epsilon)) - 10 w^2 \log(1/\epsilon)\right)\\
&<\exp(-\log(1/\epsilon)) = \epsilon
\end{align*}
where the last inequality can be checked numerically for $w \geq 1$
and $\epsilon \leq 1/4$.  
\end{proof}

\section{Proofs from Section \ref{sec:dnf-count}}\label{app:LV}
In this section, we prove the two Lemmas from \cite{LubyV96} that are
used in our analysis. We restate them here for the reader's convenience.

\begin{Lem}(\lref{lem:lv1} Restated)
\label{lem:lv1-restated}
We have
\begin{align*}
\forall h \in \hh, \ \ \bias(f_h) \leq \bias(f),\\
\ex_{h \in \hh}[\bias(f_h)] \geq \bias(f_h)-  \epsilon.
\end{align*}
\end{Lem}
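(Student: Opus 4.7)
The first inequality, $\bias(f_h) \leq \bias(f)$ for every $h \in \hh$, is immediate: $f_h$ is obtained from $f$ by deleting terms, so $f_h(x) \leq f(x)$ pointwise, which gives the biases comparison.

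For the second inequality, I would express the gap $\bias(f) - \bias(f_h)$ as $\pr_x[f(x) = 1 \wedge f_h(x) = 0]$: the two can differ only when some term of $f$ is satisfied by $x$ yet every such satisfied term is bad for $h$ and hence dropped. The right way to bound the expectation of this quantity over $h$ is a Karp--Luby style charging argument. For each $x$ with $f(x) = 1$, let $i^*(x)$ denote the index of the lexicographically first satisfied term of $f$. If $f_h(x) = 0$, then every satisfied term---in particular $T_{i^*(x)}$---must be bad for $h$. Therefore
\begin{align*}
\bias(f) - \ex_h[\bias(f_h)] \ = \ \pr_{h,x}[f(x) = 1 \wedge f_h(x) = 0]
\ \leq \ \pr_{h,x}[f(x) = 1 \wedge T_{i^*(x)} \text{ is bad for } h].
\end{align*}

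The crucial observation is that the two sources of randomness decouple cleanly: for fixed $x$, $i^*(x)$ is a deterministic function of $x$, while whether $T_{i^*(x)}$ is bad for $h$ depends only on $h$. Every term of $f$ has size at most $w = kt$, so \lref{lm:hashing} (applied to each of the $t$ buckets and followed by a union bound) gives $\pr_h[T_{i^*(x)} \text{ is bad}] \leq t \cdot 2^{-k}$ uniformly in $x$. Substituting $k = \log(w/\eps)$ and $t = w/k$, this is at most $(w/k) \cdot (\eps/w) = \eps/\log(w/\eps) \leq \eps$. Conditioning on $x$ first and then bounding the inner probability by this uniform bound yields
\[\bias(f) - \ex_h[\bias(f_h)] \ \leq \ \bias(f) \cdot \eps \ \leq \ \eps,\]
as desired.

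There is no serious obstacle here, but one must be careful about one point: a naive union bound of the form $\sum_i \pr_x[T_i = 1] \cdot \pr_h[T_i \text{ bad}]$ would introduce the expected number of satisfied terms of $f$, which need not be at most $1$, and so would lose control over the sum. The charging to $T_{i^*(x)}$ avoids this by guaranteeing that each $x$ contributes through exactly one term, after which the $x$-randomness only contributes the factor $\pr_x[f(x) = 1] = \bias(f) \leq 1$.
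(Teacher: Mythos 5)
Your proof is correct and is essentially the paper's argument: for each $x\in f^{-1}(1)$, fix one satisfied term and bound the probability it is bad by $t\cdot 2^{-k}\leq \eps$ via \lref{lm:hashing}, then sum (the paper phrases this as $\ex_h[f_h(x)]\geq 1-\eps$ rather than as a ``charging'' to the lexicographically first term, but the two are the same computation; in particular the choice of lexicographically first is inessential---any fixed choice of satisfied term works). Your closing caveat about the naive union bound over all terms is a sensible observation but not a pitfall the paper needed to navigate, since it bounds $\ex_h[f_h(x)]$ pointwise in $x$ from the start.
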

\begin{proof}
As $f_h$ is obtained by dropping terms in $f$, we have $f_h(x) \leq
f(x) \ \forall x \in \zo^n$, so $\bias(f_h) \leq \bias(f)$. 
This also implies that
\begin{align}
\label{eq:bias-fh}
\bias(f_h) = \frac{1}{2^n}\left(\sum_{x \in f^{-1}(1)}f_h(x)\right).
\end{align}
Taking expectation over $h$, we have
\begin{align}
\label{eq:bias-fh}
\E_{h \in \hh}[\bias(f_h)] = \frac{1}{2^n}\left(\sum_{x \in
  f^{-1}(1)}\E_{h \in \hh}[f_h(x)]\right).
\end{align}

Fix an $x \in f^{-1}(1)$ and a term $T_i$ of $f$ that it satisfies. 
If $T_i$ is included in $f_h$, which happens unless $T_i$ is bad for $h$,  then
$f_h(x) =1$. By \lref{lm:hashing} and a union bound,
\begin{align*}
\pr_{h \in \hh}[T_i \ \text{is bad for } h] \leq t\cdot 2^{-k}  \leq
\frac{\eps}{w}\cdot \frac{w}{k} \leq \eps.
\end{align*}
Hence we have
\begin{align*}
\E_{h \in \hh}[f_h(x)] \geq 1 -\eps.
\end{align*}
Plugging this into Equation \eqref{eq:bias-fh} gives
\begin{align*}
\E_{h \in \hh}[\bias(f_h)] \geq \frac{1}{2^n}\left(\sum_{x \in
  f^{-1}(1)}(1 - \eps)\right) = (1
-\eps)\frac{|f^{-1}(1)|}{2^n} = (1 -\eps)\bias(f).
\end{align*}
\end{proof}

\begin{Lem}(\lref{lem:lv2} restated)
\label{lem:lv2-restated}
We have 
\begin{align*}
|p_h - \bias(f_h)| \leq \eps.
\end{align*}
\end{Lem}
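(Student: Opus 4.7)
The plan is a standard hybrid argument along the $t$ buckets $B_1, \ldots, B_t$, swapping uniform bits for PRG output one bucket at a time. For $0 \leq j \leq t$, define a distribution $D_j$ on $\zo^n$ by drawing $z_1, \ldots, z_j \in \zo^r$ independently and uniformly, setting the coordinates in $B_i$ (for $i \leq j$) according to $G(z_i)$, and setting the coordinates in $B_i$ (for $i > j$) to independent uniform bits. Then $D_0$ is uniform on $\zo^n$, so $\Pr_{x \sim D_0}[f_h(x) = 1] = \bias(f_h)$, while $D_t$ is exactly the output distribution of $G_h$ as defined in \eqref{eq:mz}, so $\Pr_{x \sim D_t}[f_h(x) = 1] = p_h$. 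I will bound $|\Pr_{x \sim D_{j-1}}[f_h(x) = 1] - \Pr_{x \sim D_j}[f_h(x) = 1]| \leq \delta$ for each $j \in [t]$ and telescope.

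For the single-step bound, fix $j$ and condition on the bits in all buckets $B_i$ with $i \neq j$ (which are sampled identically under $D_{j-1}$ and $D_j$). Under this conditioning, the two hybrids differ only in how the bits of $B_j$ are produced: uniform under $D_{j-1}$ versus $G(z_j)$ restricted to $B_j$ under $D_j$. Because $f_h$ is obtained from $f$ by removing every term that is bad for $h$, every surviving term $T_i$ satisfies $|T_i \cap B_j| \leq w'$. After fixing the literals outside $B_j$, each such term either becomes $0$ (some outside literal is unsatisfied), forces $f_h$ to be identically $1$ (all literals of $T_i$ lie outside $B_j$ and are satisfied), or reduces to the conjunction of its at most $w'$ literals in $B_j$. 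Hence the residual function on $B_j$ is a constant or a width-$w'$ \dnf, and in particular can be viewed as a width-$w'$ \dnf\ on all $n$ coordinates.

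By \tref{th:foolswidth}, the generator $G$ is an explicit $\delta$-fooling generator for $\dnf(w', n)$, so the bias of the residual function under uniform bits on $B_j$ and under $G(z_j)$ restricted to $B_j$ differ by at most $\delta$. Averaging over the fixing of the remaining buckets yields $|\Pr_{x \sim D_{j-1}}[f_h(x) = 1] - \Pr_{x \sim D_j}[f_h(x) = 1]| \leq \delta$, and summing the $t$ differences gives $|p_h - \bias(f_h)| \leq t\delta = \eps$ by the choice $\delta = \eps/t$. The only real subtlety is ensuring that the residual function on $B_j$ actually is a width-$w'$ \dnf; this is precisely why we deleted the bad terms when forming $f_h$, so that no surviving term has more than $w'$ variables in any single bucket, matching the width parameter for which $G$ fools.
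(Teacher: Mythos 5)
Your proposal is correct and follows essentially the same route as the paper: the identical $t$-step hybrid over the buckets, fixing the variables outside $B_j$, observing that the residual formula on $B_j$ has width at most $w'$ precisely because the bad terms were deleted when forming $f_h$, invoking the $\delta$-fooling guarantee of $G$ for $\dnf(w',n)$, and summing $t\delta = \eps$. The only difference is that you spell out the case analysis (residual term becomes $0$, becomes $1$, or is a width-$\leq w'$ conjunction) a bit more explicitly than the paper does.
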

\begin{proof}
Let $\calD_0$ be the uniform distribution over $\zo^n$. For $j \in
[t]$, let $\calD_j$ be the distribution obtained from $\calD_{j-1}$ by
replacing the uniform distirbution on variables in bucket $B_j$ with
an independent copy of output of the generator $G$. Thus $\calD_t$ is
the output distribution of $G_h$.

We claim that for $j \in [t]$,
\begin{align}
\label{eq:hybrid-step}
\left| \pr_{x \in \calD_{j-1}}[f_h(x) = 1] - \pr_{x \in \calD_j}[f_h(x)
  = 1] \right| \leq \delta.
\end{align}
Since $\calD_{j-1}$ and $\calD_j$ differ only on the distribution over
bucket $B_j$, we first sample assignments for the other buckets.
The resulting formula on the variables in $B_j$ is a \dnf\ with width
at most $w'$.  Hence it is $\delta$-fooled by $G$, which gives
Equation \eqref{eq:hybrid-step}. 

We now have
\begin{align*}
|\bias(f_h) - p_h| & = \left|\E_{x \in \calD_0}[f_h(x)] - \E_{x \in
  \calD_t}[f_h(x)]\right| \\
& \leq \sum_{j=1}^t \left|\E_{x \in \calD_{j-1}}[f_h(x)] - \E_{x \in
  \calD_j}[f_h(x)]\right| \\
& \leq t\delta\\ 
& \leq \eps.
\end{align*}
\end{proof}


\end{document}